\renewcommand{\emptyset}{\varnothing}
\newcommand{\E}{\mathbb{E}}
\newcommand{\N}{\mathbb{N}}
\newcommand{\Su}{S^{\uparrow}}
\newcommand{\Sd}{S^{\downarrow}}
\newcommand{\cD}{\mathcal{D}}
\newcommand{\ind}{\mathbf{1}}
\newcommand{\Alg}{\mathsf{Alg}}
\newcommand{\Sel}{\textsc{Sel}}
\newcommand{\LSort}{\textsc{Quant}}
\newcommand{\argmax}{\mathrm{argmax}}
\newcommand{\cR}{\mathcal{R}}
\newcommand{\PP}{\mathbb{P}}
\newcommand{\EE}{\mathbb{E}}
\newcommand{\Xc}{\mathcal{X}}
\newcommand{\Ac}{\mathcal{A}}
\newcommand{\Actil}{\widetilde{\mathcal{A}}}
\newcommand{\Ec}{\mathcal{E}}
\newcommand{\Jc}{\mathcal{J}}
\newcommand{\Hc}{\mathcal{H}}
\newcommand{\Kc}{\mathcal{K}}
\renewcommand{\Pr}{\PP}
\newcommand{\Mod}[1]{\ (\mathrm{mod}\ #1)}
\begin{document}
\title{Complexity of Round-Robin Allocation with Potentially Noisy Queries}
%
%
\author{
Zihan Li\inst{1} \and
Pasin Manurangsi\inst{2} \and
Jonathan Scarlett\inst{1} \and
Warut Suksompong\inst{1}
}
\authorrunning{Z. Li et al.}
%
\institute{National University of Singapore, Singapore
\and
Google Research, Thailand
}
\maketitle              
\begin{abstract}
We study the complexity of a fundamental algorithm for fairly allocating indivisible items, the round-robin algorithm.
For $n$ agents and $m$ items, we show that the algorithm can be implemented in time $O(nm\log(m/n))$ in the worst case.
If the agents' preferences are uniformly random, we establish an improved (expected) running time of $O(nm + m\log m)$.
On the other hand, assuming comparison queries between items, we prove that $\Omega(nm + m\log m)$ queries are necessary to implement the algorithm, even when randomization is allowed.
We also derive bounds in noise models where the answers to queries are incorrect with some probability.
Our proofs involve novel applications of tools from multi-armed bandit, information theory, as well as posets and linear extensions.
\end{abstract}

\section{Introduction}

A famous computer science professor is retiring soon, and she wants to distribute the dozens of books that she has accumulated in her office over the years as a parting gift to her students.
As one may expect, the students have varying preferences over the books, depending on their favorite authors or the branches of computer science that they specialize in.
How can the professor take the students' preferences into account and distribute the books in a fair manner?

The problem of fairly allocating scarce resources has long been studied under the name of \emph{fair division} \cite{BramsTa96,Steinhaus48} and received significant interest in recent years \cite{AmanatidisAzBi23,Aziz20,Moulin19,Suksompong21,Walsh20}.
A simple and well-known procedure for allocating discrete items---such as books, clothes, or household items---is the \emph{round-robin algorithm}.
In this algorithm, the agents take turns picking their most preferred item from the remaining items according to a cyclic agent ordering, until all items have been allocated.
The algorithm is sometimes known as the \emph{draft} mechanism for its use in allocating sports players to teams \cite{CaragiannisKuMo19}.
Despite its simplicity, the allocation chosen by the round-robin algorithm satisfies a surprisingly strong fairness guarantee called \emph{envy-freeness up to one item (EF1)} provided that the agents have additive utilities over the items. 
EF1 means that if an agent envies another agent, this envy can be eliminated by removing some item from the latter agent's bundle.
Furthermore, round-robin can be implemented using only the agents' \emph{ordinal} rankings over individual items.
This stands in stark contrast to other important fair division algorithms such as \emph{envy-cycle elimination}, which requires eliciting the agents' rankings over \emph{sets} of items, or \emph{maximum Nash welfare}, for which the agents' \emph{cardinal} utilities for items must be known.\footnote{For descriptions of these algorithms, we refer to the survey by Amanatidis et al.~\cite{AmanatidisAzBi23}.}

It is clear that the round-robin algorithm can be implemented in time polynomial in the number of agents and items.
However, despite being one of the very few basic algorithms in discrete fair division and used, adapted, and extended numerous times to provide various fairness guarantees, its complexity has not been analyzed in detail to our knowledge.
A moment of thought reveals two sensible approaches for implementing round-robin.\footnote{We remain informal with the query model for this discussion, but will make this precise later.}
Firstly, we can find each agent's ranking over all individual items---this allows us to determine the item picked in every turn, no matter who the picker is.
Since sorting $m$ numbers can be done in $O(m\log m)$ time, this approach takes $O(nm\log m)$ time, where $n$ and $m$ denote the number of agents and items, respectively.
Secondly, we can instead, for the picking agent at each turn, find the agent's most valuable item from the remaining items.
As there are $m$ turns and finding the maximum among $m$ numbers takes $O(m)$ time, the time complexity of this approach is $O(m^2)$.
Hence, the first approach is better when $n$ grows at a slower rate than $m/\log m$, while the second approach is more efficient if $n$ has a higher growth rate.
Are these two approaches already the best possible,
or are there faster ways---that is, faster than $O(m\cdot\min\{n\log m, m\})$ time---to implement the fundamental round-robin algorithm?

\renewcommand{\arraystretch}{1.2}
\begin{table*}[!t]
\centering
\begin{tabular}{|c|c|c|c|c|}
\hline
\textbf{Noise} & \textbf{Preferences} & \textbf{Queries} & \textbf{Upper Bound} & \textbf{Lower Bound} \\ \hline\hline
\multirow{4}*{Noiseless} & \multirow{2}*{Worst-case} & Comparison & $O(nm\log(m/n))$ & $\Omega(nm+m\log m)$ \\ \cline{3-5}
& & Value & $O(nm)$ & $\Omega(nm)$ \\ \cline{2-5}
& \multirow{2}*{Uniformly random} & Comparison & $O(nm + m\log m)$ & $\Omega(nm)$ \\ \cline{3-5}
&  & Value & $O(nm)$ & $\Omega(nm)$ \\ \hline \hline
\multirow{4}*{Noisy} & \multirow{4}*{Worst-case} & \multirow{2}*{Comparison} & \multirow{2}*{$O(nm\log(m/\delta))$} & $\Omega(nm\log(1/\delta)$ \\
& & & & $+ m\log(m/\delta))$ \\ 
\cline{3-5}
&  & \multirow{2}*{Value} & \multirow{2}*{$O(nm\log(m/\delta))$} & $\Omega(nm\log(1/\delta)$ \\
& & & & $+ m\log(m/\delta))$ \\ \cline{1-5}
\end{tabular}

\vspace{3mm}

\caption{Summary of our results on the query complexity of round-robin allocation.
In the noisy setting, $\delta$ is the allowed error probability.
The uniformly random lower bounds hold because the proof of \Cref{thm:noiseless-comparison-lower-nm} (and \Cref{cor:noiseless-value-lower-nm}) uses uniformly random preferences.
Our algorithms also offer running time guarantees that match the comparison query upper bounds.}
\label{table:summary}
\end{table*}

\subsection{Overview of Results}

Fix the agent ordering $1,2,\dots,n$, and assume that each agent has a strict ranking over the $m$ items, where $m\ge n$.
Hence, the round-robin allocation is uniquely defined, and the task of an algorithm is to output this allocation.
Note that we do not require the algorithm to output the item that each agent picks in each turn---this makes our lower bounds stronger, while for the upper bounds, our algorithms can also return this additional information.
We consider two models for eliciting agent preferences.
In the \emph{comparison query} model, an algorithm can find out with each query which item an agent prefers between a pair of items; in the \emph{value query} model, it can find out the utility of an agent for an item.

In \Cref{sec:noiseless}, we consider the \emph{noiseless setting}, where the answers to queries are always accurate.
We present a deterministic algorithm that runs in time $O(nm\log(m/n))$ in both query models.
Since $nm\log(m/n)$ is less than both $nm\log m$ and $m^2$, our algorithm is more efficient than both of the approaches mentioned earlier, for any asymptotic relation between $n$ and $m$.
We then show that when the preferences are uniformly random---meaning that each agent has a uniformly random and independent ranking of the items---we can obtain an improved (expected) running time of $O(nm + m\log m)$.
We complement these results by establishing lower bounds: $\Omega(nm + m\log m)$ and $\Omega(nm)$ queries are necessary in the comparison and value model, respectively.
Since the entire preferences can be elicited via $O(nm)$ queries in the value model, the latter bound is tight.
Our lower bounds hold even against algorithms that may fail with some constant probability; proving the former bound entails leveraging results on posets and linear extensions.

In \Cref{sec:noisy}, we turn our attention to the \emph{noisy setting}.
For comparison queries, we assume that the answer to each query is incorrect with probability $\rho$ independently of all other queries, where $\rho \in (0, 1/2)$ is a given constant.
For value queries, the answer to each query is the true utility with probability $1-\rho$ and an arbitrary value with probability $\rho$, where this value can be chosen adversarially.
We focus on algorithms that are correct with probability at least $1-\delta$ for a given parameter\footnote{We assume that $\delta\in(0, 1/2-c)$ for some constant $c > 0$.} $\delta$.
We show that for both types of queries, there exists a deterministic algorithm running in time $O(nm\log(m/\delta))$.
On the other hand, we provide a lower bound of $\Omega(nm\log(1/\delta) + m\log(m/\delta))$ on the number of queries even for randomized algorithms;\footnote{Note that if $\delta\in O(m^{-d})$ for some constant $d > 0$, the upper and lower bounds asymptotically match.} the proof involves novel applications of tools from multi-armed bandit and information theory and may be of independent interest to researchers in different areas.

A summary of our results can be found in \Cref{table:summary}.

\subsection{Related Work}

The fair division literature typically considers fairness guarantees (such as EF1) that are feasible in different settings along with algorithms that achieve these guarantees \cite{AmanatidisAzBi23,BramsTa96,Moulin19}.
We take a central algorithm in the literature---the round-robin algorithm---and analyze its complexity.
Query complexity in discrete fair division has previously been studied by Plaut and Roughgarden~\cite{PlautRo20} and Oh et al.~\cite{OhPrSu21}.
For instance, Oh et al.~showed that for two agents, it is possible to compute an EF1 allocation using $O(\log m)$ queries.
However, all of these authors assumed a query model such that with each query, an algorithm can find out an agent's utility for any \emph{set} of items.
Since determining this value for a large set can be quite demanding, our query models, in which each query only involves one or two items, are arguably more realistic.

The vast majority of work in fair division assumes that accurate information on the agent preferences is available to the algorithms.
However, in reality there may be noise or uncertainty in these preferences, possibly due to the limited time or high cost for determining the true preferences.
Aziz et al.~\cite{AzizBiHa19} and Li et al.~\cite{LiBeYa22} investigated uncertainty models for item allocation.
For example, in their ``compact indifference model'', each agent reports a ranking over items that may contain ties; the ties indicate that the agent is uncertain about her preferences among the tied items.
Note that this approach to uncertainty is very different from ours; in our approach, the uncertainty does not appear in the initial input to the problem, but instead arises as noise in answers to queries.
Our noisy comparison model has been used for several algorithmic problems including searching, sorting, and selection \cite{BenorHa08,BravermanMoWe16,FeigeRaPe94,GanWiZh22,GuXu23}, and our noisy value model has also been studied for similar problems \cite{CohenaddadMaMa20}, though the use of these models in fair division is new to the best of our knowledge.

Not surprisingly given its wide applicability, the round-robin algorithm has been examined from various angles, including strategic considerations \cite{AzizBoLa17,BouveretLa14}, equilibrium properties \cite{AmanatidisBiLa23}, and monotonicity guarantees \cite{ChakrabortyScSu21}.

\section{Preliminaries}

Let $N = [n]$ be the set of agents and $M = [m]$ be the set of items, where $[k] := \{1,2,\dots,k\}$ for any positive integer~$k$.
Denote by $u_i(j)\ge 0$ the \emph{utility} (also referred to as the \emph{value}) of agent~$i$ for item~$j$, and assume for convenience that $u_i(j)\ne u_i(j')$ for all $i\in N$ and distinct $j,j'\in M$.
For $X, Y\subseteq M$, we write $X\succ_i Y$ if $u_i(x) > u_i(y)$ for all $x\in X$ and $y\in Y$.
The \emph{round-robin allocation} is the allocation that results from letting agents take turns picking items in the order $1,2,\dots,n,1,2,\dots$, where in each turn, the picking agent picks the item for which she has the highest utility.
Since each agent has distinct utilities for all items, this allocation is uniquely defined.
We sometimes refer to each sequence $1,2,\dots,n$ as a \emph{round}, where the last round might not include all agents.
Assume without loss of generality that $m\ge n$ (otherwise, we may simply ignore the agents who do not receive any item) and $n\ge 2$.

We consider two models of how an algorithm can discover information about agents' utilities.
In the \emph{comparison query} model, an algorithm can specify an agent $i\in N$ and a pair of distinct items $j,j'\in M$, and find out whether agent~$i$ prefers item $j$ to $j'$.
In the \emph{value query} model, an algorithm can specify $i\in N$ and $j\in M$, and find out the value of $u_i(j)$.
We assume that each query takes constant time and the algorithm can be adaptive.
In the \emph{noiseless setting} (\Cref{sec:noiseless}), the answer to each query is always accurate.
Observe that with no noise, each comparison query can be simulated using two value queries, so a lower bound for value queries implies a corresponding one for comparison queries with an extra factor of $1/2$.
In the \emph{noisy setting} (\Cref{sec:noisy}), for comparison queries, there is a constant $\rho\in(0,1/2)$, and the answer to each query is incorrect with probability $\rho$, independently of all other queries (including those with the same $i,j,j'$).\footnote{While one could consider an alternative model in which comparison faults are \emph{persistent}, there is no hope of getting a reasonable success rate for our problem under that model. 
For example, even with \emph{one} persistent error, if that error is on agent~$1$’s comparison between the top two items and some other agent has the same favorite item as agent~$1$, then the output will be wrong.}
For value queries, the answer to each query is the true utility with probability $1 - \rho$ and an arbitrary value with probability $\rho$, independently of all other queries (including those with the same $i,j$).
The arbitrary value can be chosen by an adversary, who may choose the value based not only on the algorithm's past queries (and their answers), but also on the entire set of utilities $u_i(j)$ for $i\in N$ and $j\in M$.

By \emph{uniformly random preferences}, we refer to the setting where we associate each agent~$i$ with a permutation $\sigma_i: [m] \to [m]$ chosen uniformly and independently at random, and let $i$ prefer $j$ to $j'$ if and only if $\sigma_i(j) > \sigma_i(j')$.

\subsection{Selection \& Quantiles Algorithms}

\paragraph{Selection}
We will use the following classic linear-time algorithm for the so-called \emph{selection} problem. 

\begin{lemma}[Selection algorithm~\cite{BlumFlPr73}] \label{lem:selection}
For every $i \in N$, $S \subseteq M$, and $\ell \in [|S|]$, there is an $O(|S|)$-time deterministic algorithm $\textsc{Sel}_{i, \ell}(S)$ that makes $O(
|S|)$ comparison queries and outputs a partition $(\Su, \Sd)$ of $S$ such that $\Su \succ_i \Sd$ and $|\Su| = \ell$.
\end{lemma}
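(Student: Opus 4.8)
The plan is to invoke the classical median-of-medians selection algorithm of Blum, Floyd, Pratt, Rivest, and Tarjan, specialized so that every comparison is made from agent $i$'s perspective. Fix $i$ throughout. It suffices to design a procedure that, given $S$ and a target rank $r\in[|S|]$, returns the item $p\in S$ of rank $r$ from the top under $u_i$ (that is, exactly $r-1$ items of $S$ are $\succ_i$-above $p$) using $O(|S|)$ comparison queries and $O(|S|)$ time. Applying this with $r=\ell$ and then making one linear scan that compares each item of $S$ against $p$ produces $\Su$, the set of $\ell$ items that are $\succ_i$-weakly above $p$ (including $p$ itself), and $\Sd=S\setminus\Su$; since utilities are distinct there are no ties, so this split is well defined and clearly satisfies $\Su\succ_i\Sd$ and $|\Su|=\ell$.

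The selection procedure is recursive. First I would partition $S$ into $\lceil|S|/5\rceil$ groups of at most five items and, using a constant number of comparison queries per group, compute the $\succ_i$-median of each group; let $C$ be the set of these medians, so $|C|\le\lceil|S|/5\rceil$. Recursing on $C$ with target rank $\lceil|C|/2\rceil$ yields a pivot $p$. A single pass comparing every item of $S$ with $p$ splits $S$ into the items $\succ_i$-above $p$, the singleton $\{p\}$, and the items $\succ_i$-below $p$. Comparing $r$ with the size of the ``above'' part then tells us whether $p$ already has rank $r$ (in which case we are done), or whether to recurse into the ``above'' part with the same target $r$, or into the ``below'' part with target decreased by the number of items weakly above $p$.

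For correctness, the standard counting argument shows that roughly $3|S|/10$ items of $S$ are $\succ_i$-above $p$ and roughly $3|S|/10$ are $\succ_i$-below it, so each recursive call on a strict side of the partition acts on at most $7|S|/10+O(1)$ items. Hence both the running time and the query count obey
\begin{equation*}
T(s)\le T\!\left(\lceil s/5\rceil\right)+T\!\left(7s/10+O(1)\right)+O(s),
\end{equation*}
and since $1/5+7/10<1$ this solves to $T(s)=O(s)$, in particular $T(|S|)=O(|S|)$. Every operation the procedure performs on items is a single pairwise comparison under $u_i$, i.e.\ exactly one comparison query to agent~$i$, so the query bound matches the time bound. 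The only genuine subtlety is the familiar one in analyzing median-of-medians, namely verifying the $7/10$ eliminated-fraction bound and checking that the ceilings and additive constants do not destroy the geometric decay in the recurrence; this is handled exactly as in the original reference.
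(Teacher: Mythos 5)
Your proposal is correct and coincides with what the paper does: the lemma is simply the classical median-of-medians selection result of Blum, Floyd, Pratt, Rivest, and Tarjan, which the paper cites without proof, and your reconstruction of that algorithm (select the rank-$\ell$ pivot in $O(|S|)$ comparisons, then split $S$ around it) is the standard and intended argument. No gaps.
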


\paragraph{Quantiles}
We next consider the \emph{$(m,n)$-quantiles} problem, where we want to partition a set of $m$ items into subsets of size at most $n$ each, so that every item in the first set is preferred to every item in the second set, which is in turn preferred to every item in the third set, and so on. 
This problem can be solved in $O(m \log(m/n))$ time (in contrast to $O(m \log m)$ for sorting).

\begin{lemma}[Quantiles algorithm] \label{lem:partial-sort}
For every $i \in N$, there is an $O(m \log (m/n))$-time deterministic algorithm $\LSort_i$ that makes $O(m \log (m/n))$ comparison queries and outputs a partition $(S^i_1, \dots, S^i_k)$ of $M$ for some $k\in\N$, with the property that $S^i_1 \succ_i \dots \succ_i S^i_k$ and $|S^i_1|, \dots, |S^i_k| \leq n$.
\end{lemma}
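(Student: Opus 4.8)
The plan is to implement $\LSort_i$ by recursive median splitting, using the linear-time selection algorithm of \Cref{lem:selection} as a subroutine. On input a set $S\subseteq M$: if $|S|\le n$, return the one-block ordered partition $(S)$; otherwise set $\ell=\lceil |S|/2\rceil$, call $\Sel_{i,\ell}(S)$ to obtain a partition $(\Su,\Sd)$ with $\Su\succ_i\Sd$ and $|\Su|=\ell$, recursively run the algorithm on $\Su$ and on $\Sd$, and output the concatenation of the two resulting ordered partitions. We invoke $\LSort_i$ on $S=M$.

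Correctness will follow by induction on $|S|$. The base case is immediate. For the inductive step, $(\Su,\Sd)$ is a partition of $S$, so the concatenation of the ordered partitions returned for $\Su$ and $\Sd$ is a partition of $S$, and by induction every block has size at most $n$. For the chain property, the blocks returned for $\Su$ are subsets of $\Su$ and those for $\Sd$ are subsets of $\Sd$; since $\Su\succ_i\Sd$ means every item of $\Su$ is preferred by agent $i$ to every item of $\Sd$, the last block coming from $\Su$ is $\succ_i$ the first block coming from $\Sd$, and combined with the two inductively guaranteed chains this yields $S^i_1\succ_i\dots\succ_i S^i_k$. Termination (hence that every leaf block has size $\le n$) holds because whenever we recurse we have $|S|\ge n+1\ge 3$, so both $\lceil|S|/2\rceil$ and $\lfloor|S|/2\rfloor$ are strictly between $0$ and $|S|$.

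For the running-time and query bounds, I would argue that the recursion tree has depth $O(\log(m/n))$: a set of size $s$ is split only when $s>n$, and each split replaces it by two parts of size at most $\lceil s/2\rceil$, so after $t$ levels all surviving sets have size at most $\lceil m/2^t\rceil$, which is at most $n$ once $t=O(\log(m/n))$. The sets appearing at any fixed level of the recursion are pairwise disjoint subsets of $M$, so by \Cref{lem:selection} the total time (and number of comparison queries) spent on the $\Sel$ calls at that level is $O(m)$; summing over the $O(\log(m/n))$ levels, plus $O(m)$ to assemble the output, gives the claimed $O(m\log(m/n))$ bound. Equivalently, writing $T(s)$ for the running time on a set of size $s$, we have $T(s)\le T(\lceil s/2\rceil)+T(\lfloor s/2\rfloor)+O(s)$ for $s>n$ and $T(s)=O(s)$ for $s\le n$, which solves to $T(m)=O(m\log(m/n))$.

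I expect no serious obstacle here; the only points requiring care are that the blocks from the two recursive calls compose into a single $\succ_i$-chain (which works because $\succ_i$ between sets is ``global'' rather than merely between consecutive blocks) and that the per-level cost of the selection calls telescopes to $O(m)$ precisely because those calls operate on disjoint subsets of $M$. That per-level accounting is the step I would treat as the crux of the write-up.
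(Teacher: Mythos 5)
Your proposal is correct and follows essentially the same route as the paper: recursive halving via the linear-time selection subroutine, stopping when the set size drops to at most $n$, with the same recurrence $T(s)\le T(\lceil s/2\rceil)+T(\lfloor s/2\rfloor)+O(s)$ solving to $O(m\log(m/n))$. The level-by-level accounting you spell out is exactly the standard justification the paper leaves implicit when it says the recurrence can be verified by a recursion tree.
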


Similar results are known in the literature; see, e.g., Exercise 9.3-6 of Cormen et al.~\cite{CLRS-book}.
For completeness, we provide the proof of \Cref{lem:partial-sort}. 

\begin{proof}
The algorithm $\LSort_i$ on input $S \subseteq M$ works as follows:
\begin{itemize}
\item If $|S| \leq n$, terminate.
\item Otherwise, let $(\Su, \Sd) \gets \Sel_{i, \lfloor |S| /2\rfloor}$ (where $\Sel$ is from \Cref{lem:selection}). Then, recurse on $\Su$ and $\Sd$.
\end{itemize}
We start with $S = M$. 
Observe that the recurrence relation for both the running time and the query complexity is
\begin{align*}
T(\ell) \leq
\begin{cases}
O(1) &\text{ if } \ell \leq n; \\
O(\ell) + T(\lfloor \ell / 2 \rfloor) + T(\lceil \ell / 2 \rceil) &\text{ otherwise,}
\end{cases}
\end{align*}
where $\ell$ denotes the size of the set $S$. 
One can verify using standard methods for solving recurrence relations (e.g., a recursion tree) that $T(\ell) \le O(\ell \log(\ell/n))$. $\hfill \square$
\end{proof}

\section{Noiseless Setting}
\label{sec:noiseless}

We begin in this section by considering the noiseless setting, where every query always receives an accurate answer.

\subsection{Upper Bounds}
\label{sec:noiseless-upper}

First, we show that it is possible to achieve a running time of $O(nm \log(m/n))$ for comparison queries.

\begin{theorem}
\label{thm:noiseless-comparison-upper}
Under the noiseless comparison query model, there exists a deterministic algorithm that outputs the round-robin allocation using $O(nm\log(m/n))$ queries and $O(nm\log(m/n))$ time.
\end{theorem}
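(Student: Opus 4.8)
The plan is to run the \LSort\ (Quantiles) algorithm of \Cref{lem:partial-sort} once for each agent, and then simulate the round-robin picking process on the resulting quantile structures. Running \LSort_i for every $i\in N$ costs $O(nm\log(m/n))$ time and queries in total, which already matches the target bound; the remaining task is to show that the actual allocation can be extracted from the $n$ partitions $(S^i_1,\dots,S^i_{k_i})$ \emph{without any further queries}, in time $O(nm)$. The key structural observation is that within any single round of the algorithm, at most $n$ items get picked, so at the start of a round the set $R$ of remaining items, restricted to agent $i$'s blocks, can ``span'' at most two consecutive blocks $S^i_t$ whose boundary we have not yet crossed (more precisely, all but at most $n{-}1$ items of $R$ lie in blocks that are entirely contained in $R$). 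The agent's favorite remaining item is then guaranteed to lie in the highest block $S^i_t$ that still intersects $R$; since $|S^i_t|\le n$, we can find the max over $S^i_t\cap R$ by brute-force comparisons. But we are not allowed fresh comparison queries in this phase — so instead I would refine each block into a sorted list up front.

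Concretely, here is the cleaner version I would write. After computing the quantile partition for agent $i$, additionally sort each block $S^i_t$ (each of size $\le n$) fully; this costs $O(|S^i_t|\log|S^i_t|) \le O(|S^i_t|\log n)$ per block and $O(m\log n)$ per agent, hence $O(nm\log n)$ overall — which is unfortunately too large when $n$ is close to $m$. So I need to be more careful: rather than sorting each block, I keep each block $S^i_t$ as an unsorted set but maintain, as the simulation proceeds, a pointer into a \emph{lazily sorted} prefix, or simply observe that across the \emph{entire} run of round-robin, agent $i$ is the picker in only $\lceil m/n\rceil$ rounds, hence picks at most $\lceil m/n\rceil$ times, and each pick requires one linear scan of the current top block (size $\le n$), for a total of $O(n)\cdot\lceil m/n\rceil = O(m)$ work per agent and $O(nm)$ overall. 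The only subtlety is that a block may be only partially consumed and must stay ``current'' across rounds; but since each block is consumed over at most a constant number of that agent's turns plus one, and there are $O(m/n)$ blocks of size $\le n$, the bookkeeping is still $O(m)$ per agent. Thus total time is $O(nm\log(m/n)) + O(nm) = O(nm\log(m/n))$, since $m \ge n$ forces $\log(m/n)\ge 0$ but we also always have $nm \le nm\log(m/n)$ only when $m\ge 2n$; for $n \le m < 2n$ we simply note $O(nm\log(m/n)) = O(nm) = O(m^2)$ handles it directly, or fold the additive $O(nm)$ in as a lower-order term that is always $O(nm\log(m/n)+nm)$, which we absorb.

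Let me restructure to avoid that last awkwardness: I would \emph{first} handle the degenerate regime $m < 2n$ separately (there, $O(nm\log(m/n))$ is just $O(nm)$, and the second naive approach of finding each agent's max over the remaining items directly in $O(m)$ per turn, $O(m)$ turns, gives $O(m^2) = O(nm)$), and then assume $m \ge 2n$ so that $\log(m/n) \ge 1$ and the additive $O(nm)$ from the simulation is dominated. For the main regime, the algorithm is: (1) for each $i$, call $\LSort_i$ to get blocks $S^i_1 \succ_i \cdots \succ_i S^i_{k_i}$ of size $\le n$; (2) simulate the $m$ picks in order, maintaining for each agent a pointer to her current top block and the subset of it not yet taken, selecting her favorite available item by a linear scan of that block and removing the chosen item from every agent's block-records. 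I expect the main obstacle to be the accounting in step (2): showing that the total scanning and deletion work is $O(nm)$ and not, say, $O(nm\log m)$ or $O(m^2)$. The crucial lemma is that when it is agent $i$'s turn, her favorite available item lies in her current (highest non-exhausted) block, which follows because all higher blocks are $\succ_i$ the current one and have been entirely picked; and a block of size $s$ contributes $O(s)$ scan cost per turn but is the ``current'' block for only $O(s/n + 1)$ of agent $i$'s turns (since between two consecutive turns of agent $i$, exactly $n$ items leave the pool in total, so at most $n$ from this block), giving $\sum_{\text{blocks of }i} s_{\text{block}}\cdot O(s_{\text{block}}/n + 1) = O\!\big(\tfrac1n\sum s^2 + \sum s\big) = O\!\big(\tfrac1n \cdot n \cdot m + m\big) = O(m)$ per agent, using $s \le n$. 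Summing over agents gives $O(nm)$, completing the time bound; the query bound is immediate since step (2) makes no queries and step (1) makes $O(nm\log(m/n))$.
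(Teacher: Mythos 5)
Your proposal is correct and follows essentially the same route as the paper's proof: run the quantiles subroutine once per agent for $O(nm\log(m/n))$ queries, then at each turn find the picker's maximum within her current (highest non-exhausted) block of size at most $n$ by a linear scan costing $O(n)$ comparisons, with the primary accounting ($O(n)$ per pick times $\lceil m/n\rceil$ picks per agent, hence $O(nm)$ overall) matching the paper's exactly. One minor caveat: your secondary per-block accounting---that a block of size $s$ is current for only $O(s/n+1)$ of agent $i$'s turns because $n$ items leave the pool between her turns---is not justified (the other agents need not remove items from that particular block, so a block can be current for up to $s$ of her turns), but this detour is redundant since the primary accounting already suffices.
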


\begin{algorithm}[tb]
\caption{For worst-case preferences}
\label{alg:worst-case}
\textbf{Input}: Set of agents $N$, set of items $M$, utilities $u_i(j)$ for $i\in N, j\in M$ \\
\textbf{Output}: Round-robin allocation $(A_1,\dots,A_n)$

\begin{algorithmic}[1] 
\FOR{$i\in N$}
\STATE $A_i\leftarrow\emptyset$
\STATE $(S^i_1, \dots, S^i_k) \leftarrow \LSort_i(M)$ \hfill \COMMENT{See \Cref{lem:partial-sort}}
\STATE $b_i \leftarrow 1$ \hfill \COMMENT{First $b$ such that $S^i_b \ne \emptyset$} 
\ENDFOR
\STATE $S\leftarrow M$ \hfill \COMMENT{Set of remaining items}
\FOR{$r = 1,\dots,\lceil m/n \rceil$}
\FOR{$i = 1, \dots, \min\{n, m - n(r - 1)\}$}
\WHILE{$S^i_{b_i} \cap S = \emptyset$} \label{line:if-emptybucket}
\STATE $b_i \gets b_i + 1$
\ENDWHILE
\STATE $j\leftarrow $ best item for agent~$i$ in $S^i_{b_i} \cap S$ \label{line:find-best}
\STATE $A_i\leftarrow A_i\cup\{j\}$
\STATE $S\leftarrow S\setminus\{j\}$
\ENDFOR
\ENDFOR
\RETURN{$(A_1,\dots,A_n)$}
\end{algorithmic}
\end{algorithm}

\begin{proof}
The algorithm is presented as \Cref{alg:worst-case}.
Its correctness is due to the guarantee of \Cref{lem:partial-sort} that $S^i_1 \succ_i \cdots \succ_i S^i_k$, which implies that in each turn, the picking agent picks her most preferred item among the remaining items.

As for the number of comparison queries, note that there are only two places that require comparisons: (i) when we call $\LSort_i$, and (ii) when we find the best item for $i$ in $S^i_{b_i} \cap S$ (Line~\ref{line:find-best}). 
For (i), \Cref{lem:partial-sort} ensures that the number of queries is $O(m \log (m/n))$ for each $i$, resulting in a total of $O(nm \log(m/n))$ across all $i \in N$. 
For (ii), since $|S^i_{b_i}| \leq n$, we can find the most preferred item of agent~$i$ in $S^i_{b_i} \cap S$ using $O(n)$ queries.
Since Line~\ref{line:find-best} is invoked $m$ times, the number of queries for this part is $O(nm)$.
It follows that the total number of queries used by the algorithm is $O(nm \log(m/n))$.

Apart from Line~\ref{line:if-emptybucket}, it is clear that the running time of the rest of the algorithm is $O(nm \log(m/n))$. 
As for Line~\ref{line:if-emptybucket}, let us fix $i \in N$. 
Note that we can check whether $S^i_{b_i} \cap S$ is non-empty in time $O(|S^i_{b_i}|) \leq O(n)$. 
Since we increment $b_i$ each time the check fails and in each round the check passes only once, the total running time of this step (for this agent $i$) is at most $O\bigl(\sum_{b \in [k]} |S^i_b| + \lceil m/n \rceil \cdot n\bigr) = O(m)$.
Therefore, in total, the running time of this step across all agents is $O(nm)$. 
We conclude that the total running time of the entire algorithm is $O(nm \log(m/n))$, as desired. $\hfill \square$
\end{proof}

For value queries, we can query all $nm$ values and run the algorithm from \Cref{thm:noiseless-comparison-upper}.

\begin{corollary}
\label{cor:noiseless-value-upper}
Under the noiseless value query model, there is a deterministic algorithm that outputs the round-robin allocation using $O(nm)$ queries and $O(nm\log(m/n))$ time.    
\end{corollary}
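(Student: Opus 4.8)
The plan is to reduce \Cref{cor:noiseless-value-upper} to the comparison-query algorithm of \Cref{thm:noiseless-comparison-upper} by a one-shot preprocessing step. First I would have the algorithm query every value $u_i(j)$ for $i \in N$ and $j \in M$; this uses exactly $nm$ value queries and takes $O(nm)$ time to record (say, in a two-dimensional array indexed by agent and item). After this phase the algorithm knows the entire utility profile.

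Next I would run \Cref{alg:worst-case} verbatim, except that whenever that algorithm would issue a comparison query ``does agent $i$ prefer item $j$ to $j'$?'', we answer it by looking up $u_i(j)$ and $u_i(j')$ in the stored array and comparing them in $O(1)$ time. Since we are in the noiseless setting, the stored values are exact, so these simulated comparisons return the correct answers; by the correctness argument in the proof of \Cref{thm:noiseless-comparison-upper}, the output is therefore the round-robin allocation. Crucially, no further value queries are issued during this phase.

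For the query count, the total is $nm = O(nm)$. For the running time, the preprocessing costs $O(nm)$, and \Cref{thm:noiseless-comparison-upper} guarantees that the remainder runs in $O(nm\log(m/n))$ time---each simulated comparison is $O(1)$, matching the cost charged to a real comparison query there---so the total is $O(nm) + O(nm\log(m/n)) = O(nm\log(m/n))$. There is no real obstacle here; the only thing to check is that querying each of the $nm$ values exactly once is enough to answer all subsequent comparisons, which is immediate since the utilities are fixed and the answers are noiseless.
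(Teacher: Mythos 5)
Your proposal is correct and matches the paper's approach exactly: the paper likewise obtains \Cref{cor:noiseless-value-upper} by querying all $nm$ values up front and then running the algorithm of \Cref{thm:noiseless-comparison-upper}, answering each comparison by an $O(1)$ lookup. Your write-up simply makes explicit the (routine) details that the paper leaves implicit.
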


Next, we consider uniformly random preferences, which constitute a standard stochastic model in fair division \cite{DickersonGoKa14,KurokawaPrWa16,ManurangsiSu20,ManurangsiSu21}. 
For these preferences, we present an improvement over the worst-case bound.
We remark that the round-robin algorithm on random preferences has been studied by Manurangsi and Suksompong~\cite[Thm.~3.1]{ManurangsiSu21} and Bai and G\"{o}lz~\cite[Prop.~2]{BaiGo22}.
Since both of these papers used preference models that imply uniformly random ordinal preferences,\footnote{
Specifically, Manurangsi and Suksompong~\cite{ManurangsiSu21} assumed that all agents' utilities for all items are drawn independently from the same (non-atomic) distribution, while Bai and G\"{o}lz~\cite{BaiGo22} allowed each agent's utilities for items to be drawn independently from an agent-specific (non-atomic) distribution.
} 
our result can be applied for the running time analysis of their algorithms.

\begin{theorem}
\label{thm:noiseless-comparison-upper-random}
For uniformly random preferences, under the noiseless comparison query model, there exists a deterministic algorithm that outputs the round-robin allocation using expected $O(nm + m \log m)$ queries and expected $O(nm + m \log m)$ time.
\end{theorem}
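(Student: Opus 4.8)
The plan is to simulate the round-robin process while maintaining, for each agent $i$, a lazily built data structure that returns $i$'s currently most preferred available item, so that the cost of ``sorting'' $i$'s ranking is amortized against the (random) number of items that $i$ actually has to look at. Concretely, for each agent $i$ one keeps a sorted list of a prefix of $i$'s ranking (obtained from $\Sel$ of \Cref{lem:selection} together with a comparison sort); to answer a turn of $i$ we pop items off the front, skipping those already taken, and when the prefix is exhausted while $i$ still needs items we extend it by selecting and sorting the next block of $i$'s best remaining items. Once only few items remain overall, we switch to computing each agent's favourite among them directly with $\Sel$. The construction of the initial prefixes and the final direct-selection phase together cost $O(nm)$; the content of the theorem is that, in expectation, the total number of items ``examined'' (popped from, or skipped in, these lists) over all agents is $O(nm+m\log m)$, and that the selection-and-sort overhead can be kept within the same bound.

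The heart of the argument is the expected-work bound, which I would prove by a principle-of-deferred-decisions argument on agent $i$'s permutation $\sigma_i$: reveal $\sigma_i$ one item at a time, from the top, only when the algorithm needs it. The key observation is that the entire process is a deterministic function of $(\sigma_{i'})_{i'\neq i}$ together with the prefix of $\sigma_i$ revealed so far; hence, conditioned on everything exposed up to the moment agent $i$ first examines its $k$-th ranked item, that item is uniform among the $m-k+1$ not-yet-revealed items, and it is already taken exactly when it is one of the $c-k$ items removed by other agents that $i$ has not yet examined, where $c$ is the index of the current turn. So each examination is ``wasted'' with conditional probability $\tfrac{c-k}{m-k+1}$. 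Let $L_i$ be the total number of items $i$ examines, equivalently the $\sigma_i$-rank of $i$'s last pick, and let $\rho_{i,r}$ be the number of items remaining at $i$'s $r$-th turn. Resolving the per-turn ``burst'' of wasted examinations with these probabilities telescopes, giving $\mathbb{E}[L_i]=(m+2)\bigl(1-\prod_{r}\tfrac{\rho_{i,r}}{\rho_{i,r}+1}\bigr)\le (m+2)\sum_{r}\tfrac{1}{\rho_{i,r}+1}$ (using $1-\prod_r(1-x_r)\le\sum_r x_r$). Since the multiset $\{\rho_{i,r}\}_{(i,r)}$ is exactly $\{1,2,\dots,m\}$, summing over agents yields $\mathbb{E}\bigl[\sum_i L_i\bigr]\le (m+2)\sum_{s=2}^{m+1}\tfrac1s=O(m\log m)$.

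It remains to convert this into a time/query bound. An examination costs $O(1)$, so examinations contribute $O(\mathbb{E}[\sum_i L_i])=O(m\log m)$ in expectation; the only other comparisons come from the block extensions and the final direct-selection phase, and these can be kept to $O(nm+m\log m)$ by an appropriate choice of block size and switchover threshold, using a tail bound on $L_i^{(r)}$ (the $\sigma_i$-rank of $i$'s $r$-th pick), which the same burst analysis supplies, to control how often an agent exhausts its current sorted prefix. Adding the $O(nm)$ setup cost gives the claimed expected $O(nm+m\log m)$ bound for both time and queries. The step I expect to be the main obstacle is the probabilistic lemma: carefully justifying the deferred-revelation argument so that the conditional ``wasted'' probabilities really have the clean form $\tfrac{c-k}{m-k+1}$ despite the adaptive coupling between $\sigma_i$ and the rest of the process, and deriving a tail bound on $L_i^{(r)}$ (not just on its mean) strong enough that the block-extension overhead, when the parameters are tuned correctly across all ranges of $n$, does not exceed $O(nm+m\log m)$.
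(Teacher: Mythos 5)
Your probabilistic core takes a genuinely different route from the paper's, and as far as I can check it is sound. The paper's proof (its Algorithm~2) keeps, for each agent, a sorted list of her top $\lceil|S|/n\rceil$ \emph{remaining} items, eagerly deletes items taken by others, and controls the expected number of re-initializations by noting that another agent's pick lands in agent~$1$'s list with probability $|L_1|/|S|$; it then closes the argument with an induction on $m$, using the fact that conditioned on the items taken so far, the process on the remaining items is again uniformly random. Your invariant --- that the expected total depth $\E\bigl[\sum_i L_i\bigr]$ into the agents' own rankings is $O(m\log m)$, obtained by deferred revelation of $\sigma_i$ and the observation that the multiset of remaining-item counts over all $m$ turns is exactly $\{1,\dots,m\}$ --- is a clean alternative, and the conditional probability $\tfrac{c-k}{m-k+1}$ of a wasted examination is justified as you describe, since the history up to that examination is measurable with respect to $(\sigma_{i'})_{i'\ne i}$ and the revealed prefix of $\sigma_i$. (Your closed form appears to carry a harmless off-by-one --- the expected rank of the best of $\rho$ uniform items is $(m+1)/(\rho+1)$, not $(m+2)/(\rho+1)$ --- but the $O(m\log m)$ conclusion stands.)

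The gap is the one you flag yourself, and it is not merely bookkeeping: the depth bound alone does not control the cost of the block extensions. Each extension requires a $\Sel$ call on the current remaining set, which costs $\Theta(|S|)$ \emph{independently of the block size}, and $|S|=\Theta(m)$ for a constant fraction of the process; so you must separately show that the expected number of extensions per agent in the early phase is $O(1)$ and that the later ones telescope. This needs a tail bound on the examined depth (an exponential tail at scale roughly $m/\rho$ does hold, so a doubling block schedule can be made to work), plus an argument that the $O(\ell\log\ell)$ sorting cost summed over all blocks stays at $O(m/n\cdot\log m)$ per agent. This accounting is precisely where the paper's induction does its real work, so until it is supplied your argument establishes the right invariant but not yet the stated $O(nm+m\log m)$ query and time bounds.
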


\begin{algorithm}[tb]
\caption{For uniformly random preferences}
\label{alg:uniformly-random}
\textbf{Input}: Set of agents $N$, set of items $M$, utilities $u_i(j)$ for $i\in N, j\in M$ \\
\textbf{Output}: Round-robin allocation $(A_1,\dots,A_n)$

\begin{algorithmic}[1] 
\FOR{$i\in N$}
\STATE $A_i\leftarrow\emptyset$
\STATE $L_i\leftarrow\emptyset$ \hfill \COMMENT{Sorted list of $i$'s best remaining items}
\ENDFOR
\STATE $S\leftarrow M$ \hfill \COMMENT{Set of remaining items}
\FOR{$r = 1,\dots,\lceil m/n \rceil$}
\FOR{$i = 1, \dots, \min\{n, m - n(r - 1)\}$}
\IF{$L_i = \emptyset$} \label{line:if-emptyset}
\STATE $\ell\leftarrow \lceil|S|/n\rceil$
\STATE $(\Su, \Sd) \gets \textsc{Sel}_{i, \ell}(S)$ \hfill \COMMENT{See \Cref{lem:selection}} 
\STATE Find the sorted list $L_i$ of $\Su$ for agent~$i$, using a standard sorting algorithm (e.g., merge sort).
\ENDIF \label{line:endif-emptyset}
\STATE $j\leftarrow $ best item for agent~$i$ in $L_i$
\STATE $A_i\leftarrow A_i\cup\{j\}$
\STATE $S\leftarrow S\setminus\{j\}$
\FOR{$i'\in N$}
\STATE $L_{i'}\leftarrow L_{i'}\setminus\{j\}$
\ENDFOR
\ENDFOR
\ENDFOR
\RETURN{$(A_1,\dots,A_n)$}
\end{algorithmic}
\end{algorithm}

\begin{proof}
The algorithm is shown as \Cref{alg:uniformly-random}.
The correctness of the algorithm is again trivial.
Furthermore, the running time does not add extra asymptotic terms on top of the query complexity, so we will only establish the latter. 
In particular, we will show that the expected number of queries made by the first agent is at most $O\bigl(m + \frac{m}{n} \log m\bigr)$; the proof is similar for the other agents, and the desired statement follows from summing this up across all agents.

Let $Q_n(m)$ denote the number of queries made by the first agent when there are $m$ items and $n$ agents.
Fix $n\in\mathbb{N}$.
We will show by induction on $m$ that
\begin{align} \label{eq:induction-claim}
Q_n(m) \leq C \cdot \left(m + \left\lceil \frac{m}{n} \right\rceil \log m\right) 
\end{align}
where $C > 0$ is a sufficiently large constant. 
Specifically, let $C_2$ be a constant such that the re-initialization of $L_i$ between Lines \ref{line:if-emptyset} and \ref{line:endif-emptyset} of the algorithm takes at most $C_2 \cdot (|S| + \ell \log \ell)$ comparison queries. Then, we let $C = 100 C_2$.

For the base case where $m \leq 2n$, note that there are at most two rounds and each round only takes at most $C_2 \cdot (m + \lceil m/n\rceil \log m)$ comparisons for the first agent.

Next, we address the induction step. 
Suppose that for some $m^* > 2n$, inequality~\eqref{eq:induction-claim} holds for all $m < m^*$; we will show that it also holds for $m = m^*$.

Consider running the algorithm for $m = m^*$. 
Let $R^* = \lceil m^* / n \rceil\ge 3$ be the total number of rounds to be run. 
Let $r \geq 1$ denote the first round such that $L_1$ becomes empty after the end of the round; for notational convenience, we let $r = m^* / n$ instead of $\lceil m^* / n\rceil$ in the case that this happens in the last round.
Note that $r$ is a random variable. 
Observe that from round $r + 1$ onward, the expected number of queries made by the agent is the same as if the algorithm is run on $m^* - rn$ items.\footnote{This is because, conditioned on the items selected so far by all agents, the remaining items admit uniformly random preferences.} 
In other words, we have
\begin{align*}
Q_n(m^*) &\le C_2 \cdot (m^* + R^* \log m^*) + \E_r[Q_n(m^* - rn)],  
\end{align*}
where we use the convention $Q_n(0) = 0$.

Plugging the inductive hypothesis into the inequality above, we get that $Q_n(m^*)$ is at most 
\begin{align*}
&C_2 \cdot (m^* + R^* \log m^*) 
+ \E_r\left[C\cdot\left((m^* - rn) + (R^* - r) \log m^*\right)\right] \\
&= C \cdot \left(m^* + R^* \log m^*\right) 
+C_2\cdot \left((m^* - 100 n X) + (R^* - 100 X) \log m^*\right),
\end{align*}
where $X := \E[r]$.
Hence, to show \eqref{eq:induction-claim} for $m = m^*$, it suffices to prove that $X \geq R^* / 100$. 
In turn, to prove this, it is sufficient to show that $\Pr[r \geq R^*/50] \geq 0.5$, or equivalently, $\Pr[r < R^*/50] \leq 0.5$.

To show that $\Pr[r < R^*/50] \leq 0.5$, let $r_0 := \lfloor R^* / 50 \rfloor$. 
Let $X_{i',r'}$ be an indicator variable that equals $1$ if and only if \emph{both} of the following are true: (i) agent $i'$ takes an item from $L_1$ in round $r'$, and (ii) $L_1$ has not been re-initialized by round $r'$. 
Notice that
\begin{align*}
\Pr\left[r < \frac{R^*}{50}\right] 
&= \Pr\left[R^* = \sum_{r' \in [r_0]} \sum_{i' \in N} X_{i', r'} \right] 
\leq \frac{1}{R^*}\cdot\E\left[\sum_{r' \in [r_0]} \sum_{i' \in N} X_{i', r'} \right],
\end{align*}
where the equality follows from the fact that we start with $L_1$ of size $R^*$, and the inequality follows from Markov's inequality. Thus, to show that $\Pr[r < R^*/50] \leq 0.5$, it suffices to show that $\E\left[\sum_{r' \in [r_0]} \sum_{i' \in N} X_{i', r'} \right] \leq 0.5R^*$.

To calculate this expectation, let us make the following observations on $X_{i', r'}$.
If $i' \ne 1$, then since each agent $i'$ removes an item from $S$ most preferred by her at that point, the probability that this item belongs to $L_1$ is exactly $\frac{|L_1|}{|S|} \leq \frac{R^*}{|S|}$, where the inequality follows from the fact that $L_1$ starts with size $R^*$. 
In other words, we have $\E[X_{i', r'}] \leq \frac{R^*}{|S|}$ when $i'\ne 1$. 
Note also that for $r' \leq r_0$, we always have\footnote{Indeed, since there are $R^* - r_0 > R^*/3 + 1$ rounds remaining (not including the current round), the number of items left is at least $n \cdot R^*/3 \geq m^*/3$.} $|S| \geq m^*/3$. 
From this, we can derive
\begin{align*}
\E\left[\sum_{r' \in [r_0]} \sum_{i' \in N} X_{i', r'} \right] 
\leq r_0 + \E\left[\sum_{r' \in [r_0]} \sum_{i' \in N \setminus \{1\}} \frac{R^*}{m^*/3} \right] 
&\leq r_0\left(1 + \frac{3 n R^*}{m^*}\right) \\
&\leq r_0 \cdot 10 
\leq 0.5R^*,
\end{align*}
which concludes our proof. $\hfill \square$
\end{proof}

Similarly to \Cref{cor:noiseless-value-upper}, for value queries, we can query all $nm$ values and run the algorithm from \Cref{thm:noiseless-comparison-upper-random}.

\begin{corollary}
\label{cor:noiseless-value-upper-random}
For uniformly random preferences, under the noiseless value query model, there exists a deterministic algorithm that outputs the round-robin allocation using $O(nm)$ queries and expected $O(nm + m \log m)$ time.
\end{corollary}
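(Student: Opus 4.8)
The plan is to mirror the proof of \Cref{cor:noiseless-value-upper}: first spend $O(nm)$ value queries to learn the entire utility table $\{u_i(j)\}_{i \in N,\, j \in M}$ and store it, and then simulate \Cref{alg:uniformly-random} without issuing any further queries, answering every comparison that the algorithm would make in $O(1)$ time via a table lookup. Since no queries are made after the elicitation phase, the total query complexity is exactly $O(nm)$.

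For the running time, I would first note that the elicitation phase costs $O(nm)$ time. In the simulation, every step that \Cref{alg:uniformly-random} originally charged to a comparison query is now an $O(1)$ table lookup, so the asymptotic accounting is unchanged, and the expected running time bound $O(nm + m\log m)$ established in the proof of \Cref{thm:noiseless-comparison-upper-random} carries over verbatim. Adding the deterministic $O(nm)$ elicitation cost gives an expected total running time of $O(nm + m\log m)$, as claimed.

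There is essentially no obstacle here: the expectation in \Cref{thm:noiseless-comparison-upper-random} is taken over the uniformly random preferences, which is exactly the distribution in force, and the elicitation phase is deterministic with a worst-case (hence also expected) cost of $O(nm)$, so the two contributions combine cleanly. The only point worth stating explicitly is the standard fact that a value query is at least as informative as a comparison query, which is what licenses running the comparison-model algorithm unchanged once the table has been filled in.
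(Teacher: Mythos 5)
Your proposal matches the paper's argument exactly: the paper likewise obtains this corollary by issuing all $nm$ value queries up front and then running the algorithm of \Cref{thm:noiseless-comparison-upper-random}, with each comparison resolved by a constant-time table lookup so that the expected running time bound carries over. The reasoning is correct and there is nothing further to add.
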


\subsection{Lower Bounds}
\label{subsec:lb-noiseless}

We now turn to lower bounds.
First, we present a lower bound of $\Omega(nm)$ for comparison queries.

\begin{theorem}
\label{thm:noiseless-comparison-lower-nm}
Under the noiseless comparison query model, any (possibly randomized) algorithm that outputs the round-robin allocation with probability at least $2/3$ makes $\Omega(nm)$ queries in expectation.\footnote{All of our lower bounds in this section hold even when $2/3$ is replaced by any constant strictly larger than $1/2$.}
\end{theorem}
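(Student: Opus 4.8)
The plan is to use an adversary (decision-tree) argument: I will exhibit a family of instances on which any correct algorithm must, in expectation, make $\Omega(nm)$ comparison queries, because the algorithm is forced to extract roughly one bit of information from each of $\Omega(nm)$ independent "coin flips" hidden in the preference profile. Concretely, I would first reduce to a hard sub-instance. Partition the $m$ items into $n$ blocks $B_1,\dots,B_n$ of size $\Theta(m/n)$ each, and design the utilities so that in round-robin, agent~$i$ is essentially forced to take its items from block $B_i$: e.g., give agent~$i$ very high utility for all items in $B_i$ and utilities on the other blocks arranged so that the first $n-1$ rounds clear out exactly one item per block per round in a predictable pattern. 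The point is that the \emph{identity} of the item agent~$i$ picks within $B_i$ in each of its $\Theta(m/n)$ turns depends on agent~$i$'s internal ranking of $B_i$, and the algorithm must determine enough of that ranking to name the winning items — which is $\Omega(|B_i|) = \Omega(m/n)$ comparisons per agent per relevant structure, but we need the count to be $\Omega(m)$ per agent to total $\Omega(nm)$.

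So the sharper construction I would actually use: for each agent~$i$ and each item~$j$, plant an independent hidden bit that the algorithm can only learn via a comparison query involving agent~$i$. The cleanest realization is to pair up items: let the items be $\{a_1,b_1,a_2,b_2,\dots\}$ and, for each agent $i$ and each pair $t$, flip an independent fair coin deciding whether $a_t \succ_i b_t$; arrange the global utility scale so that exactly one of $a_t,b_t$ ends up being picked by some agent and \emph{which one} is picked is exactly this bit for the relevant agent. Then the output allocation encodes $\Theta(nm)$ independent uniform bits, so by a standard information-theoretic / Yao's-principle argument any algorithm that is correct with probability $\ge 2/3$ (even a randomized one — fix the best deterministic algorithm against the uniform input distribution) must learn $\Omega(nm)$ bits of information about the profile. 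Since each comparison query, noiseless, reveals at most one bit, the algorithm makes $\Omega(nm)$ queries in expectation. I would phrase this via entropy: $H(\text{output}) = \Omega(nm)$, each answered query reduces conditional entropy by at most $1$, and Fano's inequality handles the failure probability $1/3$.

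The key steps, in order: (1) define the instance family with the $\Theta(nm)$ planted independent bits and verify that the round-robin allocation is a \emph{measurable function of exactly those bits} and, crucially, that distinct bit-vectors give distinct allocations (injectivity), so $H(\text{round-robin allocation}) = \Omega(nm)$; (2) invoke Yao's minimax principle to pass from randomized algorithms to a deterministic algorithm $\Alg$ run on the random instance; (3) run the entropy accounting: if $\Alg$ makes $q$ queries in expectation, then by the chain rule and the fact that each noiseless comparison answer is a single bit, the transcript carries at most (roughly) $\E[q]$ bits, and Fano gives $H(\text{allocation}\mid \text{transcript}) \le 1 + \frac13\cdot H(\text{allocation})$, forcing $\E[q] \ge \frac23 H(\text{allocation}) - 1 = \Omega(nm)$; (4) note the bound degrades gracefully for any success probability $> 1/2$, as claimed in the footnote.

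The main obstacle I expect is step (1): engineering utilities so that (a) the round-robin dynamics are fully controlled — no agent's pick "spills over" into a block or pair it shouldn't touch, across \emph{all} $\binom{m}{\text{stuff}}$ settings of the hidden bits simultaneously — and (b) the map from bit-vectors to allocations is injective, so that the full $\Omega(nm)$ entropy really shows up in the \emph{output} (recall the algorithm need not report who picked what in each turn, only the final bundles, which weakens what we can demand). I would handle (a) by using widely separated utility "tiers" (a block-dependent dominant term plus a tiny pair-dependent perturbation), so the coin flips never change the coarse picking order; and I would handle (b) by making each planted bit correspond to a distinct item landing in a distinct, identifiable bundle, so the bundles literally spell out the bit-vector. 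A secondary subtlety is making sure the comparison-query-reveals-one-bit step is airtight against adaptivity — this is standard (condition on the transcript so far; the next answer is binary), but worth stating carefully. Once the construction is pinned down, the rest is the routine Fano/Yao bookkeeping sketched above.
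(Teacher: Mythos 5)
Your step (2) (Yao-style reduction to a deterministic algorithm against a hard input distribution) matches the paper's \Cref{lem:yao}, but the core of your argument---steps (1) and (3)---cannot work as stated. The Fano/entropy accounting in step (3) is capped by the entropy of the \emph{output}, and an allocation assigns each of the $m$ items to one of $n$ agents, so there are at most $n^m$ possible outputs and $H(\text{allocation}) \le m\log_2 n$. Hence no instance family can make the allocation an injective function of $\Theta(nm)$ independent fair coins once $n$ is superconstant (since $m\log_2 n = o(nm)$ whenever $\log n = o(n)$), so requirement (b) of your step (1) is unachievable and your bookkeeping yields at most $\Omega(m\log n)$, which equals $\Omega(nm)$ only for constant $n$. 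Your own pairing construction exhibits the obstruction: for a pair $(a_t,b_t)$, essentially only the bit of the agent who first reaches that pair can influence which of $a_t,b_t$ lands where; the remaining agents' $\Theta(n)$ bits for that pair are invisible in the final bundles. The difficulty is not merely ``engineering the utilities,'' as you anticipate---it is information-theoretically impossible.

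The missing idea is that the $\Omega(nm)$ bound must be driven not by how much information the output \emph{contains} but by how many (agent, item) pairs the algorithm must probe to \emph{rule out perturbations} that would change the output. The paper takes $\cD$ to be uniformly random preferences and argues locally: for each agent $i$ and each item $j$ in a set $\Jc_i(\nu)$ (items not taken by agents $1,\dots,i-1$ in round one and not allocated to $i$ at all), the perturbed instance $\nu^{i,j}$ in which $j$ becomes $i$'s favorite has a different round-robin allocation, yet produces identical answers to every query not involving the pair $(i,j)$. A counting lemma gives $\sum_{i\in N}|\Jc_i(\nu)|\ge nm/4$, and since $\nu^{i,j}$ is distributed as $\cD$ when $\nu,i,j$ are drawn uniformly, an averaging argument forces any $(0.99,\cD)$-correct deterministic algorithm to involve $\Omega(nm)$ distinct (agent, item) pairs in its queries. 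This fooling-set style of argument is the right replacement for your entropy accounting; an entropy/decision-tree count of the kind you propose is what the paper uses for the separate $\Omega(m\log m)$ bound (\Cref{thm:noiseless-comparison-lower-mlogm}), not for $\Omega(nm)$.
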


Before we proceed, we introduce some additional notation. 
Let $\Alg$ be an algorithm for the round-robin problem. 
A triplet $a = (i,j,j')$ represents a query in which $\Alg$ asks if agent~$i$ prefers item $j$ to $j'$.  
Let $\Ac$ be the set of all such triplets, with $|\Ac| = n{m \choose 2}$; since querying $(i,j_1,j_2)$ is equivalent to querying $(i,j_2,j_1)$ and flipping the answer, we omit such ``duplicate'' triplets from $\Ac$ without loss of generality.

Define an \emph{instance} $\nu$ of our problem to be a setting of the agent preferences, and let $L_{\nu}$ be the correct round-robin allocation 
when the instance is $\nu$. 
For any instance $\nu$ and agent $i$, let $\Jc_{i}(\nu)$ be the set of all items~$j$ such that in the correct round-robin procedure, item $j$ is not allocated to any of the agents $1,2,\dots,i-1$ in the first round and is not allocated to agent $i$ in any round.
We will use the following lemma.

\begin{lemma} \label{lem:count-alloc-items}
Let $\nu$ be an arbitrary instance of our round-robin problem. 
Then, $\sum_{i\in N} |\Jc_i(\nu)| \ge nm/4.$
\end{lemma}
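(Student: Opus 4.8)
My plan is to fix an arbitrary instance $\nu$ and count, for each item $j \in M$, how many agents $i$ have $j \in \Jc_i(\nu)$; then sum over $j$ and show the total is at least $nm/4$. Recall that $j \in \Jc_i(\nu)$ means two things simultaneously: (a) $j$ is still available when agent~$i$ picks in round~$1$ (i.e.\ none of $1,\dots,i-1$ took $j$ in round~$1$), and (b) agent~$i$ never receives $j$ in any round. The key observation is that condition (a) fails for item $j$ and agent $i$ only if $j$ was picked by one of the first $i-1$ agents in round~$1$; since round~$1$ allocates exactly $n$ items (assuming $m \ge n$, which we have by the preliminaries), there are at most $n$ items for which (a) can ever fail for some agent, and each such item $j$ picked by agent~$i_0$ in round~$1$ fails (a) only for agents $i > i_0$. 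Condition (b) fails for agent $i$ and item $j$ exactly when $j \in A_i$ (the bundle agent~$i$ receives), and $\sum_i |A_i| = m$.

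Concretely, I would argue as follows. For a fixed item $j$, let $i_0(j)$ be the index of the agent who picks $j$ in round~$1$ if $j$ is picked in round~$1$, and set $i_0(j) = n$ otherwise (so that no agent is ``blocked'' by (a) when $j$ survives round~$1$ — wait, I need to be careful: if $j$ is not picked in round~1 at all, then (a) holds for every agent). Let me instead define: (a) fails for $(i,j)$ iff $j$ is picked in round~1 by some agent with index $< i$. So the number of agents $i$ for which (a) fails is $0$ if $j$ is not picked in round~1, and is $n - i_0(j)$ if $j$ is picked in round~1 by agent $i_0(j)$ (namely agents $i_0(j)+1, \dots, n$). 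The number of agents $i$ for which (b) fails is exactly $1$ if some agent gets $j$, and $0$ otherwise — but note that if agent $i^*$ gets $j$ then (b) fails only for $i = i^*$. Hence for each $j$,
\[
\#\{i : j \in \Jc_i(\nu)\} \;\ge\; n - (\text{agents where (a) fails}) - (\text{agents where (b) fails}) \;\ge\; n - (n - i_0(j))\mathbf{1}[j\text{ picked in round }1] - 1,
\]
and summing over all $j$: the term $\sum_j \mathbf{1}[j\text{ picked in round }1](n - i_0(j))$ is maximized when the items picked in round~1 go to low-index agents, but actually the set of values $\{i_0(j) : j \text{ picked in round 1}\}$ is exactly $\{1,2,\dots,n\}$ (one item per agent in round~1, since $m \ge n$), so $\sum_{j \text{ picked in rd 1}}(n - i_0(j)) = \sum_{i=1}^{n}(n-i) = \binom{n}{2}$. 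And $\sum_j (b\text{-failures}) = m$. Therefore $\sum_i |\Jc_i(\nu)| \ge nm - \binom{n}{2} - m = nm - n(n-1)/2 - m$.

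The remaining step is purely arithmetic: verify $nm - n(n-1)/2 - m \ge nm/4$, i.e.\ $3nm/4 \ge n(n-1)/2 + m$. Using $m \ge n$ we get $n(n-1)/2 \le m(m-1)/2$, which is not obviously small enough, so I would instead use $m \ge n \ge 2$ more carefully: $n(n-1)/2 \le nm/2$ (since $n-1 < n \le m$... wait $n(n-1)/2 \le n \cdot m / 2$ requires $n - 1 \le m$, true) and $m \le nm/2$ (since $n \ge 2$), giving $n(n-1)/2 + m \le nm/2 + nm/2 = nm$, which is too weak. The cleaner bound: $n(n-1)/2 \le nm/2$ but I need the total subtracted to be at most $3nm/4$. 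Since $m \le nm/2$ and I want $n(n-1)/2 \le nm/4$, i.e.\ $2(n-1) \le m$; this holds when $m \ge 2(n-1)$. For the boundary regime $n \le m < 2(n-1)$ (so $m$ is close to $n$), one checks the bound directly or uses that the adversary can additionally arrange which items are picked in round~1 — I expect the honest fix is that the lemma as stated with constant $1/4$ holds once one is slightly more careful, e.g.\ noting $n(n-1)/2 + m \le n(n-1)/2 + m \le 3nm/4$ for all $m \ge n \ge 2$ by splitting into cases $n = 2$ and $n \ge 3$, or by a direct case check. \textbf{The main obstacle} is thus not the combinatorial counting (which is clean) but making sure the constant $1/4$ survives in the small-$m$/large-$n$ corner where $m$ is within a constant factor of $n$; I would handle this by a short explicit case analysis on small $n$ together with the asymptotic argument for $m \ge 2n$.
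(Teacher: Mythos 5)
Your counting argument is sound and arrives at exactly the same intermediate bound as the paper, namely $\sum_{i\in N}|\Jc_i(\nu)| \ge nm - \tfrac{n(n-1)}{2} - m$; you merely count per item (charging $n-i_0(j)$ blocked agents to each round-1 item and one recipient to each item) where the paper counts per agent ($|\Jc_i(\nu)| \ge m - (i-1) - k_i$ with $k_i$ the size of agent $i$'s bundle, then sums $\sum_i(i-1)=\tfrac{n(n-1)}{2}$ and $\sum_i k_i = m$). These are the same double count read along the two axes, so the combinatorial core is fine.

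The gap is the final arithmetic, which you leave unresolved and about which you express genuine doubt (``I expect the honest fix\dots''). There is no problematic corner and no case analysis is needed: factor the bound as
\begin{equation*}
nm - \frac{n(n-1)}{2} - m \;=\; (n-1)\left(m - \frac{n}{2}\right) \;\ge\; \frac{n}{2}\cdot\frac{m}{2} \;=\; \frac{nm}{4},
\end{equation*}
where the inequality uses $n\ge 2$ (so $n-1\ge n/2$) and $m\ge n$ (so $m - n/2 \ge m/2$). In particular your worry about the regime $n \le m < 2(n-1)$ is unfounded; the factorization handles all $m\ge n\ge 2$ uniformly. With that one line inserted, your proof is complete and essentially identical to the paper's.
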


\begin{proof}
We write $k_i := 1 + \lfloor (m - i) / n \rfloor$ to denote the number of items allocated to agent $i$ across all rounds. 
Since $i - 1$ items are allocated to agents $1, \dots, i - 1$ in the first round, and $k_i$ items are allocated to agent $i$ across all rounds, we have $|\Jc_i(\nu)| \geq m - (i - 1) - k_i$.
Summing this over all $i \in N$, we get
\begin{align*}
\sum_{i\in N} |\Jc_i(\nu)| 
&\ge nm - \frac{n(n - 1)}{2} - m = (n - 1)\left(m - \frac{n}{2}\right) \geq \frac{n}{2}\cdot\frac{m}{2} = \frac{nm}{4},
\end{align*}
as desired. $\hfill \square$
\end{proof}

For brevity, we say that an algorithm $\Alg$ is \emph{$\alpha$-correct} if $\Pr_{\Alg}[\Alg(\nu) = L_\nu] \geq \alpha$ for any instance $\nu$, where the probability is taken over the randomness of $\Alg$.
Moreover, for a distribution $\cD$ over instances, we say that $\Alg$ is \emph{$(\alpha, \cD)$-correct} if $\Pr_{\nu \sim \cD,\, \Alg}[\Alg(\nu) = L_\nu] \geq \alpha$, where the probability is taken over both the random instance $\nu$ drawn from~$\cD$ and the randomness of $\Alg$.
We will also use the following lemma, which is in the spirit of Yao's principle.

\begin{lemma} \label{lem:yao}
If there exists a $2/3$-correct algorithm using at most $q$ queries in expectation for some $q\in\mathbb{R}^+$, then for any distribution $\cD$ of instances, there exists a deterministic algorithm that makes $O(q)$ queries in the worst case and is $(0.99, \cD)$-correct.
\end{lemma}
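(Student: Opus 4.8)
The plan is to prove \Cref{lem:yao} in two stages, following the standard template for ``Yao's principle with expected query cost.'' First I would convert the expected-cost guarantee into a worst-case one by truncation: if $\Alg$ is $2/3$-correct and uses at most $q$ queries in expectation on \emph{every} instance, then for any fixed instance $\nu$, Markov's inequality gives $\Pr_{\Alg}[\Alg \text{ uses} > 100q \text{ queries on } \nu] \le 1/100$. So the truncated algorithm $\Alg'$ that simulates $\Alg$ but halts and outputs ``fail'' (or an arbitrary allocation) once it has made $\lceil 100q\rceil$ queries makes at most $\lceil 100q\rceil = O(q)$ queries in the worst case, and for every $\nu$ we have $\Pr_{\Alg'}[\Alg'(\nu) = L_\nu] \ge 2/3 - 1/100 > 0.65$. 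Note this step crucially uses that the $2/3$-correctness and the expected-query bound hold pointwise over instances, not just in distribution.

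Second I would apply an averaging (derandomization) argument. Fix the distribution $\cD$. Since $\Alg'$ is $0.65$-correct pointwise, it is in particular $(0.65,\cD)$-correct: $\Pr_{\nu\sim\cD,\,\Alg'}[\Alg'(\nu)=L_\nu]\ge 0.65$. Writing this probability as $\E_{\text{coins } R}\bigl[\Pr_{\nu\sim\cD}[\Alg'_R(\nu)=L_\nu]\bigr]$, where $\Alg'_R$ denotes $\Alg'$ run with its internal randomness fixed to $R$, there must exist a particular fixing $R^*$ with $\Pr_{\nu\sim\cD}[\Alg'_{R^*}(\nu)=L_\nu]\ge 0.65$. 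The algorithm $\Alg'_{R^*}$ is deterministic and still makes $O(q)$ queries in the worst case. This only gives success probability $0.65$ against $\cD$, not $0.99$, so the final ingredient is amplification: run $O(1)$ independent copies — but they must be independent, which for a \emph{deterministic} algorithm on a \emph{fixed} instance is impossible. The fix is to do the amplification \emph{before} derandomizing: take $t$ independent runs of the truncated randomized algorithm $\Alg'$, output the most frequent allocation among the $t$ outputs (majority vote), and observe that for any fixed $\nu$, a Chernoff bound shows the majority is correct with probability $\ge 0.99$ provided $t$ is a large enough constant (since each run is independently correct with probability $> 0.65 > 1/2$). This majority-vote algorithm is randomized, $0.99$-correct pointwise, and uses $O(tq)=O(q)$ queries in the worst case; \emph{now} apply the averaging argument of this paragraph to it to extract a deterministic $O(q)$-query algorithm that is $(0.99,\cD)$-correct.

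The main obstacle — really the only subtlety — is getting the order of operations right: amplification to constant $0.99$ success must happen while randomness is still available (so that the copies can be independent), and derandomization via the averaging/Yao step must be the last step. A naive approach that derandomizes first and then tries to amplify fails because repeated runs of a deterministic algorithm on the same input are perfectly correlated. One should also be slightly careful that ``output the most frequent allocation'' is well-defined (break ties arbitrarily) and that the constant $t$ can be chosen uniformly, independent of $\cD$, $n$, $m$; this is immediate since the per-run success probability $0.65$ is an absolute constant, so $t$ depends only on the target $0.99$. Everything else is routine.
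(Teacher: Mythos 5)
Your proof is correct and uses essentially the same ingredients as the paper's: constant-round majority-vote amplification, an averaging argument to fix the algorithm's coins, and Markov-based truncation of the query count. The only structural difference is the ordering — you truncate first (per instance, which makes the worst-case query bound automatic for every coin fixing, so the averaging step only needs to preserve correctness), whereas the paper amplifies, then fixes coins $r^*$ that are simultaneously good for correctness and for expected query count under $\cD$, and truncates last; both orderings work.
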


\begin{proof}
Denote by $\Alg$ the $2/3$-correct algorithm that uses at most
$q$ queries in expectation. 
Then, let $\Alg'$ be the algorithm that runs $\Alg$ $10000$ times and outputs the result that appears most often (with ties broken arbitrarily). 
$\Alg'$ uses at most $q' := 10000q$ queries in expectation and, by standard concentration bounds, $\Alg'$ is $0.9999$-correct.\footnote{By a Chernoff bound, the actual value is at least $1 - e^{-200}$.}
Let $\Alg'_r$ denote the algorithm $\Alg'$ when we fix the randomness of the algorithm to be $r$, and let $\tau_r(\nu)$ denote the number of queries made by $\Alg'_r$ on input $\nu$. 
By the aforementioned guarantees, we have $\Pr_{r}[\Alg'_r(\nu) = L_\nu] \geq 0.9999$ and $\E_{r}[\tau_r(\nu)] \leq q'$ for all $\nu$.
Therefore, by Markov's inequality, we have
\begin{align*}
\Pr_r[\Pr_{\nu \sim \cD}[\Alg'_r(\nu) = L_\nu] \geq 0.999]
&= 1 - \Pr_r[\Pr_{\nu \sim \cD}[\Alg'_r(\nu) \ne L_\nu] > 0.001] \\
&\geq 1 - \frac{\E_r[\Pr_{\nu \sim \cD}[\Alg'_r(\nu) \ne L_\nu]]}{0.001} \\
&\geq 1 - \frac{0.0001}{0.001} = 0.9,
\end{align*}
and
\begin{align*}
\Pr_r[\E_{\nu \sim \cD}[\tau_r(\nu)] \leq 2q'] 
&= 1 - \Pr_r[\E_{\nu \sim \cD}[\tau_r(\nu)] > 2q'] \\
&\geq 1 - \frac{\E_{r,\, \nu \sim \cD}[\tau_r(\nu)]}{2q'} \\
&\geq 1 - \frac{q'}{2q'} = 0.5.
\end{align*}
Thus, there must exist $r^*$ such that $\Pr_{\nu \sim \cD}[\Alg'_{r^*}(\nu) = L_\nu] \geq 0.999$ and $\E_{\nu \sim \cD}[\tau_{r^*}(\nu)] \leq 2q'$.

Finally, let $\Alg''$ be the deterministic algorithm that runs $\Alg'_{r^*}$ except that it fails (or outputs an arbitrary allocation) whenever $\Alg'_{r^*}$ tries to make more than $2000q'$ queries. 
Clearly, $\Alg''$ uses at most $2000q' = O(q)$ queries in the worst case. 
Furthermore, 
\begin{align*}
\Pr_{\nu \sim \cD}[\Alg''(\nu) = L_{\nu}] 
&\geq \Pr_{\nu \sim \cD}[\Alg'_{r^*}(\nu) = L_{\nu}] - \Pr_{\nu \sim \cD}[\tau_{r^*}(\nu) > 2000q'] \\
&\geq 0.999 - 0.001 > 0.99,
\end{align*}
where the second inequality follows from our choice of $r^*$ and Markov's inequality.
This means that $\Alg''$ is $(0.99, \cD)$-correct. $\hfill \square$
\end{proof}

Given \Cref{lem:yao}, to prove \Cref{thm:noiseless-comparison-lower-nm}, the following lower bound against deterministic algorithms is sufficient.

\begin{proposition}
\label{prop:noiseless-comparison-lower-nm-det}
Under the noiseless comparison query model, there exists a distribution $\cD$ over instances such that any deterministic $(0.99, \cD)$-correct algorithm makes $\Omega(nm)$ queries in the worst case.
\end{proposition}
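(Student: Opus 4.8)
The plan is to take $\cD$ to be the distribution of uniformly random preferences (each agent an independent uniform permutation of $M$), so there are $(m!)^n$ equally likely instances, and to argue by contraposition. Specifically, I will assume toward a contradiction that a deterministic algorithm $\Alg$ is $(0.99,\cD)$-correct and makes at most $q\le nm/16$ queries on \emph{every} instance, and derive a contradiction; it then follows that every $(0.99,\cD)$-correct deterministic algorithm has worst-case query complexity more than $nm/16=\Omega(nm)$. The engine of the argument is a simple indistinguishability observation combined with a double-counting step that exploits \Cref{lem:count-alloc-items}.

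The local observation is as follows. Suppose $\Alg$ is correct on an instance $\nu$ and, while running on $\nu$, never issues a query mentioning agent $i$ together with item $j$, where $j\in\Jc_i(\nu)$. Let $\nu^{(i,j)}$ be the instance obtained from $\nu$ by moving item $j$ to the top of agent $i$'s ranking, leaving everything else unchanged. First, $L_{\nu^{(i,j)}}\ne L_\nu$: agents $1,\dots,i-1$ have unchanged rankings and hence make the same round-$1$ picks, and $j$ is not among them (since $j\in\Jc_i(\nu)$), so $j$ is still available when agent $i$ takes her first turn; as $j$ is now her overall favorite, she picks it, giving $j\in A_i(\nu^{(i,j)})$, whereas $j\notin A_i(\nu)$ because $j\in\Jc_i(\nu)$. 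Second, $\Alg(\nu^{(i,j)})=\Alg(\nu)$: since $\Alg$ is deterministic, an induction on the query count shows it issues the same queries and receives the same answers on both instances --- every query $\Alg$ makes on $\nu$ avoids the pair $(i,j)$, and all other pairs keep their order, so the answers agree. Hence $\Alg(\nu^{(i,j)})=\Alg(\nu)=L_\nu\ne L_{\nu^{(i,j)}}$, i.e., $\Alg$ errs on $\nu^{(i,j)}$.

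Next I would double-count the set $T$ of triples $(\nu,i,j)$ with $\Alg(\nu)=L_\nu$, $j\in\Jc_i(\nu)$, and no query of $\Alg$ on $\nu$ mentioning $(i,j)$. For the lower bound: on each of the $\ge 0.99(m!)^n$ instances where $\Alg$ is correct, the $\le q$ queries mention at most $2q$ distinct pairs $(i,j)$, while by \Cref{lem:count-alloc-items} there are $\ge nm/4$ pairs with $j\in\Jc_i(\nu)$; hence this instance yields $\ge nm/4-2q\ge nm/8$ triples, so $|T|\ge 0.99(m!)^n\cdot nm/8$. For the upper bound: by the local observation the map $(\nu,i,j)\mapsto\nu^{(i,j)}$ sends $T$ into the error set $B=\{\nu':\Alg(\nu')\ne L_{\nu'}\}$, and it is at most $nm$-to-one --- given the image $\nu'$, there are $n$ choices of $i$, then $j$ is forced to be the top item of agent $i$ in $\nu'$, and then $\nu$ is recovered by reinserting $j$ into one of the $m$ positions of agent $i$'s list. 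Thus $|T|\le nm\,|B|$. Combining, $0.99(m!)^n\cdot nm/8\le nm\,|B|$, so $|B|>0.12\,(m!)^n$, contradicting $(0.99,\cD)$-correctness.

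The main obstacle is precisely the tension resolved in the previous paragraph: the natural injection $\nu\mapsto\nu^{(i,j)}$ is only $nm$-to-one, so mapping a single bad instance to each good one would merely give $0.99\le 0.01\cdot nm$ and prove nothing. The fix is to count incidences $(\nu,i,j)$ rather than images, using the fact that a single correct instance already exhibits $\Theta(nm)$ untouched pairs $(i,j)$ with $j\in\Jc_i(\nu)$; this is exactly where \Cref{lem:count-alloc-items} is used, and it is what overcomes the $nm$-fold collision in the map. The remaining points are routine: checking that moving $j$ to the top really changes the allocation (immediate from the definition of $\Jc_i$) and confirming that the transcript is preserved for an adaptive but deterministic algorithm (equal answers force equal next queries, step by step). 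Given this proposition, \Cref{thm:noiseless-comparison-lower-nm} follows via \Cref{lem:yao}.
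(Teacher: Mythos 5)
Your proposal is correct and follows essentially the same route as the paper: the same distribution $\cD$ of uniformly random preferences, the same perturbation $\nu^{i,j}$ (move $j$ to the top of agent $i$'s list), the same indistinguishability observation (the paper's \Cref{lem:switch-to-max-incorrect}), and the same use of \Cref{lem:count-alloc-items}. The only difference is presentational: you aggregate via an explicit double count of triples $(\nu,i,j)$ with an at-most-$nm$-to-one multiplicity bound, whereas the paper aggregates by noting that for uniform $\nu,i,j$ the perturbed instance $\nu^{i,j}$ is again distributed as $\cD$ and taking expectations---these are the combinatorial and probabilistic faces of the same symmetry.
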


\begin{proof}
Let $\cD$ be the distribution based on uniformly random preferences.
Suppose for contradiction that there exists a deterministic algorithm $\Alg$ that is $(0.99, \cD)$-correct and makes at most $q := 0.01nm$ queries in the worst case.

For any instance $\nu$, agent $i \in N$, and item $j \in M$, let $q_{i, j}(\nu)$ be the indicator variable of whether the pair $(i,j)$ is involved in any query made by $\Alg$ when run on $\nu$.
Furthermore, let $\nu^{i,j}$ denote the instance that is the same as $\nu$ except that item~$j$ is made the most preferred item of agent~$i$.
To help with the proof of \Cref{prop:noiseless-comparison-lower-nm-det}, we will use the following lemma.

\begin{lemma} \label{lem:switch-to-max-incorrect}
For any algorithm $\Alg$, it holds that $q_{i, j}(\nu) = 1$ or $j \notin \Jc_i(\nu)$ or $\Alg(\nu) \ne L_\nu$ or $\Alg(\nu^{i, j}) \ne L_{\nu^{i, j}}$.
\end{lemma}

\begin{proof}
Suppose that $q_{i, j}(\nu) = 0$, i.e., the pair $(i,j)$ is not involved in a query by $\Alg(\nu)$. 
This implies that $\Alg(\nu) = \Alg(\nu^{i,j})$ because all comparison queries not involving $(i,j)$ result in the same answer in both $\nu$ and $\nu^{i, j}$. 
Observe that if $j \in \Jc_i(\nu)$, then $j$ is not allocated to $i$ in $L_{\nu}$ but is allocated to $i$ in $L_{\nu^{i, j}}$. 
Thus, $L_{\nu} \ne L_{\nu^{i, j}}$. 
It follows that at least one of $\Alg(\nu) \ne L_\nu$ and $\Alg(\nu^{i, j}) \ne L_{\nu^{i, j}}$ must hold.
$\hfill \square$
\end{proof}

Next, observe that if we pick $\nu \sim \cD$, $i \in N$, and $j \in M$ uniformly and independently at random, then $\nu^{i, j}$ has the same distribution as $\cD$, due to symmetry. 
Hence, picking $\nu, i, j$ in this way, we have
\begin{align*}
2\Pr_{\nu}[\Alg(\nu) \ne L_{\nu}] 
&= \Pr_{\nu}[\Alg(\nu) \ne L_{\nu}] + \Pr_{\nu, i, j}[\Alg(\nu^{i, j}) \ne L_{\nu^{i, j}}] \\
&= \E_{\nu, i, j}\left[\ind\left[\Alg(\nu) \ne L_{\nu}\right] + \ind\left[\Alg(\nu^{i, j}) \ne L_{\nu^{i, j}}\right]\right] \\
&\geq \E_{\nu, i, j}[1 - q_{i, j}(\nu) - \ind[j \notin \Jc_i(\nu)]] \\
&= \E_{\nu, i, j}\left[\ind[j \in \Jc_i(\nu)] - q_{i, j}(\nu)\right] \\
&\geq \frac{1}{nm}\E_{\nu}\left[\sum_{i \in N} |\Jc_i(\nu)|\right] - \frac{2q}{nm} 
\geq \frac{1}{4} - 0.02
> 0.2,
\end{align*}
where the first and third inequalities follow from \Cref{lem:switch-to-max-incorrect} and \Cref{lem:count-alloc-items}, respectively, and the factor of $2$ in the second inequality arises because each query $(i,j,j')$ can contribute to both $q_{i,j}(\nu)$ and $q_{i,j'}(\nu)$.
This contradicts our assumption that $\Alg$ is $(0.99, \cD)$-correct. $\hfill \square$
\end{proof}

The proof of an analogous bound for value queries is essentially the same.
Note that since $nm$ value queries are clearly sufficient, this bound cannot be improved.

\begin{corollary}
\label{cor:noiseless-value-lower-nm}
Under the noiseless value query model, any (possibly randomized) algorithm that outputs the round-robin allocation with probability at least $2/3$ makes $\Omega(nm)$ queries in expectation.
\end{corollary}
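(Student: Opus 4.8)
The plan is to follow the proof of \Cref{thm:noiseless-comparison-lower-nm} almost verbatim. Since \Cref{lem:yao} is stated for a generic query model, it applies to value queries unchanged, so it suffices to establish the value-query analogue of \Cref{prop:noiseless-comparison-lower-nm-det}: there is a distribution $\cD$ over value-instances on which every deterministic $(0.99,\cD)$-correct algorithm needs $\Omega(nm)$ queries. I would again take $\cD$ to be uniformly random preferences, realized in the value model by letting each agent's utility $u_i$ be an independent uniformly random bijection $M\to[m]$; the induced ordinal preferences are uniformly random, so $L_\nu$ and \Cref{lem:count-alloc-items} are exactly as before. Assume for contradiction a deterministic $(0.99,\cD)$-correct algorithm $\Alg$ making at most $q:=cnm$ value queries, for a small constant $c>0$ to be fixed.

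The one real change is the perturbation. ``Making $j$ the top item of agent $i$'' by a distribution-preserving single-coordinate change is impossible in the value model (afterwards $u_i(j)$ would be the maximum, hence not independent of the other coordinates). Instead I would define $\nu^{i,j}$ from $\nu$ by \emph{swapping}, within $u_i$, the value of $j$ with the value of the current top item $j^\ast:=u_i^{-1}(m)$. This still makes $j$ the top item of agent $i$; it changes only the two values $u_i(j)$ and $u_i(j^\ast)$; and, picking $\nu\sim\cD$ and $i,j$ uniformly, $\nu^{i,j}$ has marginal distribution exactly $\cD$ by a short symmetry check (the new utility of agent $i$ has a uniformly random top item $j$, and conditioned on it a uniformly random ranking of the remaining items). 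The analogue of \Cref{lem:switch-to-max-incorrect} then reads: if $\Alg(\nu)$ queries neither $(i,j)$ nor $(i,j^\ast)$, and $j\in\Jc_i(\nu)$, then $\Alg(\nu)=\Alg(\nu^{i,j})$ (the two runs coincide, as all other values agree and the algorithm is deterministic) while $L_\nu\ne L_{\nu^{i,j}}$ (item $j$ is unallocated to $i$ in $\nu$, but is picked by $i$ in round~$1$ of $\nu^{i,j}$, using that the round-$1$ picks of agents $1,\dots,i-1$ do not depend on $u_i$, and that $j^\ast\notin\Jc_i(\nu)$ forces $j\ne j^\ast$); hence $\Alg$ errs on at least one of the two instances.

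Averaging over $\nu\sim\cD$ and uniform $i,j$ precisely as in \Cref{prop:noiseless-comparison-lower-nm-det}, and using that $\nu^{i,j}$ has marginal $\cD$, I would get
\[
2\,\Pr_\nu[\Alg(\nu)\ne L_\nu]\ \ge\ \E_{\nu,i,j}\!\big[\ind[j\in\Jc_i(\nu)]-q_{i,j}(\nu)-q_{i,j^\ast}(\nu)\big],
\]
where $q_{i,j}(\nu)$ indicates whether $(i,j)$ is queried by $\Alg(\nu)$. The first expectation is $\ge 1/4$ by \Cref{lem:count-alloc-items}, and $\E_{\nu,i,j}[q_{i,j}(\nu)]\le q/(nm)=c$ since $\Alg$ makes at most $q$ queries. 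The term $\E_{\nu,i,j}[q_{i,j^\ast}(\nu)]=\E_{\nu,i}[q_{i,j^\ast(\nu,i)}(\nu)]$ has no counterpart in the comparison proof and is the main obstacle, since $j^\ast$ is instance-dependent and might be ``targeted''.

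To control it I would use a short stopping-time argument. Fix all agents other than $i$; then $u_i$ is a uniform bijection, and revealing its values in the order $\Alg$ queries them exposes a uniformly random permutation of $[m]$ one entry at a time. Writing $K_i$ for the number of queries $\Alg$ makes to agent $i$, the event that $\Alg$ ever queries the value-$m$ item is the event that this value appears within the first $K_i$ revealed entries; a union bound over the query index (the $k$-th term contributing at most $\frac{1}{m-k+1}\cdot\Pr[K_i\ge k]$), together with a crude Markov bound on the tail $K_i>m/2$, shows this probability is $O(\E[K_i]/m)$. Summing over $i$ gives $\E_{\nu,i}[q_{i,j^\ast}(\nu)]\le O\!\big(\tfrac{1}{nm}\E_\nu[\sum_i K_i(\nu)]\big)=O(q/(nm))=O(c)$. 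Plugging in, $2\Pr_\nu[\Alg(\nu)\ne L_\nu]\ge \tfrac14-O(c)$, which for $c$ a small enough constant contradicts $(0.99,\cD)$-correctness; hence $q=\Omega(nm)$. I expect this stopping-time estimate for the probability of hitting the maximum to be the only step needing genuine care — the rest is a transcription of the noiseless comparison-query argument, and the claimed $\Omega(nm)$ is best possible since $nm$ value queries suffice.
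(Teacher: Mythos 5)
Your proposal is correct, and it follows the same high-level route as the paper's argument for \Cref{thm:noiseless-comparison-lower-nm} (the paper itself only remarks that the value-query proof is ``essentially the same'' and gives no details). The one place where the transcription is genuinely non-trivial is exactly the one you flag: in the value model one cannot ``make $j$ the top item of agent~$i$'' by changing a single coordinate while keeping the perturbed instance distributed as $\cD$. Your fix---swapping $u_i(j)$ with $u_i(j^\ast)$ where $j^\ast$ is agent $i$'s current top item---does preserve the distribution (your counting/symmetry check is right, and the degenerate case $j=j^\ast$ is harmless since then $j\notin\Jc_i(\nu)$), at the cost of an extra term $\E_{\nu,i}[q_{i,j^\ast}(\nu)]$. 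Your deferred-decisions bound for that term is sound: conditioned on the history before the $k$-th fresh query to agent $i$, the revealed value is uniform over the $m-k+1$ unrevealed values, so the contribution of indices $k\le m/2$ is at most $\tfrac{2}{m}\E[K_i]$, and the tail is absorbed by $\Pr[K_i>m/2]\le \tfrac{2}{m}\E[K_i]$ (bounding the hitting probability by $1$ on that event, rather than summing $\tfrac{1}{m-k+1}$ over the tail, which would incur a spurious $\log m$). Summing over $i$ gives $O(q/(nm))$ and the contradiction goes through. For what it is worth, a slightly more literal transcription avoids this extra work: take $\cD$ to be i.i.d.\ continuous utilities and define $\nu^{i,j}$ by \emph{resampling} $u_i(j)$ conditioned on being agent $i$'s maximum; since for i.i.d.\ draws the argmax is uniform and the conditional law given the argmax is exactly this resampled law, $\nu^{i,j}\sim\cD$ when $j$ is uniform, only the single coordinate $(i,j)$ changes, and the analogue of \Cref{lem:switch-to-max-incorrect} needs no second query indicator. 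Either way the bound is $\Omega(nm)$, and as noted it is tight.
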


Next, we prove a bound of $\Omega(m\log m)$ for comparison queries---by \Cref{thm:noiseless-comparison-upper}, this bound is tight for constant~$n$.

\begin{theorem}
\label{thm:noiseless-comparison-lower-mlogm}
Under the noiseless comparison query model, any (possibly randomized) algorithm that outputs the round-robin allocation with probability at least $2/3$ makes $\Omega(m \log m)$ queries in expectation.
\end{theorem}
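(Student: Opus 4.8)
The plan is to follow the same route as the proof of \Cref{thm:noiseless-comparison-lower-nm}: by \Cref{lem:yao} it suffices to produce one distribution $\cD$ over instances such that every deterministic $(0.99,\cD)$-correct algorithm must make $\Omega(m\log m)$ queries in the worst case. I would take $\cD$ to be the instance in which all $n$ agents share a single preference order $\pi$ over $M$ drawn uniformly at random. Since the agents pick in the order $1,2,\dots,n,1,\dots$ and all rank items by $\pi$, agent $i$ ends up with exactly the items whose $\pi$-rank is congruent to $i$ modulo $n$; that is, the round-robin allocation is precisely the partition of $M$ into the $n$ residue classes of the $\pi$-ranks. Every comparison query (to any agent) reveals only a comparison of two items under $\pi$, so after any set of answered queries the algorithm knows exactly the poset $P$ that is their transitive closure, and the allocation is determined if and only if $P$ is a total order. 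This is where the ``$\Omega(m\log m)$'' must come from: no bound based on the entropy of the output can possibly work, since the allocation carries only $O(m\log n)$ bits.

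The heart of the argument is the following poset statement, playing a ``$1/3$–$2/3$''-type role for our quantity: for every $n\ge 2$ and every poset $P$ on $M$ that is \emph{not} a total order, the most popular residue-class partition among the linear extensions of $P$ is shared by at most half of them, i.e.\ $\max_{A}\Pr_{\pi\in\mathrm{LinExt}(P)}[\mathrm{part}(\pi)=A]\le \tfrac12$. I would prove this by a swapping injection. Since $P$ is not total, \emph{every} linear extension $\pi=(y_1,\dots,y_m)$ of $P$ must contain two consecutive items $y_i,y_{i+1}$ that are incomparable in $P$ (otherwise all consecutive pairs of $\pi$ are $P$-comparable, and transitivity forces $P$ to equal the total order $\pi$). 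Let $\psi(\pi)$ be the linear extension obtained from $\pi$ by swapping the \emph{leftmost} such adjacent incomparable pair. Swapping items at positions $i$ and $i+1$ changes the residue class of both of them (as $i\not\equiv i+1\Mod{n}$ for $n\ge2$), so $\mathrm{part}(\psi(\pi))\ne\mathrm{part}(\pi)$; and a short case analysis on the positions of the two leftmost incomparable pairs shows that if $\psi(\pi_1)=\psi(\pi_2)$ with $\pi_1\ne\pi_2$ then $\pi_1,\pi_2$ cannot have the same partition (the argument again uses $n\ge 2$). Hence $\psi$ injects each partition class into its complement, which gives the bound $\tfrac12$.

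Given this, the counting is immediate. Fix a deterministic $(0.99,\cD)$-correct algorithm using at most $q$ queries; its comparison-decision tree has at most $2^q$ leaves, and the sets of permutations reaching the various leaves partition the $m!$ permutations, the set reaching leaf $\lambda$ being exactly $\mathrm{LinExt}(P_\lambda)$ for the path-poset $P_\lambda$. Counting correct inputs leaf by leaf: a leaf whose poset $P_\lambda$ is total is reached by a single $\pi$ and contributes at most $1$, while a leaf with $P_\lambda$ not total contributes at most $\tfrac12\,|\mathrm{LinExt}(P_\lambda)|$ by the poset statement. Summing, the number of correct permutations is at most $2^q+\tfrac12 m!$, and since it must be at least $0.99\,m!$ we get $2^q\ge 0.49\,m!$, i.e.\ $q=\Omega(m\log m)$. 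Feeding this back through \Cref{lem:yao} proves the theorem.

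The main obstacle is the poset statement itself: one has to rule out the a priori possibility that a poset $P$ far from total nonetheless concentrates almost all of its linear extensions on one residue-class partition. The swapping-injection idea resolves it, and the one genuinely delicate point is verifying injectivity of the ``leftmost adjacent incomparable pair'' map on a partition class — one must check that two distinct linear extensions cannot both be recovered by performing a single adjacent swap at their respective leftmost incomparable pairs, which requires tracking how the leftmost incomparable position moves under such a swap.
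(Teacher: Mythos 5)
Your proposal is correct, and it follows the paper's overall skeleton (the reduction via \Cref{lem:yao}, the distribution of identical preferences given by a uniformly random permutation, and the decision-tree leaf-counting at the end), but it replaces the paper's key anti-concentration lemma with a genuinely different and more elementary argument. The paper's \Cref{lem:not-sort-incorrect} is proved by invoking Stanley's unimodality theorem for linear extensions and the Kahn--Saks balancing theorem to find a single item whose position modulo $n$ is not too concentrated, yielding $\Pr_{\sigma\sim\Xc(R)}[L_\sigma=A]\le 41/44$ whenever $|\Xc(R)|>1$. Your adjacent-transposition injection proves the stronger bound $\Pr_{\sigma\sim\Xc(R)}[L_\sigma=A]\le 1/2$ with no external machinery: every linear extension of a non-total poset has an adjacent incomparable pair; swapping it moves both items to different residue classes mod $n$ (using $n\ge2$), so the partition changes; and if $\psi(\pi_1)=\psi(\pi_2)=\sigma$ with the swaps at distinct positions $i<j$, then the item $\sigma_i$ sits at position $i+1$ in $\pi_1$ but at position $i$ in $\pi_2$, so $\pi_1,\pi_2$ already have different partitions. (Note that this last step does not actually need the ``leftmost'' convention or any tracking of how the leftmost incomparable position moves---any fixed well-defined choice of adjacent incomparable pair yields a map that is injective on each partition class---so the point you flag as delicate is in fact immediate.) What the paper's route buys is a structural statement about the distribution of a single item's rank that is reused nowhere else here; what your route buys is a self-contained proof with a better constant. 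Both give the same $\Omega(m\log m)$ asymptotics.
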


Given \Cref{lem:yao}, to prove \Cref{thm:noiseless-comparison-lower-mlogm}, it suffices to show the following bound against deterministic algorithms.

\begin{proposition}
\label{prop:noiseless-comparison-lower-mlogm-det}
Under the noiseless comparison query model, there is a distribution $\cD$ over identical-preference instances such that any deterministic $(0.99, \cD)$-correct algorithm makes $\Omega(m \log m)$ queries in the worst case.
\end{proposition}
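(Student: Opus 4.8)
The plan is to take $\cD$ to be \emph{uniformly random identical preferences}: draw one ranking $\sigma$ of the $m$ items uniformly at random and have every agent rank the items by $\sigma$. For such an instance the round-robin allocation is determined by, and determines, the map $\chi_\sigma$ sending each item $x$ to the agent who receives it, namely $\chi_\sigma(x) = ((r_\sigma(x)-1)\bmod n)+1$ where $r_\sigma(x)$ is the position of $x$ in $\sigma$ (rank $1$ being most preferred); I will call $\chi_\sigma$ the \emph{coloring}. The only property of colorings I need is that two items with consecutive $\sigma$-ranks get distinct colors, which holds precisely because $n\ge 2$. Now model a deterministic algorithm using at most $d$ comparison queries in the worst case as a binary comparison tree of depth at most $d$, so it has at most $2^d$ leaves. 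A ranking $\sigma$ reaches leaf $\ell$ exactly when it is a linear extension of the poset $P_\ell$ given by the transitive closure of the comparison answers on the root-to-$\ell$ path; hence the sets $\mathrm{LinExt}(P_\ell)$ partition the $m!$ rankings, and each leaf outputs one fixed allocation $c_\ell$. Correctness on $\sigma$ requires $c_{\ell(\sigma)}=\chi_\sigma$, so the success probability under $\cD$ is at most $\frac{1}{m!}\sum_\ell \bigl|\{\sigma\in\mathrm{LinExt}(P_\ell):\chi_\sigma=c_\ell\}\bigr| \le \frac1{m!}\sum_\ell g(P_\ell)\,e(P_\ell)$, where $e(P)$ is the number of linear extensions of $P$ and $g(P)$ is the maximum over colorings $c$ of the fraction of linear extensions $\tau$ of $P$ with $\chi_\tau=c$.

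The crux is the following claim about posets and linear extensions: \emph{if $g(P)\ge 2/3$ then $P$ is a total order}, so $e(P)=1$. To see this, let $c^*$ be a coloring achieving $g(P)$ and $L^*=\{\tau\in\mathrm{LinExt}(P):\chi_\tau=c^*\}$, so $|\mathrm{LinExt}(P)\setminus L^*|\le\tfrac12|L^*|$. For $\tau\in L^*$ and a position $k$ with $\tau(k)$ and $\tau(k+1)$ incomparable in $P$, swapping this adjacent pair yields another linear extension $\tau'$ of $P$ (a standard fact). Since $\tau$ and $\tau'$ differ only at positions $k,k+1$ and those positions carry different colors, $\chi_{\tau'}$ disagrees with $c^*$ on exactly the two swapped items, so $\tau'\notin L^*$; moreover $\tau'$ reveals $\{k,k+1\}$ as the positions of the items on which $\chi_{\tau'}$ differs from $c^*$, and therefore reveals $(\tau,k)$. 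Hence $(\tau,k)\mapsto\tau'$ is an injection from $\{(\tau,k):\tau\in L^*,\ \tau(k)\parallel_P\tau(k+1)\}$ into $\mathrm{LinExt}(P)\setminus L^*$, so the number of incomparable adjacencies summed over all $\tau\in L^*$ is at most $|\mathrm{LinExt}(P)\setminus L^*|\le\tfrac12|L^*|<|L^*|$. Thus some $\tau^*\in L^*$ has no incomparable adjacency, which forces $\tau^*(1)>_P\tau^*(2)>_P\cdots>_P\tau^*(m)$ and hence $P=\tau^*$.

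Combining everything: by the claim, every leaf $\ell$ with $e(P_\ell)\ge 2$ has $g(P_\ell)<2/3$ and thus contributes at most $\tfrac23 e(P_\ell)$ to the success count, while a leaf with $e(P_\ell)=1$ contributes at most $1$; therefore a $(0.99,\cD)$-correct deterministic algorithm obeys $0.99\,m!\le \#\{\ell:e(P_\ell)=1\}+\tfrac23 m!$, so $\#\{\ell:e(P_\ell)=1\}\ge\tfrac14 m!$. Since there are at most $2^d$ leaves, $2^d\ge\tfrac14 m!$, giving $d=\Omega(\log(m!))=\Omega(m\log m)$, as desired. I expect the poset claim to be the main obstacle: pure counting of the number of possible allocations only yields an $\Omega(m)$ bound, and the extra $\log m$ factor comes entirely from the fact that certifying a coloring by comparisons essentially forces a total order, which is what the adjacent-transposition injection establishes; the delicate points there are that consecutive ranks have different colors (needing $n\ge 2$) and that the disagreement set of $\chi_{\tau'}$ with $c^*$ exactly pins down the swapped positions.
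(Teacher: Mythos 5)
Your proof is correct, and while it shares the paper's overall skeleton (the same distribution $\cD$ of uniformly random identical preferences, the decision-tree view of a deterministic algorithm, and the final count showing that at least a constant fraction of the $m!$ orderings must each land in their own fully-determined leaf), it replaces the paper's key anti-concentration lemma by a genuinely different and more elementary argument. The paper proves that whenever a leaf's comparison results leave more than one compatible ordering, any fixed allocation is wrong with probability at least $3/44$; this is derived from Stanley's unimodality theorem for linear extensions together with the Kahn--Saks balancing theorem, i.e., from fairly heavy external machinery. You instead prove the analogous statement (any fixed allocation is wrong with probability more than $1/3$, and in fact your averaging argument already works with threshold $1/2$) by a self-contained adjacent-transposition injection: swapping an incomparable adjacent pair in a linear extension changes the colors of exactly those two items (here $n\ge 2$ is essential), the resulting extension pins down the swapped positions via its disagreement set with the target coloring, and an averaging step then produces an extension with no incomparable adjacency, forcing the poset to be a total order. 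Your route buys a stronger constant and removes the dependence on the posets/linear-extensions literature; the paper's route, while heavier, establishes the finer structural facts (unimodality of rank distributions, existence of a balanced comparison) that are of independent interest and that the authors explicitly highlight as part of their contribution. One cosmetic remark: your notation briefly conflates $\tau(k)$ (item at position $k$) with the paper's $\sigma(j)$ (rank of item $j$), but the argument is unambiguous.
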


Since the proof of \Cref{prop:noiseless-comparison-lower-mlogm-det} only uses identical preferences, we do not distinguish between queries for different agents and view each comparison simply as a tuple $(j, j')$ of items. 
Further, we represent an identical-preference instance $\nu$ by a permutation $\sigma: [m] \to [m]$, where item $j$ is preferred to $j'$ (by all agents) exactly when $\sigma(j) > \sigma(j')$. 
Let $L_\sigma$ be the correct round-robin allocation when the instance is $\sigma$.
Let $\cR$ denote the complete set of comparison results, i.e., $\cR = \{(j, j', r) \mid 1 \leq j < j' \leq m,\, r \in \{0, 1\}\}$, where $(j, j', 1)$ means that $j$ is preferred to $j'$ and $(j, j', 0)$ indicates the opposite preference. 
For any set $R \subseteq \cR$ of comparison query results, let $\Xc(R)$ be the set of all permutations on $[m]$ that are compatible with $R$. 
We write $\sigma \sim \Xc(R)$ to signify a permutation drawn uniformly at random from $\Xc(R)$. 
Notice that $|\Xc(R)| = 1$ if and only if the comparison results in $R$ completely determine the ordering of items. 
Our main lemma is that, unless this is the case, we cannot find an allocation that agrees with almost all permutations in $\Xc(R)$.

\begin{lemma} \label{lem:not-sort-incorrect}
For any $R \subseteq \cR$, if $|\Xc(R)| > 1$, then for any allocation $A$, we have $\Pr_{\sigma \sim \Xc(R)}[L_\sigma \ne A] \geq 3/44$.
\end{lemma}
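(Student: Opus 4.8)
The plan is to exploit the fact that when $|\Xc(R)| > 1$, there must be two items whose relative order is not yet determined, and then to localize this ambiguity to a single turn of the round-robin procedure so that flipping the order of these two items changes who gets one of them. Concretely, I would first argue that there exist items $a, b$ and a permutation $\sigma \in \Xc(R)$ such that $a$ and $b$ are \emph{adjacent} in $\sigma$ (i.e., $\sigma(a) = \sigma(b) + 1$) and $R$ does not contain the comparison between $a$ and $b$: take any $\sigma \in \Xc(R)$, and among all pairs whose order is undetermined by $R$, pick one that is closest together in $\sigma$; a standard exchange argument shows this closest pair must be adjacent (if items $c,d$ lie strictly between them in rank, then at least one of the pairs $(a,c)$ or $(c,b)$ is also undetermined and is closer, a contradiction). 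Swapping $a$ and $b$ in $\sigma$ yields another permutation $\sigma' \in \Xc(R)$, since the only comparison that changes is the undetermined one between $a$ and $b$.

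Next I would understand how $L_\sigma$ and $L_{\sigma'}$ differ. Since $a$ and $b$ occupy consecutive ranks, and the round-robin procedure (with identical preferences) simply hands out the items in decreasing rank order cyclically, the item of higher rank among $\{a,b\}$ goes to some agent $i$ and the item of lower rank goes either to the same agent $i$ (in the next round) or to agent $i+1$ in the same round, depending on the position modulo $n$. If they go to different agents, then $L_\sigma \ne L_{\sigma'}$ outright: in $L_\sigma$ agent $i$ gets $a$ and agent $i{+}1$ gets $b$ (say), whereas in $L_{\sigma'}$ agent $i$ gets $b$ and agent $i{+}1$ gets $a$. If instead both items go to the same agent in consecutive rounds, the allocation is unchanged by the swap, which is the troublesome case. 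To handle it, I would not insist on an \emph{adjacent} undetermined pair but rather show that among all undetermined pairs there is one that straddles a turn boundary — or, more robustly, build a short chain: starting from $\sigma$, perform a sequence of adjacent transpositions, each swapping an undetermined pair, walking the lower item of an undetermined pair leftward until it lands in a different agent's slot; each step stays inside $\Xc(R)$, and at the end we reach some $\tau \in \Xc(R)$ with $L_\tau \ne L_\sigma$.

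Finally, I would convert "there exist two permutations in $\Xc(R)$ with different allocations" into the quantitative bound $\Pr_{\sigma \sim \Xc(R)}[L_\sigma \ne A] \geq 3/44$. The point is that $\Xc(R)$, being the set of linear extensions of the partial order defined by $R$, is highly symmetric: the map swapping $a$ and $b$ (when they are an undetermined adjacent pair) is a bijection of $\Xc(R)$, so the permutations realizing each of the two distinct allocations come in matched pairs, giving each allocation probability mass at least $|\Xc(R)|/(2\cdot(\text{number of distinct allocations}))$. More carefully, I expect the constant $3/44$ to come from a counting argument on linear extensions of a poset with one undetermined adjacent pair: conditioning on everything except the order of $a$ and $b$ partitions $\Xc(R)$ into pairs, and within each pair the two allocations (if distinct) split the mass evenly; one then needs a lower bound on the fraction of pairs that actually give distinct allocations, which is where the "straddles a turn boundary" combinatorics of the previous paragraph re-enters. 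The main obstacle is precisely this last ingredient: showing that a constant fraction (at least $3/22$, to yield $3/44$ after the halving) of the undetermined adjacent pairs — across the whole support $\Xc(R)$ — actually cross a round-robin turn boundary, so that the induced allocation genuinely changes. I anticipate this requires a clean averaging over which ranks the items $a,b$ occupy, using that every item sits in some turn and the turn boundaries are spaced $n$ apart, combined with the fact that $|\Xc(R)|>1$ forces enough "movable" adjacent pairs to exist.
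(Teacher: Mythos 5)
Your plan correctly identifies the qualitative fact (when $|\Xc(R)|>1$ the allocation is not determined by $R$), but the quantitative step --- turning this into a \emph{constant} lower bound on $\Pr_{\sigma\sim\Xc(R)}[L_\sigma\ne A]$ --- is exactly where the proposal has a genuine gap, and the pairing argument you sketch cannot close it. The swap of an undetermined pair $(a,b)$ is only defined on the subset of $\Xc(R)$ in which $a$ and $b$ occupy adjacent ranks; on the rest of $\Xc(R)$ the transposed permutation need not be a linear extension at all. That subset can be an arbitrarily small fraction of $\Xc(R)$ (think of a poset consisting of a long chain plus one element incomparable to everything: for most linear extensions a given incomparable pair is far apart). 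So ``matched pairs splitting the mass evenly'' only bounds the error probability by half the measure of the adjacency event, which is not bounded below by any constant. More fundamentally, for a \emph{fixed} incomparable pair the quantity $\Pr[\sigma(a)>\sigma(b)]$ can itself be arbitrarily close to $0$ or $1$; the existence of even one pair with this probability bounded away from $0$ and $1$ is a nontrivial theorem (Kahn--Saks), and no elementary exchange argument is known to replace it. (Incidentally, the case you flag as troublesome --- an adjacent swap leaving the allocation unchanged --- cannot occur: with identical preferences the agent receiving an item is determined by its rank modulo $n$, and consecutive ranks give different residues since $n\ge 2$. The real difficulty is not there.)

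The paper's proof takes a different and necessarily heavier route: it invokes the Kahn--Saks balancing theorem to produce an item $j$ whose position $\sigma(j)$ is anti-concentrated (no single value has probability more than $19/22$), and then uses Stanley's unimodality theorem for the distribution of $\sigma(j)$ to upgrade this to anti-concentration of $\sigma(j)\bmod n$ (no residue class has probability more than $41/44$). Since the agent receiving $j$ under $L_\sigma$ is a function of $\sigma(j)\bmod n$, whichever agent $A$ assigns $j$ to is wrong with probability at least $3/44$. Note that the unimodality step is not cosmetic: a distribution with small maximum atom can still be entirely supported on one residue class mod $n$ unless it is unimodal. To repair your proposal you would need to import both of these poset-theoretic results, at which point you would essentially be reproducing the paper's argument; the averaging over ``which ranks $a,b$ occupy'' that you anticipate needing is precisely what these theorems supply and what the elementary combinatorics does not.
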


The proof of \Cref{lem:not-sort-incorrect} involves showing that we can find an item $j \in M$ such that, for a random $\sigma \sim \Xc(R)$, the value $\sigma(j)$ is sufficiently random. 
We show this by leveraging results from the theory of posets and linear extensions~\cite{Kahn1984BalancingPE,Stanley81}. 
A sequence $a_1, \dots, a_m$ is called \emph{unimodal} if there exists $\ell \in [m]$ such that $a_1 \leq \dots \leq a_\ell$ and $a_\ell \geq a_{\ell + 1} \geq \cdots \geq a_m$. 
We will use the following fundamental result shown by Stanley~\cite{Stanley81}  via a deep connection to mixed volume in geometry.\footnote{Note that Stanley~\cite{Stanley81} and subsequent authors stated their results in terms of \emph{posets} and \emph{linear extensions}, but these correspond precisely to the set $R$ of comparison results and a permutation in $\Xc(R)$, respectively.}

\begin{theorem}[\cite{Stanley81}] \label{thm:unimodal}
For any $R \subseteq \cR$ and any item $j$, the sequence $a_1, \dots, a_m$ defined by $a_k = \Pr_{\sigma \sim \Xc(R)}[\sigma(j) = k]$ for $k\in[m]$ is unimodal.
\end{theorem}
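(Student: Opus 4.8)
\medskip
\noindent\textbf{Proof proposal.}
I would first reduce to a statement about linear extensions. We may assume $R$ is consistent (otherwise $\Xc(R)=\emptyset$ and there is nothing to prove), so $R$ together with its transitive consequences defines a partial order $P$ on $[m]$ for which $\Xc(R)$ is precisely the set of linear extensions of $P$. Writing $A_k$ for the number of linear extensions $\sigma$ of $P$ with $\sigma(j)=k$, we have $a_k=A_k/|\Xc(R)|$, so unimodality of $(a_k)$ is equivalent to unimodality of $(A_1,\dots,A_m)$. I plan to obtain the latter from two facts: \textbf{(i)} $A_k^2\ge A_{k-1}A_{k+1}$ for $2\le k\le m-1$ (log-concavity; the endpoints are vacuous under $A_0=A_{m+1}=0$), and \textbf{(ii)} $\{k:A_k>0\}$ is an interval. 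These two together give unimodality, since on the support $A_k/A_{k-1}$ is well-defined by (ii) and non-increasing by (i), so the sequence rises and then falls.

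The heart is (i), which I would prove by Stanley's encoding of the $A_k$ as mixed volumes. Set $P':=P\setminus\{j\}$ (on the $m-1$ items other than $j$), $D:=\{x:x<_P j\}$, $U:=\{x:x>_P j\}$, noting each element of $D$ lies below each element of $U$ in $P'$. For $\alpha,\beta>0$ let $\Pi(\alpha,\beta)\subseteq\mathbb{R}^{m-1}$ be the polytope with coordinates $(y_x)$ indexed by the items $\ne j$, cut out by $0\le y_x\le\alpha+\beta$, $y_x\le y_{x'}$ whenever $x<_{P'}x'$, $y_x\le\alpha$ for $x\in D$, and $y_x\ge\alpha$ for $x\in U$. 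I would establish two properties of this family. First, the volume formula
\[
\mathrm{Vol}_{m-1}\bigl(\Pi(\alpha,\beta)\bigr)=\sum_{k=1}^{m}\frac{A_k}{(k-1)!\,(m-k)!}\,\alpha^{k-1}\beta^{m-k},
\]
proved by decomposing $\Pi(\alpha,\beta)$ first by the linear extension $\tau$ of $P'$ induced by a generic point and then by the number $t$ of coordinates that are $\le\alpha$: the cell for $(\tau,t)$ is nonempty exactly when the first $t$ positions of $\tau$ contain all of $D$ and none of $U$, in which case it is a product of two scaled order simplices of volumes $\alpha^{t}/t!$ and $\beta^{m-1-t}/(m-1-t)!$, and summing over $\tau$ turns the number of admissible $\tau$ for a fixed $t$ into $A_{t+1}$. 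Second, the Minkowski decomposition $\Pi(\alpha,\beta)=\alpha A+\beta B$ with $A:=\Pi(1,0)$, $B:=\Pi(0,1)$: the inclusion ``$\supseteq$'' is immediate from the defining inequalities, and for ``$\subseteq$'' one writes $p\in\Pi(\alpha,\beta)$ as $\alpha a+\beta b$ with $a_x:=\min(1,p_x/\alpha)$, $b_x:=\max(0,(p_x-\alpha)/\beta)$, using monotonicity of $\min$ and $\max$ to check that $a\in A$, $b\in B$.

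Granting these, I would expand the volume of the Minkowski sum by Minkowski's theorem, $\mathrm{Vol}(\alpha A+\beta B)=\sum_{i=0}^{m-1}\binom{m-1}{i}V\bigl(A^{[i]},B^{[m-1-i]}\bigr)\alpha^i\beta^{m-1-i}$ with $V(\cdot)$ the mixed volume, and match coefficients against the volume formula to get $V\bigl(A^{[k-1]},B^{[m-k]}\bigr)=A_k/(m-1)!$. Applying the Alexandrov--Fenchel inequality to the list of $m-1$ bodies consisting of one $A$, one $B$, and $k-2$ further copies of $A$ and $m-1-k$ further copies of $B$ yields $V\bigl(A^{[k-1]},B^{[m-k]}\bigr)^2\ge V\bigl(A^{[k]},B^{[m-k-1]}\bigr)\,V\bigl(A^{[k-2]},B^{[m-k+1]}\bigr)$, which is exactly (i) after cancelling $1/(m-1)!^2$. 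For (ii): $A_k>0$ iff $P'$ has an order ideal of size $k-1$ containing $D$ and disjoint from $U$, and such ideals are precisely those sandwiched between the down-closure of $D$ and the complement of the up-closure of $U$ (both of which are ideals of $P'$, the former inside the latter); since the ideals of any finite poset realize every size from $0$ to the whole poset, the admissible sizes—hence $\{k:A_k>0\}$—form an interval. That completes the plan.

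The deep input, invoked as a black box, is the Alexandrov--Fenchel inequality from convex geometry (equivalently, the very statement we are proving): there is no softer route, since log-concavity can genuinely fail for sequences obeying only the combinatorial constraints at hand, and even unimodality can fail once internal zeros are allowed—which is why (ii) is a necessary companion to (i). The step I expect to require the most care is choosing $\Pi(\alpha,\beta)$ so that it simultaneously has the two needed properties: that its $(m-1)$-dimensional volume is the desired polynomial in $\alpha,\beta$, \emph{and} that it splits as the Minkowski combination $\alpha A+\beta B$. The ``count coordinates $\le\alpha$'' decomposition is what makes the first transparent, and the explicit coordinatewise splitting $p_x=\alpha\min(1,p_x/\alpha)+\beta\max(0,(p_x-\alpha)/\beta)$ is what makes the second go through.
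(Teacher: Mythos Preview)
The paper does not prove this theorem at all: it is stated as a cited result of Stanley, with only the remark that it was ``shown by Stanley via a deep connection to mixed volume in geometry'' (and a footnote noting that Stanley in fact proved log-concavity). There is therefore no paper-internal proof to compare against.

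Your proposal is a correct and fairly detailed reconstruction of Stanley's original argument, and it matches exactly what the paper is gesturing at. The reduction to log-concavity of the counts $A_k$ plus contiguity of the support, the polytope $\Pi(\alpha,\beta)$ whose volume has the $A_k$ as coefficients, the Minkowski splitting $\Pi(\alpha,\beta)=\alpha A+\beta B$, the identification of $A_k/(m-1)!$ with a mixed volume, and the appeal to Alexandrov--Fenchel---this is precisely Stanley's route. Your explicit coordinatewise splitting $p_x=\alpha\min(1,p_x/\alpha)+\beta\max(0,(p_x-\alpha)/\beta)$ is a clean way to verify the Minkowski decomposition. One tiny quibble: you introduce $\Pi(\alpha,\beta)$ for $\alpha,\beta>0$ but then write $A=\Pi(1,0)$ and $B=\Pi(0,1)$; the defining inequalities make perfect sense at the boundary, so this is only a notational hiccup, but you might just say ``for $\alpha,\beta\ge 0$'' from the outset.
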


Using (a stronger version of) \Cref{thm:unimodal},\footnote{Namely, Stanley~\cite{Stanley81} also showed that the sequence is \emph{log-concave}. 
We do not state this property as it is not required for our purposes.} Kahn and Saks~\cite{Kahn1984BalancingPE} proved the following result implying that as long as the item order is not yet completely determined by the comparison results so far, there exist two items $j_1, j_2$ such that if we query these two items, then we are guaranteed to reduce the size of $\Xc(R)$ by at least a constant factor. 
The constant factor obtained by Kahn and Saks~\cite{Kahn1984BalancingPE} was $8/11$; it was improved to roughly $0.724$ by Brightwell et al.~\cite{Brightwell1995}.\footnote{The famous \emph{$\frac{1}{3}$-$\frac{2}{3}$ conjecture} states that this constant can be improved to $2/3$, which would be tight~\cite{Kislitsyn1968AFP}.}

\begin{theorem}[\cite{Kahn1984BalancingPE}] \label{thm:balancing-pe}
For any $R \subseteq \cR$, if $|\Xc(R)| > 1$, then there exist $j_1, j_2 \in [m]$ such that
$3/11 \leq \Pr_{\sigma\sim\Xc(R)}[\sigma(j_1) > \sigma(j_2)] \leq 8/11$.
\end{theorem}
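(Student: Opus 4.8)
\emph{Proof plan.} This is the Kahn--Saks balancing theorem, and the plan is to derive it from the log-concave strengthening of \Cref{thm:unimodal} mentioned in the footnote above. Since $\Pr_{\sigma\sim\Xc(R)}[\sigma(j_1)>\sigma(j_2)] = 1-\Pr_{\sigma\sim\Xc(R)}[\sigma(j_2)>\sigma(j_1)]$, it suffices to produce an \emph{incomparable} pair $\{j_1,j_2\}$ — one for which neither order is forced by $R$ — with $\Pr_{\sigma\sim\Xc(R)}[\sigma(j_1)>\sigma(j_2)]\le 8/11$, since swapping $j_1$ and $j_2$ then supplies the matching lower bound $3/11$. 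Because $|\Xc(R)|>1$ the order is not total, so incomparable pairs exist; and if two incomparable items both had $R$-determined ranks, say $k_1<k_2$, the first would precede the second in every extension, a contradiction. Hence at least one element has undetermined rank.

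First I would fix such an element. Call $j$ \emph{loose} if $a_k(j):=\Pr_{\sigma\sim\Xc(R)}[\sigma(j)=k]<1$ for every $k$, and fix a loose item $j_1$. By the log-concave form of Stanley's theorem (\cite{Stanley81}; cf.\ \Cref{thm:unimodal}), the sequence $\bigl(a_k(j_1)\bigr)_{k\in[m]}$ is log-concave, hence unimodal, and since $j_1$ is loose its support is a block $\{s,\dots,t\}$ with $s<t$.

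The core of the argument is a conditioning (``slot-insertion'') decomposition of the random extension. Deleting $j_1$ from a uniform $\sigma\sim\Xc(R)$ yields a random linear extension $\tau$ of $R$ restricted to $[m]\setminus\{j_1\}$; conditioned on $\tau$, the ranks into which $j_1$ may legally be reinserted form a contiguous block (after the largest item below $j_1$ and before the smallest item above $j_1$ in $\tau$), and $\sigma(j_1)$ is uniform on that block. Thus $\bigl(a_k(j_1)\bigr)_k$ is a $\tau$-mixture of uniform distributions on intervals — which makes unimodality automatic but makes log-concavity genuinely a consequence of Stanley's mixed-volume inequality. For a candidate $j_2$ incomparable to $j_1$, the event $\{\sigma(j_1)>\sigma(j_2)\}$ is precisely the event that $j_1$ lands past the $\tau$-position of $j_2$, so $\Pr[\sigma(j_1)>\sigma(j_2)]$ is an average over $\tau$ of tail fractions of these blocks. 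I would then take $j_2$ among the natural candidates — items that can occupy the rank immediately before or immediately after $j_1$ — and argue that, since these tail-fraction averages are all governed by the single log-concave profile $\bigl(a_k(j_1)\bigr)_k$ supported on at least two ranks, they cannot all be extreme. Reducing to the worst configuration then leaves a self-contained extremal problem over log-concave sequences whose optimum is exactly $8/11$ (equivalently $3/11$ after the swap).

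The hardest part will be this last step. The slot decomposition must be arranged so that a \emph{single fixed} incomparable item $j_2$ — rather than one that silently depends on $\tau$ — plays the required role, and one must verify that this $j_2$ is genuinely incomparable to $j_1$ and not merely adjacent to it in the Hasse order; after that, extracting the precise constants $3/11$ and $8/11$ from the extremal optimization over log-concave sequences is a delicate calculation. Stanley's geometric input (\Cref{thm:unimodal} and its log-concave refinement) is indispensable here: without log-concavity of $\bigl(a_k(j_1)\bigr)_k$, the relevant probabilities could be forced arbitrarily close to $0$ or $1$, so no universal constant gap could survive. For the full technical argument I would follow Kahn and Saks~\cite{Kahn1984BalancingPE}.
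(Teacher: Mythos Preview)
The paper does not prove this statement; it is quoted as a known result of Kahn and Saks~\cite{Kahn1984BalancingPE} and used as a black box, so there is no in-paper proof to compare your proposal against. Your plan is a reasonable high-level sketch of the Kahn--Saks argument and correctly isolates the two essential inputs---Stanley's log-concavity (the strengthening of \Cref{thm:unimodal} noted in the footnote) and an extremal optimization over log-concave sequences that pins down the constant $8/11$---and you rightly flag the delicate steps (fixing a single $j_2$ that does not depend on the conditioning variable $\tau$, and the sharp constant computation). One point of drift: the actual Kahn--Saks argument works with the log-concave sequence counting linear extensions with a prescribed value of $\sigma(j_2)-\sigma(j_1)$ for a suitably chosen pair, rather than with the single-element rank profile $\bigl(a_k(j_1)\bigr)_k$ that your slot-insertion picture emphasizes; the extremal analysis lives on that two-element object. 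Since the paper itself simply cites the theorem, your decision to defer to~\cite{Kahn1984BalancingPE} for the full details is exactly what the paper does too.
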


From these two theorems, we can derive the following lemmas.

\begin{lemma} \label{lem:anti-concen-single-coord}
For any $R \subseteq \cR$, if $|\Xc(R)| > 1$, then there exists $j \in [m]$ such that $\max_{k \in [m]} \Pr_{\sigma\sim\Xc(R)}[\sigma(j) = k] \leq 19/22$.
\end{lemma}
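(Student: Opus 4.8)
The plan is to derive Lemma~\ref{lem:anti-concen-single-coord} from Theorem~\ref{thm:balancing-pe} by a contrapositive argument: if \emph{every} item $j$ had some coordinate value $k=k(j)$ with $\Pr_{\sigma\sim\Xc(R)}[\sigma(j)=k] > 19/22$, I would show this forces the preference order to be essentially determined, contradicting $|\Xc(R)|>1$. First I would pick, via Theorem~\ref{thm:balancing-pe}, the pair $j_1,j_2$ with $3/11 \le \Pr[\sigma(j_1)>\sigma(j_2)] \le 8/11$; it suffices to produce \emph{one} of these two items with the claimed anti-concentration. Suppose for contradiction that both $j_1$ and $j_2$ are ``concentrated'', i.e. there are values $k_1,k_2$ with $\Pr[\sigma(j_1)=k_1] > 19/22$ and $\Pr[\sigma(j_2)=k_2] > 19/22$.

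The key step is to argue that two highly concentrated coordinates cannot be ``balanced'' against each other. If $k_1 > k_2$, then whenever $\sigma(j_1)=k_1$ and $\sigma(j_2)=k_2$ simultaneously — which happens with probability at least $1 - 2\cdot\frac{3}{22} = \frac{8}{11}$ by a union bound on the complements — we have $\sigma(j_1) > \sigma(j_2)$; hence $\Pr[\sigma(j_1)>\sigma(j_2)] \ge 8/11$... this is only weakly contradictory since the theorem allows equality. To get a strict contradiction I would instead note that we may assume $k_1 \ne k_2$ (a permutation cannot send two items to the same rank, so $\Pr[\sigma(j_1)=k, \sigma(j_2)=k]=0$ for every $k$, which already forces $k_1 \ne k_2$ once the marginals each exceed $1/2$). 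Then exactly one of $\Pr[\sigma(j_1)>\sigma(j_2)]$ or $\Pr[\sigma(j_2)>\sigma(j_1)]$ is at least $\frac{8}{11}$, and with a slightly sharper bookkeeping — using $19/22$ rather than just $>1/2$ — I would push the probability of the ``consistent'' event strictly above $8/11$, namely to at least $1 - 2\cdot\frac{3}{22} = \frac{16}{22} = \frac{8}{11}$, and then observe that equality would require both tail probabilities to equal exactly $3/22$ and the events to be disjoint in a way incompatible with the permutation constraint, giving the strict inequality $\Pr[\sigma(j_1)>\sigma(j_2)] > 8/11$ (or the symmetric statement). This contradicts Theorem~\ref{thm:balancing-pe}, so at least one of $j_1,j_2$ has $\max_k \Pr[\sigma(j)=k] \le 19/22$, proving the lemma with that item as the witness.

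The main obstacle I anticipate is getting the constant $19/22$ to come out \emph{exactly}, rather than merely ``close to $1$''. The number $19/22$ is evidently reverse-engineered so that $2(1-19/22) = 2\cdot\frac{3}{22} = \frac{6}{22}$ and $1 - \frac{6}{22} = \frac{16}{22} = \frac{8}{11}$, matching the Kahn--Saks bound exactly; so the argument has essentially no slack and I must be careful whether the contradiction needs a \emph{strict} inequality. The cleanest route is: assuming both coordinates are concentrated with the \emph{same} threshold $19/22$, the union bound gives $\Pr[\sigma(j_1)=k_1 \wedge \sigma(j_2)=k_2] \ge 8/11$, hence (since $k_1\ne k_2$, WLOG $k_1>k_2$) $\Pr[\sigma(j_1)>\sigma(j_2)] \ge 8/11$; but Theorem~\ref{thm:balancing-pe} only forbids $>8/11$, so to finish I would actually only need to rule out the boundary, which follows because if $\Pr[\sigma(j_1)>\sigma(j_2)]=8/11$ exactly then the union bound is tight, forcing $\Pr[\sigma(j_1)=k_1]=\Pr[\sigma(j_2)=k_2]=19/22$ and $\Pr[\sigma(j_1)\ne k_1 \vee \sigma(j_2)\ne k_2]$ to land entirely in the event $\sigma(j_1)>\sigma(j_2)$ as well — in particular $\Pr[\sigma(j_2)>\sigma(j_1)]=0$, so $\sigma(j_2)<\sigma(j_1)$ almost surely, whence $\Pr[\sigma(j_1)>\sigma(j_2)]=1\ne 8/11$, a contradiction. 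Thus the strict inequality $\Pr[\sigma(j_1)>\sigma(j_2)]>8/11$ holds, contradicting Theorem~\ref{thm:balancing-pe} and completing the proof. I would present the final write-up with the union bound as the single computation and fold the boundary case into a one-line remark.
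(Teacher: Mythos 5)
Your proposal is correct and follows essentially the same route as the paper's proof: both derive the claim from Theorem~\ref{thm:balancing-pe} via a union bound on the two concentrated coordinates (using that $k_1 \ne k_2$ and comparing their order), with the paper phrasing it as ``if every item were concentrated, no pair could be balanced'' and you phrasing the contrapositive on the specific pair that the theorem provides. Your extended worry about the boundary case $\Pr[\sigma(j_1)>\sigma(j_2)]=8/11$ is unnecessary: the negation of the lemma's conclusion is the \emph{strict} inequality $\Pr[\sigma(j)=k_j]>19/22$, so each complement has probability strictly less than $3/22$ and the union bound already yields $\Pr[\sigma(j_1)>\sigma(j_2)]>1-\tfrac{3}{22}-\tfrac{3}{22}=\tfrac{8}{11}$, exactly as in the paper.
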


\begin{proof}
Suppose for contradiction that, for every $j \in [m]$, there exists $k_j \in [m]$ such that $\Pr_{\sigma\sim\Xc(R)}[\sigma(j) = k_j] > 19/22$. For any $j_1, j_2 \in [m]$, if $k_{j_1} > k_{j_2}$, then we have
\begin{align*}
\Pr_{\sigma\sim\Xc(R)}[\sigma(j_1) > \sigma(j_2)] 
&\geq \Pr_{\sigma\sim\Xc(R)}[\sigma(j_1) = k_{j_1} \, \wedge \, \sigma(j_2) = k_{j_2}] \\
&= 1 - \Pr_{\sigma\sim\Xc(R)}[\sigma(j_1) \ne k_{j_1} \, \vee \,  \sigma(j_2) \ne k_{j_2}] \\
&\geq 1 - \Pr_{\sigma\sim\Xc(R)}[\sigma(j_1) \ne k_{j_1}] - \Pr_{\sigma\sim\Xc(R)}[\sigma(j_2) \ne k_{j_2}] \\
&> 1 - \frac{3}{22} - \frac{3}{22} = \frac{8}{11}.
\end{align*}
Similarly, if $k_{j_1} \leq k_{j_2}$, then we have
\begin{align*}
\Pr_{\sigma\sim\Xc(R)}[\sigma(j_1) > \sigma(j_2)] 
&\leq 1 - \Pr_{\sigma\sim\Xc(R)}[\sigma(j_1) = k_{j_1} \, \wedge \,  \sigma(j_2) = k_{j_2}] \\
&= \Pr_{\sigma\sim\Xc(R)}[\sigma(j_1) \ne k_{j_1} \, \vee \,  \sigma(j_2) \ne k_{j_2}] \\
&\leq \Pr_{\sigma\sim\Xc(R)}[\sigma(j_1) \ne k_{j_1}] + \Pr_{\sigma\sim\Xc(R)}[\sigma(j_2) \ne k_{j_2}] \\
&< \frac{3}{22} + \frac{3}{22} = \frac{3}{11}.
\end{align*}
Since this holds for all $j_1, j_2 \in [m]$, \Cref{thm:balancing-pe} is violated and we arrive at a contradiction. $\hfill \square$
\end{proof}

\begin{lemma} \label{lem:mod-anti-concen}
For any $R \subseteq \cR$ with $|\Xc(R)| > 1$, there exists $j \in [m]$ such that $\max_{r \in [n]} \Pr_{\sigma \sim \Xc(R)}[\sigma(j) \equiv r \Mod{n}] \leq 41/44$.
\end{lemma}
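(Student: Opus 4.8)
The plan is to take the item $j$ supplied by \Cref{lem:anti-concen-single-coord} and boost its pointwise anti-concentration bound into an anti-concentration bound over residue classes modulo $n$, using the unimodality from \Cref{thm:unimodal}. Write $a_k := \Pr_{\sigma \sim \Xc(R)}[\sigma(j) = k]$ for $k \in [m]$; by the choice of $j$ we have $a_k \le 19/22$ for every $k$, and of course $\sum_{k \in [m]} a_k = 1$. By \Cref{thm:unimodal}, there is an index $\ell \in [m]$ with $a_1 \le \cdots \le a_\ell$ and $a_\ell \ge a_{\ell+1} \ge \cdots \ge a_m$. Fixing an arbitrary residue $r$, it suffices to prove $\sum_{k \equiv r \Mod{n}} a_k \le 41/44$, since the lemma then follows by taking the maximum over $r$.

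For the main estimate I would use a shift argument. Let $C_r := \{k \in [m] : k \equiv r \Mod{n}\}$, and split $C_r \setminus \{\ell\}$ into its part $C_r^-$ below $\ell$ and its part $C_r^+$ above $\ell$. Because $n \ge 2$, shifting any index of $C_r$ by $\pm 1$ leaves $C_r$; moreover, for $k \in C_r^-$ unimodality gives $a_k \le a_{k+1}$ and the map $k \mapsto k+1$ is an injection of $C_r^-$ into $[1,\ell] \setminus C_r$, while for $k \in C_r^+$ we have $a_k \le a_{k-1}$ and $k \mapsto k - 1$ injects $C_r^+$ into $[\ell,m] \setminus C_r$. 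Summing these inequalities---and checking the small case distinction according to whether $\ell \in C_r$ to make sure the boundary index $\ell$ is accounted for exactly once---yields $\sum_{k \in C_r} a_k \le a_\ell + \sum_{k \in [m] \setminus C_r} a_k$.

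Finally, combining this with $\sum_{k \in C_r} a_k + \sum_{k \in [m] \setminus C_r} a_k = 1$ gives $2 \sum_{k \in C_r} a_k \le 1 + a_\ell \le 1 + \tfrac{19}{22} = \tfrac{41}{22}$, so $\sum_{k \in C_r} a_k \le 41/44$, as needed. I expect the only delicate point to be the bookkeeping in the shift step: verifying that both injections really do land in the complement of $C_r$, that their images overlap only possibly at $\ell$, and hence that a single copy of $a_\ell$ is enough slack---which is exactly why the argument splits on whether the peak index $\ell$ is itself $\equiv r \Mod{n}$.
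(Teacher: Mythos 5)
Your proposal is correct and follows essentially the same route as the paper: both take the item $j$ from \Cref{lem:anti-concen-single-coord}, exploit unimodality to shift each index of the residue class $C_r$ by one toward the peak (which works because $n \ge 2$ guarantees the shifted index leaves $C_r$), and conclude $2\sum_{k \in C_r} a_k \le 1 + a_{\ell^*} \le 41/22$. The paper phrases the shift as doubling the sum and pairing each term with a neighbor rather than as an injection into the complement, but the argument is the same.
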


\begin{proof}
By \Cref{lem:anti-concen-single-coord}, there exists $j \in [m]$ such that $\max_{k \in [m]} \Pr_{\sigma\sim\Xc(R)}[\sigma(j) = k] \leq 19/22$. 
Consider the sequence $a_1, \dots, a_m$ defined by $a_\ell := \Pr_{\sigma \sim \Xc(R)}[\sigma(j) = \ell]$ for $\ell\in [m]$; note that $\sum_{\ell\in[m]} a_\ell = 1$.
\Cref{thm:unimodal} implies that the sequence is unimodal; let $\ell^* := \argmax_{\ell \in [m]} a_\ell$ denote its peak. 
Note that we have $a_{\ell^*} \leq 19/22$.

Consider any $r \in [n]$.  Let $\kappa^* := \max\{0, \lfloor (\ell^* - r) / n \rfloor\}$, and let $\kappa := \lfloor (m - r) / n \rfloor$.
We have
\begin{align*}
&2\Pr_{\sigma \sim \Xc(R)}[\sigma(j) \equiv r \Mod{n}] \\
&= 2\sum_{q = 0}^{\kappa} a_{nq + r} \\
&= 2a_{n\kappa^* + r} + \sum_{q=0}^{\kappa^* - 1} 2a_{nq + r} + \sum_{q=\kappa^*+1}^{\kappa} 2a_{nq + r} \\
&\leq 2a_{n\kappa^* + r} + \sum_{q=0}^{\kappa^* - 1} (a_{nq + r} + a_{nq + r + 1}) + \sum_{q=\kappa^*+1}^{\kappa} (a_{nq + r - 1} + a_{nq + r}) \\
&\leq a_{n\kappa^* + r} + \sum_{\ell \in [m]} a_\ell \\
&\leq a_{\ell^*} + 1\\
&\leq \frac{19}{22} + 1,
\end{align*}
where the first inequality is due to the unimodality of the sequence and the third inequality is due to $\ell^*$ being the peak. 
Dividing both sides by $2$ yields the desired conclusion. $\hfill \square$
\end{proof}

We are now ready to prove \Cref{lem:not-sort-incorrect}.

\begin{proof}[of \Cref{lem:not-sort-incorrect}]
From \Cref{lem:mod-anti-concen}, there exists $j \in [m]$ such that $\Pr_{\sigma \sim \Xc(R)}[\sigma(j) \equiv r \Mod{n}] \leq 41/44$ for all $r \in [n]$. 
Let $i$ denote the agent to whom item~$j$ is assigned according to the allocation $A$. 
We have $\Pr_{\sigma \sim \Xc(R)}[L_\sigma \ne A] \geq \Pr_{\sigma \sim \Xc(R)}[\sigma(j) \not\equiv i \Mod{n}] \geq 3/44$. $\hfill \square$
\end{proof}

\Cref{lem:not-sort-incorrect} implies that we have to determine~$\sigma$ with sufficiently high probability using the comparison queries. 
Without the ``sufficiently high probability'' part, this is exactly the sorting problem, for which it is well-known 
that $\Omega(m \log m)$ queries are required. 
To establish \Cref{prop:noiseless-comparison-lower-mlogm-det}, we show that a similar number of queries is still necessary even with the ``high probability'' relaxation.

\begin{proof}[of \Cref{prop:noiseless-comparison-lower-mlogm-det}]
Let $\cD$ be the distribution based on identical preferences such that the preference order of items is uniformly random.
Suppose for contradiction that there exists a deterministic algorithm $\Alg$ that is $(0.99, \cD)$-correct and makes at most $q := 0.1 m \log_2 m$ queries in the worst case. 
We use the standard representation of $\Alg$ as a binary decision tree:\footnote{See, e.g., Section~8.1 of Cormen et al.~\cite{CLRS-book}.} each internal node of the tree corresponds to a comparison query $(j, j')$, and the left and right children correspond to the query answer being $0$ and $1$, respectively. 
Let $\Lambda$ denote the set of leaves of the tree; since $\Alg$ makes at most $q$ queries, $|\Lambda| \leq 2^q$. 
Each leaf $\lambda \in \Lambda$ corresponds to the algorithm's termination, at which point it outputs some allocation $A^\lambda$.
We use $R_\lambda \subseteq \cR$ to denote the set of comparison results leading to the leaf $\lambda$. 
Finally, let $\lambda(\sigma)$ denote the leaf that $\Alg$ ends up in when run on~$\sigma$.

We can now bound the probability that $\Alg$ is incorrect on a random $\sigma \sim \cD$ as follows:
\begin{align*}
\Pr_{\sigma \sim \cD}[L_{\sigma} \ne \Alg(\sigma)]
&= \Pr_{\sigma \sim \cD}[L_{\sigma} \ne A^{\lambda(\sigma)}] \\
&= \sum_{\lambda \in \Lambda} \Pr_{\sigma \sim \cD}[\lambda(\sigma) = \lambda] \cdot \Pr_{\sigma \sim \cD}[L_{\sigma} \ne A^{\lambda} \mid \lambda(\sigma) = \lambda] \\
&= \sum_{\lambda \in \Lambda} \frac{|\Xc(R_{\lambda})|}{m!} \cdot \Pr_{\sigma \sim \Xc(R_{\lambda})}[L_{\sigma} \ne A^{\lambda}] \\
&\geq \sum_{\lambda \in \Lambda} \frac{|\Xc(R_{\lambda})|}{m!} \cdot \frac{3}{44} \cdot \left(1 - \ind[|\Xc(R_\lambda)| = 1]\right) \\
&= \frac{3}{44} - \frac{3}{44} \sum_{\lambda \in \Lambda} \frac{\ind[|\Xc(R_\lambda)| = 1]}{m!} \\
&\geq \frac{3}{44}\left(1 - \frac{|\Lambda|}{m!}\right) \geq \frac{3}{44}\left(1 - \frac{2^q}{m!}\right) > 0.01,
\end{align*}
where the first inequality follows from \Cref{lem:not-sort-incorrect} and the last inequality from $q = 0.1 m \log_2 m$. 
This contradicts our assumption that $\Alg$ is $(0.99,\cD)$-correct. $\hfill \square$
\end{proof}

\section{Noisy Setting}
\label{sec:noisy}

In this section, we turn our attention to the noisy setting.
Because of the noise, we cannot expect algorithms to always output the correct answer.
Therefore, we will instead require them to be correct with probability at least $1-\delta$, for a given parameter~$\delta$.
Throughout the section, we adopt the mild assumption that $\delta \in (0, 1/2 - c)$ for some constant $c > 0$, and we treat the noise parameter $\rho$ as a fixed constant in $(0,1/2)$ (not scaling with $n$ and $m$).

\subsection{Upper Bounds}
\label{sec:noisy-upper}

We start with a simple upper bound for comparison queries.

\begin{theorem}
\label{thm:noisy-comparison-upper}
Under the noisy comparison query model, there exists a deterministic algorithm that outputs the round-robin allocation with probability at least $1-\delta$ using $O(nm\log(m/\delta))$ queries and $O(nm\log(m/\delta))$ time.
\end{theorem}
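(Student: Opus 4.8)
The plan is to reduce the noisy setting to the noiseless algorithm of \Cref{thm:noiseless-comparison-upper} via a standard confidence-amplification wrapper. The key observation is that \Cref{alg:worst-case} makes only $O(nm\log(m/n))$ comparison queries, and its output is correct as long as every one of those comparisons returns the correct answer. So if we can simulate each comparison query reliably, we are done.

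First I would describe the amplification of a single comparison. Given agent $i$ and items $j,j'$, we repeat the noisy comparison query $t$ times independently and take the majority answer. Since each query is wrong independently with probability $\rho < 1/2$, a Chernoff/Hoeffding bound gives that the majority is wrong with probability at most $e^{-\Omega(t)}$, where the hidden constant depends only on $\rho$ (a fixed constant). Choosing $t = \Theta(\log(nm\log(m/n)/\delta)) = \Theta(\log(m/\delta))$ — here I use that $n \le m$ and $\log(m/n) \le \log m$, and that $\delta$ is bounded away from $1/2$ only matters for the constant — makes the error probability of one simulated comparison at most $\delta/(c\cdot nm\log(m/n))$ for a suitable constant. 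Actually, since the number of comparisons $\Alg$ makes is $O(nm\log(m/n)) = O(nm\log m) = O(m^2)$ which is at most $\mathrm{poly}(m)$, and we need $t = \Theta(\log(m/\delta))$ suffices to get per-comparison error $\delta/\mathrm{poly}(m)$; I'd just say $t = C\log(nm/\delta)$ for a large enough constant $C = C(\rho)$, noting $\log(nm/\delta) = O(\log(m/\delta))$.

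Next I would assemble the algorithm: run \Cref{alg:worst-case}, but every time it issues a comparison query, instead issue $t$ noisy copies and return the majority. By a union bound over the at most $O(nm\log(m/n))$ comparisons made, with probability at least $1-\delta$ every simulated comparison agrees with the true comparison, in which case the run is identical to a noiseless run of \Cref{alg:worst-case} and hence outputs the correct round-robin allocation. The query count is $O(nm\log(m/n)) \cdot t = O(nm\log(m/n)\cdot\log(m/\delta)) = O(nm\log(m/\delta)\cdot\log m)$ — wait, this is slightly too large.

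I should be more careful with the accounting, and this is the one subtlety worth flagging. Naively $t = \Theta(\log(\#\text{comparisons}/\delta)) = \Theta(\log(nm\log m/\delta))$, and multiplying by $nm\log(m/n)$ comparisons gives $O(nm\log(m/n)\log(nm\log m/\delta))$, which is $O(nm\log m\cdot\log(m/\delta))$ — a $\log m$ factor worse than the claimed $O(nm\log(m/\delta))$. The fix is to \emph{not} amplify the $\LSort_i$ calls naively. Instead, observe that the $\LSort_i$ procedure and the best-item searches on Line~\ref{line:find-best} are all built from the selection primitive and from finding maxima among $\le n$ items; what we really need is a noise-tolerant version of the quantiles/selection routine. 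A cleaner route: use a noise-robust sorting/selection result (e.g.\ Feige et al.~\cite{FeigeRaPe94}) as a black box — sorting $m$ elements with $n$-ary... no. The genuinely clean fix is that we do not need each \emph{individual} comparison to be correct with probability $1-\delta/\mathrm{poly}(m)$; rather, within each call to $\Sel_{i,\ell}$ on a set of size $s$ we need all $O(s)$ comparisons correct, so amplifying each by $\Theta(\log(s/\delta'))$ copies where $\delta'$ is that call's error budget. Summing $O(s)\log(s/\delta')$ over the recursion tree of $\LSort_i$, whose levels have sizes halving from $m$, and allocating an equal share of the total error budget $\delta/n$ per agent across the $O(\log(m/n))$ levels, the dominant term is $O(m\log(m/n) + m\log(1/\delta))$ per agent... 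I would, in the write-up, simply invoke the $O(nm\log(m/n))$ query bound of \Cref{alg:worst-case}, set the per-comparison repetition count to $t = \lceil C\log(m/\delta)\rceil$ with $C = C(\rho,c)$ chosen so that $(\#\text{comparisons})\cdot e^{-t/C'} \le \delta$, note that $\#\text{comparisons} = O(nm\log(m/n)) = O(m^2)$ so $\log(\#\text{comparisons}) = O(\log m)$ is absorbed into $\log(m/\delta)$, and conclude total queries $O(nm\log(m/n))\cdot O(\log(m/\delta))$; then absorb $\log(m/n) \le \log m \le \log(m/\delta) \cdot O(1)$ only if $\delta$ is polynomially bounded — which it need not be. Given the paper states the bound as $O(nm\log(m/\delta))$, the intended argument must use the structured (selection-tree) amplification above rather than the flat union bound, so I would present that version: recursively track the error budget so that the repetition count is $\Theta(\log(\text{subproblem size}/\text{budget}))$, and verify by the same recursion-tree computation as in \Cref{lem:partial-sort} that the total is $O(nm\log(m/\delta))$, with the main obstacle being exactly this bookkeeping of how the $\log$ factors from amplification and from recursion combine without producing an extra $\log m$.
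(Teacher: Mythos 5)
There is a genuine gap. Your core approach---simulate each comparison of the noiseless \Cref{alg:worst-case} by a majority vote over $t$ repetitions and union-bound over all comparisons---cannot reach the claimed bound, and the ``structured (selection-tree) amplification'' you sketch as a fix does not repair it. The noiseless algorithm is only guaranteed correct if \emph{every} one of its $\Theta(nm\log(m/n))$ comparisons is resolved correctly (a single wrong comparison inside $\Sel$ can corrupt the partition), so you are forced to take a union bound over all of them. If comparison $i$ is repeated $t_i$ times, its majority errs with probability $e^{-\Omega(t_i)}$, and minimizing $\sum_i t_i$ subject to $\sum_i e^{-\Omega(t_i)}\le\delta$ forces $t_i=\Theta(\log(T/\delta))$ where $T$ is the total number of comparisons; no reallocation of error budgets across the recursion tree avoids this, since the optimum of this convex program is the flat allocation. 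The result is $\Theta\bigl(nm\log(m/n)\cdot\log(m/\delta)\bigr)$ queries, which exceeds $O(nm\log(m/\delta))$ by a factor of $\log(m/n)$ (e.g.\ for $n=\sqrt{m}$ this is a $\Theta(\log m)$ loss). Your claimed ``dominant term $O(m\log(m/n)+m\log(1/\delta))$ per agent'' is not substantiated by the budget-splitting computation, and you explicitly leave the bookkeeping as the ``main obstacle''---that obstacle is real and, for this approach, insurmountable.

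The paper's proof takes the route you briefly raised and then abandoned: use a noise-robust \emph{sorting} algorithm as a black box. Feige et al.~\cite{FeigeRaPe94} sort $m$ items under noisy comparisons with probability $1-\delta_0$ using $O(m\log(m/\delta_0))$ queries---crucially, such algorithms do \emph{not} work by making every individual comparison reliable, which is how they avoid the multiplicative $\log$ factor. Fully sorting each of the $n$ agents' preference lists with $\delta_0=\delta/n$ costs $O(nm\log(nm/\delta))=O(nm\log(m/\delta))$ (using $n\le m$), and the allocation is then computed from the $n$ sorted lists with no further queries. The key realization you missed is that in the noisy setting the savings from partial sorting (quantiles) evaporate---the $\log(m/\delta)$ reliability factor must be paid anyway---so full sorting is the right primitive; indeed, \Cref{app:challenges} of the paper explains why porting the quantiles-based noiseless algorithm to the noisy setting runs into exactly the tension you encountered.
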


\begin{proof}
For each agent, sort the items according to her preferences using a noisy sorting algorithm; then, allocate the items using the $n$ resulting sorted lists.  
Noisy sorting is a well-studied problem, and for our proof, it suffices to use the algorithm of Feige et al.~\cite[Thm.~3.2]{FeigeRaPe94}, which requires $O(m\log(m/\delta_0))$ queries and time\footnote{Feige et al.~\cite{FeigeRaPe94} did not make an explicit claim on time. 
However, one can observe that the time complexity of their algorithm is the same as its query complexity.} to correctly sort $m$ items with probability at least $1-\delta_0$, for any $\delta_0 \in (0,1/2)$. 

Since we have $n$ lists to be sorted, we set $\delta_0 = \delta/n$, so that a union bound yields an overall success probability of $1-\delta$.  Hence, the overall complexity is $O(nm\log(nm/\delta))$, which is equivalent to $O(nm\log(m/\delta))$ due to the fact that $m/\delta \le nm/\delta \le m^2/\delta^2$ (recall our assumption $n \le m$).    $\hfill \square$ 
\end{proof}

For value queries, we query each agent's utility for each item a sufficient number of times and run our noiseless algorithm based on the majority values.

\begin{theorem}
\label{thm:noisy-value-upper}
Under the noisy value query model, there exists a deterministic algorithm that outputs the round-robin allocation with probability at least $1-\delta$ using $O(nm\log(m/\delta))$ queries and $O(nm\log(m/\delta))$ time.
\end{theorem}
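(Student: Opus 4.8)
The plan is to reduce the noisy value query problem to the noiseless one by repeatedly querying each pair $(i,j)$ enough times and taking the majority answer. Since each query returns the true value $u_i(j)$ with probability $1-\rho$ independently (and an adversarial value otherwise), querying $(i,j)$ a total of $t$ times and taking the most frequent response among the $t$ answers recovers $u_i(j)$ as long as fewer than $t/2$ of the responses were corrupted. By a Chernoff bound, the probability that at least $t/2$ of the $t$ queries are corrupted is at most $e^{-\Omega(t)}$ since $\rho < 1/2$ is a fixed constant; more precisely, for some constant $c_\rho > 0$ depending only on $\rho$, this probability is at most $e^{-c_\rho t}$.

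The key steps, in order: first, fix $t := \lceil C \log(nm/\delta) \rceil$ for a sufficiently large constant $C = C(\rho)$, so that $e^{-c_\rho t} \le \delta/(nm)$. Second, for every pair $(i,j) \in N \times M$, query $(i,j)$ exactly $t$ times and let $\tilde u_i(j)$ be the value returned most often (ties broken arbitrarily). Third, by the Chernoff bound above and a union bound over all $nm$ pairs, with probability at least $1-\delta$ we have $\tilde u_i(j) = u_i(j)$ for all $i \in N$ and $j \in M$ simultaneously. Fourth, conditioned on this event, run the noiseless deterministic algorithm of \Cref{cor:noiseless-value-upper} on the recovered values $\tilde u_i(\cdot)$; it outputs the correct round-robin allocation. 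Fifth, tally the complexity: the querying phase uses $nm \cdot t = O(nm\log(nm/\delta)) = O(nm\log(m/\delta))$ queries and the same amount of time (recalling $n \le m$ and $\delta \le 1/2$, so $\log(nm/\delta) = O(\log(m/\delta))$), and the subsequent noiseless algorithm runs in $O(nm\log(m/n)) = O(nm\log(m/\delta))$ time; combining gives the claimed bounds.

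There is no substantial obstacle here; the proof is a routine amplification argument. The only points requiring (minor) care are: making sure the adversary's power does not break the majority argument—this is fine because the adversary controls only which of the $t$ responses are corrupted and their values, but cannot force more than roughly a $\rho$-fraction to be corrupted in expectation, and the $t$ corruption events are independent; and confirming that $\log(nm/\delta)$ simplifies to $\Theta(\log(m/\delta))$ under the standing assumptions $n \le m$ and $\delta \in (0,1/2-c)$, exactly as in the proof of \Cref{thm:noisy-comparison-upper}.
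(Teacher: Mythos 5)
Your proposal is correct and matches the paper's proof essentially step for step: repeat each value query $T = \Theta(\log(nm/\delta))$ times, take the majority (which is the true value whenever fewer than $T/2$ responses are corrupted), apply a Chernoff/Hoeffding bound plus a union bound over the $nm$ pairs, and feed the recovered utilities to the noiseless algorithm of \Cref{cor:noiseless-value-upper}. The simplification $\log(nm/\delta) = O(\log(m/\delta))$ via $n \le m$ is also exactly as in the paper.
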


\begin{proof}
For each agent $i \in N$ and item $j \in M$, we query $u_i(j)$ for a fixed number $T$ of times, and declare its value to be the majority value observed, breaking ties arbitrarily.
We then run our noiseless round-robin algorithm (\Cref{cor:noiseless-value-upper}) assuming these declared values to be the correct utilities. 
It remains to determine a choice of $T$ such that all declared utilities are correct with (joint) probability at least $1-\delta$.  
For this correctness, it suffices that fewer than $T/2$ of the $T$ queries to each pair are corrupted by the adversary.

For each query, the probability of the observed value being corrupted is $\rho \in (0, 1/2)$. 
Hence, by Hoeffding's inequality, after $T$ queries, the probability of having $T/2$ or more corrupted values is at most $e^{-2T(\frac{1}{2} - \rho)^2}$, which can be upper-bounded by any specified $\delta_0 \in (0,1)$ if we choose $T= \left\lceil \frac{ \log(1/\delta_0) }{2(1/2 - \rho)^2} \right\rceil \in \Theta(\log(1/\delta_0))$.  
By a union bound over the $nm$ pairs $(i,j)$, the overall error probability is at most $\delta$ provided that $\delta_0 = \delta/(nm)$, which gives $T = \Theta(\log(nm/\delta))$.  
Since $n \le m$, the overall query complexity is therefore $nmT = O(nm\log(m/\delta))$.  
Moreover, one can easily check that the time complexity is of the same order as the query complexity. $\hfill \square$
\end{proof}

While the proofs of our upper bounds are simple, based on our current understanding, it is conceivable that these bounds are already optimal.
In \Cref{app:challenges}, we discuss some challenges that we faced when trying to improve the bounds.

\subsection{Lower Bounds}

Next, we shift our focus to lower bounds.
We derive a bound of $\Omega(nm\log(1/\delta))$ for comparison queries.

\begin{theorem}
\label{thm:noisy-comparison-lower-nm}
Under the noisy comparison query model, any (possibly randomized) algorithm that outputs the round-robin allocation with probability at least $1-\delta$ makes $\Omega(nm\log(1/\delta))$ queries in expectation.
\end{theorem}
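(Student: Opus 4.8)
The plan is to combine the change-of-measure (``divergence decomposition'') technique from multi-armed bandit lower bounds with the combinatorial estimate of \Cref{lem:count-alloc-items}. Fix an arbitrary instance $\nu_0$ (say, every agent ranks $m \succ m-1 \succ \cdots \succ 1$) and an algorithm $\Alg$ that outputs the round-robin allocation with probability at least $1-\delta$ on every instance; it suffices to prove that $\Alg$ makes $\Omega(nm\log(1/\delta))$ queries in expectation when run on $\nu_0$, and we may assume this expectation is finite. For $i\in N$ and $j\in\Jc_i(\nu_0)$, let $\nu_0^{i,j}$ be the instance identical to $\nu_0$ except that $j$ is made agent~$i$'s most preferred item, exactly as in the proof of \Cref{lem:switch-to-max-incorrect}; as argued there, $L_{\nu_0}\ne L_{\nu_0^{i,j}}$, since agents $1,\dots,i-1$ are unchanged (hence behave identically in the first round) and $j\in\Jc_i(\nu_0)$, so $j$ is still available when agent~$i$ first picks and she then takes $j$, whereas $j$ is never allocated to agent~$i$ under $\nu_0$.

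For $(i,j)\in N\times M$, let $t_{i,j}$ be the random number of queries that $\Alg$ makes involving both agent~$i$ and item~$j$ (in either slot) when run on $\nu_0$, so $\sum_{i,j}t_{i,j}=2T$, where $T$ is the total number of queries. The crucial observation is that $\nu_0$ and $\nu_0^{i,j}$ induce the same conditional answer distribution on every query \emph{except} those involving both agent~$i$ and item~$j$, and on any such query the two answer distributions are Bernoulli with parameters in $\{\rho,1-\rho\}$, hence at KL divergence at most the constant $c_\rho:=(1-2\rho)\log\frac{1-\rho}{\rho}$. Folding the internal randomness of $\Alg$ into the transcript and applying the standard KL chain rule for interactive protocols (the divergence-decomposition lemma from the bandit literature), I would get
\begin{align*}
\mathrm{KL}\left(\Pr_{\nu_0}\,\big\|\,\Pr_{\nu_0^{i,j}}\right)\ \le\ c_\rho\cdot\E_{\nu_0}[t_{i,j}],
\end{align*}
where $\Pr_\nu$ denotes the law of the transcript when $\Alg$ runs on $\nu$. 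On the other hand, the event that $\Alg$ outputs $L_{\nu_0}$ is a function of the transcript, has probability at least $1-\delta$ under $\Pr_{\nu_0}$, and probability at most $\delta$ under $\Pr_{\nu_0^{i,j}}$ (there $\Alg$ outputs $L_{\nu_0^{i,j}}\ne L_{\nu_0}$ with probability at least $1-\delta$); so by the data-processing inequality and monotonicity of binary KL divergence,
\begin{align*}
\mathrm{KL}\left(\Pr_{\nu_0}\,\big\|\,\Pr_{\nu_0^{i,j}}\right)\ \ge\ (1-2\delta)\log\frac{1-\delta}{\delta}\ =:\ D_\delta.
\end{align*}
Combining the two inequalities yields $\E_{\nu_0}[t_{i,j}]\ge D_\delta/c_\rho$ for every $i\in N$ and $j\in\Jc_i(\nu_0)$.

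Summing over the relevant pairs and applying \Cref{lem:count-alloc-items},
\begin{align*}
2\,\E_{\nu_0}[T]\ =\ \sum_{i\in N}\sum_{j\in M}\E_{\nu_0}[t_{i,j}]\ \ge\ \sum_{i\in N}\sum_{j\in\Jc_i(\nu_0)}\E_{\nu_0}[t_{i,j}]\ \ge\ \frac{D_\delta}{c_\rho}\sum_{i\in N}|\Jc_i(\nu_0)|\ \ge\ \frac{D_\delta}{4c_\rho}\,nm.
\end{align*}
Finally, a short calculation using the standing assumption $\delta\in(0,1/2-c)$ shows $D_\delta=\Omega(\log(1/\delta))$ (the hidden constant depending on $c$), whence $\E_{\nu_0}[T]=\Omega(nm\log(1/\delta))$, as desired.

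The step I expect to require the most care is the divergence-decomposition inequality: one must set up the transcript probability space so that the KL chain rule applies cleanly --- the querying rule is common to $\nu_0$ and $\nu_0^{i,j}$, and only the answer kernels differ --- and one must verify that these kernels agree on \emph{every} query not involving the pair $(i,j)$, including queries to agent~$i$ about two \emph{other} items; this last point is exactly why the perturbation moves $j$ to the \emph{top} of agent~$i$'s ranking rather than re-ranking arbitrarily. The remaining ingredients --- the effect of the perturbation on the round-robin allocation, and the $nm/4$ counting bound --- carry over essentially verbatim from the noiseless lower bound in \Cref{subsec:lb-noiseless}.
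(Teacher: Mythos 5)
Your proposal is correct and follows essentially the same route as the paper's proof: the same perturbation (promoting $j\in\Jc_i(\nu)$ to agent~$i$'s top item), the same divergence-decomposition/change-of-measure inequality (the paper's \Cref{lem:bandit-inequality}), the same use of \Cref{lem:count-alloc-items}, and the same factor-of-two accounting, with your constants $c_\rho$ and $D_\delta$ being exactly $d(\rho,1-\rho)$ and $d(\delta,1-\delta)$. No substantive differences to report.
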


\begin{proof}
We use the same notation as in \Cref{subsec:lb-noiseless}. 
We interpret the query model as a multi-armed bandit problem \cite{LattimoreSz20} (with a highly unconventional objective) in which each $a \in \Ac$ is an ``arm'' or ``action''.  
If the $t$-th query made is $a_t = (i,j,j')$, then the resulting observation is denoted by $y_t$, and is drawn from the distribution ${\rm Bernoulli}(1-\rho)$ if $i$ prefers $j$ to $j'$, and ${\rm Bernoulli}(\rho)$ otherwise.  
Let $P_a$ denote the (Bernoulli) distribution associated with action $a$, i.e., it holds for any query index $t$ and $y \in \{0,1\}$ that $P_a(y) = \PP[y_t = y \mid a_t = a]$.

Let $\PP^{\nu}$ and $\EE^{\nu}$ denote the probability and expectation (with respect to the randomness in the algorithm and/or the query answers), respectively, when the underlying instance is $\nu$.
By assumption, we have $\PP^{\nu}[ \mathsf{Alg}(\nu) = L_{\nu}] \ge 1-\delta$ for all $\nu$.  
Note that the number of queries taken when $\mathsf{Alg}$ terminates is a random variable, as this may depend on the observed $y_t$ values (which are themselves random) and moreover $\mathsf{Alg}$ itself may be randomized.

With the actions $\{a_t\}$ and query responses $\{y_t\}$ being interpreted under a bandit framework as above, we can make use of a highly general result from the bandit literature that translates into an information-theoretic lower bound on the number of times certain actions are played (in expectation).  
These bounds are expressed in terms of the \emph{KL divergence} $D(P\|Q) = \sum_{x} P(x)\log\frac{P(x)}{Q(x)}$ for probability mass functions $P,Q$, and its binary version $d(a,b) = a\log\frac{a}{b} + (1-a)\log\frac{1-a}{1-b}$ for real numbers $a,b \in [0,1]$ (i.e., $d(a,b)$ is the KL divergence between ${\rm Bernoulli}(a)$ and ${\rm Bernoulli}(b)$ distributions). 
This result and similar variations have been used in numerous bandit works.\footnote{For example, see Lemma~1 of Kaufmann et al.~\cite{KaufmannCaGa16} and Exercise~15.7 of Lattimore and Szepesv\'ari~\cite{LattimoreSz20}.}

\begin{lemma}
\label{lem:bandit-inequality}
Let $\nu$ and $\nu'$ be any two bandit instances defined on the same (finite) set of arms $\Ac$, with corresponding observation distributions $\{P_a\}_{a \in \Ac}$ and $\{P'_a\}_{a \in \Ac}$.
Let $\tau$ be the (random) total number of queries made when the algorithm terminates, and let $\Ec$ be any probabilistic event that can be deduced from the resulting history $(a_1,y_1,\dotsc,a_\tau,y_\tau)$, possibly with additional randomness independent of that history. 
Then, we have
	\begin{equation}
        \sum_{a \in \Ac} \EE^{\nu}[ T_a ] D(P_a \| P'_a) \ge d( \PP^{\nu}[\Ec], \PP^{\nu'}[\Ec] ), \label{eq:main_lemma}
    \end{equation} 
where $T_a$ is the (random) number of times action $a$ is queried up to the termination index $\tau$.
\end{lemma}

Intuitively, the right-hand side of \eqref{eq:main_lemma} identifies an event~$\Ec$ that (ideally) occurs with significantly different probabilities under the two instances (e.g., the algorithm outputting $L_{\nu}$ when $L_{\nu'} \ne L_{\nu}$).  
The left-hand side indicates that in order to permit such a difference in probabilities, actions with sufficient distinguishing power (i.e., high $D(P_a \| P'_a)$) must be played sufficiently many times (i.e., high $\EE^{\nu}[ T_a ]$).

Let $\nu$ be any instance of our round-robin problem. 
Recall \Cref{lem:count-alloc-items} (and the notation $\Jc_{i}(\nu)$), which asserts that
\begin{equation}
    \sum_{i\in N} |\Jc_i(\nu)| \ge \frac{nm}{4}. \label{eq:count}
\end{equation}

Now, for fixed $i$ and $j \in \Jc_i(\nu)$, consider a different instance $\nu'$ in which $j$ is made the most preferred item for agent $i$, and all other preferences remain unchanged.  
This means that $j$ is allocated to $i$ in $L_{\nu'}$, 
in particular implying that $L_{\nu} \ne L_{\nu'}$.  Moreover, in \eqref{eq:main_lemma}, we observe the following:
\begin{itemize}
    \item Unless the action $a$ corresponds to agent~$i$ and item~$j$ (along with some other arbitrary item), the quantity $D(P_a \| P'_a)$ is zero; this is due to our construction of $\nu'$ and the fact that $D(P\|P) = 0$ for any $P$.
    \item For \emph{any} action $a$, the quantity $D(P_a \| P'_a)$ is either zero or $d(\rho,1-\rho)$ (which is equal to $d(1-\rho,\rho)$), since our observation distributions are always ${\rm Bernoulli}(\rho)$ or ${\rm Bernoulli}(1-\rho)$. 
    \item Let $\Ec$ be the event that $\mathsf{Alg}$ outputs $L_{\nu'}$.
    The success condition on $\mathsf{Alg}$ implies that $\PP^{\nu}[\Ec] \le \delta$ and $\PP^{\nu'}[\Ec] \ge 1-\delta$.  
    Since $\delta < 1/2$, this in turn implies $d( \PP^{\nu}[\Ec], \PP^{\nu'}[\Ec] ) \ge d(\delta,1-\delta)$ by a standard monotonicity property of $d(a,b)$ \cite{KaufmannCaGa16}.
\end{itemize}
Combining the above findings, and defining $\Actil_{ij}$ to be the set of all actions involving agent $i$ and item $j$, we obtain $\sum_{a \in \Actil_{ij}} \EE^{\nu}[ T_a ] d(\rho,1-\rho) \ge d(\delta,1-\delta)$, or equivalently, $\sum_{a \in \Actil_{ij}} \EE^{\nu}[ T_a ]  \ge \frac{d(\delta,1-\delta)}{d(\rho,1-\rho)}$.
Since this holds for all pairs $(i,j)$ such that $j \in \Jc_i(\nu)$, we can sum over all such pairs and apply \eqref{eq:count} to obtain
\begin{equation}
    \sum_{i\in N} \sum_{j \in \Jc_i(\nu)} \sum_{a \in \Actil_{ij}} \EE^{\nu}[ T_a ] \ge \frac{nm}{4} \cdot \frac{d(\delta,1-\delta)}{d(\rho,1-\rho)}.  \label{eq:triple_sum}
\end{equation}

Next, we claim that the left-hand side of \eqref{eq:triple_sum} is upper-bounded by $2\EE^{\nu}[\tau]$, where $\tau$ is the (random) total number of queries.  
To see this, we upper-bound the summation $\sum_{j \in \Jc_i(\nu)}$ by $\sum_{j \in M}$, apply linearity of expectation, and observe that $\sum_{i\in N} \sum_{j\in M} \sum_{a \in \Actil_{ij}} T_a$ is exactly $2\tau$; the factor of $2$ arises because each query $(i,j_1,j_2)$ is counted twice (once when $j=j_1$ and once when $j=j_2$).  
It follows that 
\begin{equation*}
    \EE_{\nu}[\tau] \ge \frac{nm}{8}\cdot \frac{d(\delta,1-\delta)}{d(\rho,1-\rho)}. 
\end{equation*}
The proof is completed by recalling that $\rho$ is a fixed constant in $(0, 1/2)$, and noting that $d(\delta,1-\delta) \in \Omega(\log(1/\delta))$ since $\delta \le 1/2-c$. $\hfill \square$
\end{proof}

Next, we establish an analogous result for value queries.

\begin{theorem}
\label{thm:noisy-value-lower-nm}
Under the noisy value query model, any (possibly randomized) algorithm that outputs the round-robin allocation with probability at least $1-\delta$ makes $\Omega(nm\log(1/\delta))$ queries in expectation.
\end{theorem}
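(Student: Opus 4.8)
The plan is to reuse, almost verbatim, the bandit argument behind \Cref{thm:noisy-comparison-lower-nm}, with the comparison-query noise model replaced by the value-query one. As before, I would cast the interaction as a multi-armed bandit problem: an arm is now a pair $a = (i,j) \in N \times M$, and querying $a$ at step $t$ yields an observation $y_t$ drawn from a distribution $P_a$ that is determined by the underlying instance $\nu$ together with a fixed (non-adaptive) adversary we attach to it. Since all utilities lie in a finite set, each $P_a$ is a probability mass function and \Cref{lem:bandit-inequality} applies unchanged.

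Fix an arbitrary base instance $\nu$. For each agent $i$ and each item $j \in \Jc_i(\nu)$, let $\nu'$ be the instance obtained from $\nu$ by raising agent $i$'s utility for $j$ to the common threshold $V := 1 + \max_{i' \in N,\, k \in M} u_{i'}(k)$, so that $j$ becomes agent $i$'s most preferred item while every other utility is untouched. Exactly as in the proof of \Cref{thm:noiseless-comparison-lower-nm}, $j$ is allocated to agent $i$ in $L_{\nu'}$ but not in $L_\nu$, so $L_{\nu'} \ne L_\nu$. The new ingredient is the choice of adversaries: for $\nu$, corrupted answers to the arm $(i,j)$ return $V$, and corrupted answers to all other arms return the true utility; for $\nu'$, corrupted answers to $(i,j)$ return the old value $u_i(j)$, and corrupted answers to all other arms return the true utility. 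This is permitted because the model lets the corrupted value depend on the full utility profile, and a single adversary attached to $\nu$ serves simultaneously for all pairs $(i,j)$ with $j \in \Jc_i(\nu)$ because $V$ is a uniform threshold. With these choices, the observation distribution of arm $(i,j)$ is supported on the two points $\{u_i(j), V\}$ under both instances, with masses $(1-\rho,\rho)$ under $\nu$ and $(\rho,1-\rho)$ under $\nu'$; hence $D(P_{(i,j)}\|P'_{(i,j)}) = d(1-\rho,\rho) = d(\rho,1-\rho)$, while $P_a = P'_a$ (so $D(P_a\|P'_a) = 0$) for every other arm $a$.

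Then I would apply \Cref{lem:bandit-inequality} with $\Ec$ the event that the algorithm outputs $L_{\nu'}$. The success guarantee gives $\PP^{\nu}[\Ec] \le \delta$ and $\PP^{\nu'}[\Ec] \ge 1-\delta$, so by the same monotonicity property of $d$ invoked in the comparison proof, the right-hand side of \eqref{eq:main_lemma} is at least $d(\delta,1-\delta)$; since only the arm $(i,j)$ contributes to the left-hand side, this yields $\EE^{\nu}[T_{(i,j)}] \ge d(\delta,1-\delta)/d(\rho,1-\rho)$. Summing over all $i \in N$ and $j \in \Jc_i(\nu)$, applying \Cref{lem:count-alloc-items}, and using that each value query increments exactly one $T_{(i,j)}$ (so $\sum_{i,j}\EE^{\nu}[T_{(i,j)}] = \EE^{\nu}[\tau]$, with no factor of $2$ this time), I get
\[
\EE^{\nu}[\tau] \;\ge\; \sum_{i \in N}\sum_{j \in \Jc_i(\nu)}\EE^{\nu}[T_{(i,j)}] \;\ge\; \frac{nm}{4}\cdot\frac{d(\delta,1-\delta)}{d(\rho,1-\rho)}.
\]
Since $\rho$ is a fixed constant in $(0,1/2)$ and $\delta \le 1/2-c$, we have $d(\delta,1-\delta) \in \Omega(\log(1/\delta))$, giving the claimed $\Omega(nm\log(1/\delta))$ bound.

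The proof is conceptually identical to that of \Cref{thm:noisy-comparison-lower-nm}; the only part that needs care is the adversary, which must (i) be realizable in the value-query noise model, (ii) keep the two per-arm observation distributions mutually absolutely continuous so that the KL divergence is finite, and (iii) make that divergence exactly $d(\rho,1-\rho)$, so that a corrupted value is information-theoretically no more useful than a flipped comparison bit. I expect this to be routine rather than a serious obstacle, since the two-point construction above achieves all three simultaneously.
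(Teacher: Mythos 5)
Your proposal is correct and follows essentially the same route as the paper, which uses the default instance with utilities $1,\dots,m$, sets the threshold to $m+1$ (your $V$), and defines the adversary to corrupt answers to the special value. One small point of care: for the summation $\sum_{i,j}\EE^{\nu}[T_{(i,j)}] = \EE^{\nu}[\tau]$ to be meaningful, the adversary attached to $\nu$ must be a \emph{single} noise process independent of which pair $(i,j)$ is being distinguished, so your adversary should corrupt \emph{every} queried value to $V$ (rather than corrupting only the arm $(i,j)$ and returning true utilities elsewhere); this still leaves the non-distinguishing arms with identical distributions under $\nu$ and $\nu'$, and is exactly the paper's construction.
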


\begin{proof}
The majority of the steps are similar to the case of noisy comparisons (\Cref{thm:noisy-comparison-lower-nm}), so we focus on the differences and avoid repeating similar steps.

The actions are now of the form $a = (i,j)$, and with $\Ac$ again denoting the set of all actions, we have $|\Ac| = nm$. 
We consider a ``default'' instance $\nu$ in which each agent assigns utilities $1,2,\dots,m$ to the $m$ items, with utility $m$ for the most preferred, $m-1$ for the second-most preferred, and so on.  
We then consider a modified instance $\nu'$ 
with a single item $j \in \Jc_i(\nu)$ for some agent $i$ being made the most preferred, where $\Jc_i(\nu)$ is defined as in \Cref{lem:count-alloc-items}.
Specifically, in $\nu'$ we let $i$ have utility $m+1$ for this item $j$.

In the instances constructed above (which are the only ones we will use in this proof), each agent's set of utilities for the $m$ items is either $[m]$ (in $\nu$) or $[m+1] \setminus \{k\}$ for some $k \in [m]$ (in $\nu'$).  
Accordingly, the following adversary is well-defined:
\begin{itemize}
    \item If a query is made such that the true utility is different from $m+1$, then the adversary makes the query return $m+1$ (whenever given the chance).
    \item If a query is made such that the true utility is $m+1$, then the adversary makes the query return the unique value in $[m]$ that is not present in the agent's set of $m$ utilities (whenever given the chance).
\end{itemize}
Under this choice of adversary, letting $u^*$ be the value of $u_i(j)$ in instance $\nu$, we observe the following:
\begin{itemize}
    \item Under $\nu$, any query to $(i,j)$ outputs $u^*$ with probability $1-\rho$, and $m+1$ otherwise;
    \item Under $\nu'$, any query to $(i,j)$ outputs $m+1$ with probability $1-\rho$, and $u^*$ otherwise;
    \item If the query is not to $(i,j)$, then the resulting distributions are identical under $\nu$ and $\nu'$.
\end{itemize}
Since KL divergence only depends on the underlying probabilities and not the corresponding values (e.g., $\{0,1\}$ for Bernoulli vs.~$\{u^*,m+1\}$ described above), we find that in the first two bullet points, the action $a=(i,j)$ has corresponding distributions $(P_a,P'_a)$ in $(\nu,\nu')$ such that $D(P_a \| P'_a) = d(\rho,1-\rho) = d(1-\rho,\rho)$.  
Moreover, the third bullet point reveals that all other actions result in a KL divergence of zero.

With these observations in place, the same reasoning as in the proof of \Cref{thm:noisy-comparison-lower-nm} (including \Cref{lem:bandit-inequality}) yields the following: 
\begin{equation*}
    \EE^{\nu}[ T_a ]  \ge \frac{d(\delta,1-\delta)}{d(\rho,1-\rho)},
\end{equation*}
where $T_a$ denotes the number of times that $a = (i,j)$ is queried.  Summing over all $i$ and $j \in \Jc_i(\nu)$ and again using \eqref{eq:count}, we readily obtain $$\EE^{\nu}[\tau] \ge \frac{nm}{4}\cdot \frac{d(\delta,1-\delta)}{d(\rho,1-\rho)},$$ and the proof is completed by noting that $d(\delta,1-\delta) \in \Omega(\log(1/\delta))$. $\hfill \square$
\end{proof}

Finally, we derive a lower bound of $\Omega(m\log m)$ for both query models---the bound holds even for two agents with identical preferences.

\begin{theorem}
\label{thm:noisy-comparison-lower-mlogm}
Under the noisy comparison query model, for two agents, any (possibly randomized) algorithm that outputs the round-robin allocation with probability at least $1-\delta$ makes $\Omega(m\log(m/\delta))$ queries in expectation.
\end{theorem}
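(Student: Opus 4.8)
The plan is to handle the two terms in $m\log(m/\delta)=m\log m+m\log(1/\delta)$ separately and then combine them via $\max(a,b)=\Theta(a+b)$; in fact both bounds can be placed on a single two-agent identical-preference instance, so that there is an instance on which the expected number of queries is simultaneously $\Omega(m\log m)$ and $\Omega(m\log(1/\delta))$.

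For the $\Omega(m\log m)$ term, I would reduce to the noiseless lower bound of \Cref{thm:noiseless-comparison-lower-mlogm}: via \Cref{prop:noiseless-comparison-lower-mlogm-det} that bound is already witnessed by two-agent identical-preference instances, and by its accompanying footnote it holds for any success probability that is a constant strictly larger than $1/2$ --- in particular for $1-\delta$, since $\delta<1/2-c$. Given a noisy $(1-\delta)$-correct algorithm $\Alg$, I would construct a noiseless algorithm $\Alg'$ that runs $\Alg$ internally but, each time $\Alg$ issues a comparison query, forwards it to the noiseless oracle, independently flips the true answer with probability $\rho$, and returns the resulting (possibly corrupted) bit to $\Alg$; when $\Alg$ halts, $\Alg'$ outputs whatever $\Alg$ outputs. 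Because the noise in our model is independent across queries --- even across repeated queries to the same triple --- the view presented to $\Alg$ inside $\Alg'$ has exactly the same distribution as against a genuine noisy oracle, so $\Alg'$ has the same per-instance success probability and the same (random) number of queries as $\Alg$. Applying the noiseless bound to $\Alg'$ therefore forces $\EE[\tau]=\Omega(m\log m)$.

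For the $\Omega(m\log(1/\delta))$ term, I would adapt the bandit argument from the proof of \Cref{thm:noisy-comparison-lower-nm} to two agents with a common preference order. Fix the default instance $\nu$ in which both agents share the order $1\succ 2\succ\dots\succ m$. For each $k\in\{2,\dots,m\}$, let $\nu'_k$ be obtained by promoting item $k$ to the top of both agents' (still identical) preference order; then $L_{\nu'_k}\ne L_\nu$, since at the very least the item originally ranked first now sits in position $2$ and hence moves from agent~$1$ to agent~$2$. The only arms whose observation distribution changes between $\nu$ and $\nu'_k$ are those of the form $(i,k,j')$, and for every arm this KL divergence is $0$ or $d(\rho,1-\rho)$; moreover, taking $\Ec$ to be the event that $\Alg$ outputs $L_{\nu'_k}$, we have $\PP^{\nu}[\Ec]\le\delta$ and $\PP^{\nu'_k}[\Ec]\ge1-\delta$, hence $d(\PP^{\nu}[\Ec],\PP^{\nu'_k}[\Ec])\ge d(\delta,1-\delta)$ by the usual monotonicity of $d$. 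Then \Cref{lem:bandit-inequality} gives $\sum_{a\text{ involving item }k}\EE^{\nu}[T_a]\ge d(\delta,1-\delta)/d(\rho,1-\rho)$. Summing over $k=2,\dots,m$ and using that each arm touches at most two items (so is counted at most twice), I obtain $2\EE^{\nu}[\tau]\ge(m-1)\,d(\delta,1-\delta)/d(\rho,1-\rho)$; since $\rho$ is a fixed constant and $d(\delta,1-\delta)=\Omega(\log(1/\delta))$ for $\delta\le1/2-c$, this yields $\EE^{\nu}[\tau]=\Omega(m\log(1/\delta))$. The same argument works verbatim for any identical-preference default instance, in particular for one that witnesses the $\Omega(m\log m)$ bound above.

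Given the earlier results, the bookkeeping is mostly routine; the two points that need genuine care are (i) checking that promoting each of the $m-1$ non-top items really does change the two-agent round-robin allocation --- a short argument about the parities of positions, as sketched above --- and (ii) verifying that the noise-simulation reduction is faithful for adaptive, randomized algorithms that may repeat the same query, which is precisely where the per-query independence of the noise is essential. I expect convincing oneself that nothing is lost in reduction (ii) to be the only genuinely delicate step.
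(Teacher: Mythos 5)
Your proof is correct, but it takes a genuinely different route from the paper. The paper proves the full $\Omega(m\log(m/\delta))$ bound in a single self-contained argument: it builds a uniform distribution over $2^{m/2}$ identical-preference instances obtained by independently reversing each adjacent pair $(2k-1,2k)$, reduces the allocation problem to estimating the $m/2$ independent reversal bits, shows via a Bayesian posterior computation that each bit is misestimated with probability at least $\tfrac12\rho^{\tau_k}$ given $\tau_k$ queries to that pair, and then uses a Markov-type counting argument (many pairs receive few queries) to force $\EE[\tau]=\Omega(m\log(m/\delta))$. You instead split $m\log(m/\delta)=m\log m+m\log(1/\delta)$ and handle the terms separately: the $\Omega(m\log(1/\delta))$ term is essentially the $n=2$ case of \Cref{thm:noisy-comparison-lower-nm} (your promoted-item instances and the parity check that $L_{\nu'_k}\ne L_\nu$ are fine, though you could simply cite that theorem), and the $\Omega(m\log m)$ term is imported from the noiseless bound \Cref{thm:noiseless-comparison-lower-mlogm} via a noise-simulation reduction, which is valid precisely because the noise is i.i.d.\ per query, so a noiseless algorithm can perfectly emulate the noisy oracle with its own coins. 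Both the decomposition (since $\log(m/\delta)\le 2\max\{\log m,\log(1/\delta)\}$) and the reliance on the footnote extending the noiseless bounds to any success probability above $1/2$ are legitimate. Your approach is more modular and makes explicit a useful general principle (noiseless lower bounds transfer to the noisy model for free), but it inherits the poset/linear-extension machinery behind \Cref{prop:noiseless-comparison-lower-mlogm-det}; the paper's direct argument is heavier locally but entirely elementary and yields the combined bound, including explicit constants in terms of $\rho$, without invoking the noiseless machinery at all.
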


\begin{proof}
Consider two agents with identical preferences.
To simplify notation, we assume that $m$ is an even number; otherwise, the same proof applies with one particular item always being the least preferred.  

Let the ``default'' preference instance be the one in which $1 \succ 2 \succ \dotsc \succ m$, where $a \succ b$ means that item $a$ is preferred to item $b$.  
Then, consider alternative instances such that each pair of the form $(2k-1,2k)$ may be reversed, for $k\in [m/2]$.  
By taking all possible combinations of pairs that may be reversed, this creates $2^{m/2}$ total instances.

Consider a randomized setting in which the unknown instance is uniformly random over these $2^{m/2}$ instances, which is equivalent to each pair $(2k-1,2k)$ being reversed independently with probability $1/2$.  
For $k\in [m/2]$, let $S_k \sim \mathrm{Bernoulli}(1/2)$ be the indicator random variable of whether the corresponding reversal was done.  
See \Cref{tab:example} for  examples in the case $m=6$.

\begin{table}[!h]
    \centering
    \begin{tabular}{c|c}
        $[S_1,S_2,S_3]$& Instance \\
        \hline
        $[1,0,0]$ & $2\succ1\succ3\succ4\succ5\succ6$  \\
        $[1,1,0]$ & $2\succ1\succ4\succ3\succ5\succ6$ \\
        $[1,1,1]$ & $2\succ1\succ4\succ3\succ6\succ5$ \\
    \end{tabular}
    \caption{Examples of instances with $m=6$ in the proof of \Cref{thm:noisy-comparison-lower-mlogm}
    }
    \label{tab:example}
\end{table}

With the problem now being restricted to one of the $2^{m/2}$ instances, we observe that any algorithm for this problem can exhibit certain behavior without loss of optimality.
\begin{itemize}
    \item The algorithm only queries item pairs of the form $(2k-1,2k)$.  
    This is because the preference order between any other pair is identical under all the instances above (e.g., item $1$ is always preferred to item $3$), so the resulting answer conveys no useful information to the algorithm.
    \item For each pair $(2k-1,2k)$, the algorithm always assigns one of the items to agent~$1$ and the other to agent~$2$.  
    This is because the above construction ensures that the true round-robin allocation exhibits such behavior, so failing to follow it would guarantee being wrong.
\end{itemize}
In light of this, the allocation problem is now identical to the estimation of $S_1,\dotsc,S_{m/2}$: 
For each pair $(2k-1,2k)$, allocating item $2k-1$ to agent~$1$ (resp., agent~$2$) is equivalent to estimating $S_1$ to be $0$ (resp., $1$).  
Henceforth, we will use phrasing such as ``the algorithm's estimate of $S_k$'' with this equivalence in mind.

Let $\Hc_t$ denote the history of queries and answers up to (and including) the $t$-th query, and let $\Hc_t^{(k)}$ denote the subset of that history for which the queries are to the pair $(2k-1,2k)$.  
The following lemma reveals that under the randomized instance we are considering, the posterior distribution of $S_1,\dotsc,S_{m/2}$ (i.e., the conditional distribution given $\Hc_t$) exhibits a useful independence property.

\begin{lemma}
\label{lem:independence}
Under the uniform prior on $(S_1, \dots, S_{m/2})$, conditioned on any collection of query answers $\Hc_t$ up to some index~$t$, we
have that $(S_1, \dots, S_{m/2})$ are conditionally independent, and that the conditional distribution of each $S_k$ is the same as that conditioned on $\Hc_t^{(k)}$ alone.
\end{lemma}

\begin{proof}
    Equivalent statements can be found in existing works studying different problems,\footnote{For example, see Lemma~5 of Cai et al.~\cite{CaiLaSc23}.} but we provide a short proof for completeness.  
    We use an induction argument, with the base case being that before the first query is made, the variables $S_k$ are independent by assumption (and $\Hc_t = \Hc_0$ is empty).  
    For the induction step, suppose that the lemma is true for $t=t'-1$, where $t' \ge 1$.  Then, when $(2k_{t'}-1,2k_{t'})$ is queried for some $k_{t'}$, its answer trivially only depends on $S_{k_{t'}}$, and not on $\{S_k\}_{k \ne k_{t'}}$.  
    Hence, the posterior distribution of $S_k$ is changed given the new answer for $k = k_{t'}$, but stays unchanged for $k \ne k_{t'}$, which implies that the lemma holds for $t = t'$. $\hfill \square$
\end{proof}

Our goal is to lower-bound the expected number of queries for an algorithm that correctly estimates $(S_1,\dotsc,S_{m/2})$ with probability at least $1-\delta$.  
Let $\tau$ denote the (random) total number of queries, and let $\tau_k$ be the number of queries made to the pair $(2k-1,2k)$ (so that $\sum_{k\in[m/2]} \tau_k = \tau$).  
For convenience, we define $\Delta:=\EE[\tau]/m$.

The following straightforward lemma helps us move from average-case quantities (e.g., $\EE[\tau]$) to quantities that are bounded with a constant probability.

\begin{lemma}
\label{lem:tau}
Letting $\Delta=\frac{\EE[\tau]}{m}$, we have $\PP[\tau\le2m\Delta] \ge 1-\frac{\EE[\tau]}{2m\Delta} = \frac{1}{2}$. 
Furthermore:
\begin{itemize}
    \item[(i)] When $\tau\le 2m\Delta$, there exists a collection $\Kc$ of at least $m/4$ item pairs such that $\tau_k\le 8\Delta$ for each $k\in\Kc$;
    \item[(ii)] For the algorithm to attain an unconditional success probability of at least $1-\delta$, it must attain a conditional success probability of at least $1-2\delta$ conditioned on $\tau\le 2m\Delta$.
\end{itemize} 
\end{lemma}
\begin{proof}
    The claim on $\PP[\tau\le2m\Delta]$ is a direct application of Markov's inequality.  
    The subsequent two statements can be seen immediately via a proof by contradiction, where for statement (ii) we use the claim on $\PP[\tau\le2m\Delta]$. $\hfill \square$
\end{proof}

Next, we show that if the algorithm terminates with a certain number $\tau_k$ of queries made to the $k$-th pair, then the posterior uncertainty of $S_k$ satisfies a lower bound depending on $\tau_k$.

\begin{lemma}
\label{lem:error_k}
Let $\Hc_{\tau}$ be an arbitrary query history up to the termination time of the algorithm, and let $\tau_k$ be the number of queries to the pair $(2k-1,2k)$ in that history.  Then, for any estimate $\widehat{S}_k$ formed based on $\Hc_{\tau}$, we have
\begin{equation}
    \PP[ \widehat{S}_k \ne S_k \mid \Hc_{\tau} ] \ge \frac{1}{2} \rho^{\tau_k}, \nonumber
\end{equation}
where $\rho \in (0,1/2)$ is the noise level.
\end{lemma}

\begin{proof}
Let $\ell_k$ denote the number of $1$s observed for queries to $(2k-1,2k)$, meaning that the number of 0s for such queries is $\tau_k - \ell_k$.  
Recall from Lemma \ref{lem:independence} that $S_k$ has the same distribution given $\Hc_{\tau}$ as given $\Hc_{\tau}^{(k)}$.  
Then, since the prior distribution on $S_k$ is ${\rm Bernoulli}(1/2)$, the posterior distribution is as follows:
\begin{align}    
&\PP[S_k=0 \mid \Hc_{\tau}^{(k)}] \nonumber \\
&=\frac{\PP[S_k=0]\cdot\PP[\Hc_{\tau}^{(k)} \mid S_k=0]}{\PP[S_k=0]\cdot\PP[\Hc_{\tau}^{(k)} \mid S_k=0]+\PP[S_k=1]\cdot\PP[\Hc_{\tau}^{(k)} \mid S_k=1]} \label{eq:posterior1} \\
&=\frac{\frac{1}{2}\cdot\rho^{\ell_k}(1-\rho)^{\tau_k-\ell_k}}{\frac{1}{2}\cdot\rho^{\ell_k}(1-\rho)^{\tau_k-\ell_k}+\frac{1}{2}\cdot(1-\rho)^{\ell_k}\rho^{\tau_k-\ell_k}} \nonumber \\
&= \frac{1}{1+(\frac{1-\rho}{\rho})^{\ell_k}(\frac{\rho}{1-\rho})^{\tau_k-\ell_k}} \nonumber \\
&\ge \frac{1}{1+(\frac{1-\rho}{\rho})^{\tau_k}} \label{eq:posterior4} \\
&=\frac{\rho^{\tau_k}}{\rho^{\tau_k}+(1-\rho)^{\tau_k}} \nonumber \\
&\ge\frac{1}{2}\rho^{\tau_k}, \nonumber
\end{align}
where \eqref{eq:posterior1} follows from Bayes' rule and \eqref{eq:posterior4} holds because $\frac{1-\rho}{\rho} > 1$, which implies that the quantity $(\frac{1-\rho}{\rho})^{\ell_k}(\frac{\rho}{1-\rho})^{\tau_k-\ell_k} = (\frac{1-\rho}{\rho})^{2\ell_k}(\frac{\rho}{1-\rho})^{\tau_k}$ is highest when $\ell_k$ equals its highest possible value of $\tau_k$.

By the same reasoning, we also have $\PP[S_k=1\mid\Hc_{\tau}^{(k)}]\ge\frac{1}{2}\rho^{\tau_k}$.  
With the conditional probabilities of $S_k=0$ and $S_k=1$ satisfying the same lower bound $\frac{1}{2}\rho^{\tau_k}$, it follows that no matter what estimate is formed, it is incorrect with probability at least $\frac{1}{2}\rho^{\tau_k}$. $\hfill \square$
\end{proof}

Having established a lower bound on the conditional error probability for a single $k$, we now consider the \emph{joint} behavior of the $m/2$ values of $k$.  
To do so, we condition on $\tau \le 2m\Delta$.
By \Cref{lem:tau}(i), there exists a set $\Kc$ (which depends on $\Hc_{\tau}$) of size at least $m/4$ such that $\tau_k \le 8\Delta$ for each $k\in \Kc$. 
By \Cref{lem:tau}(ii), a conditional error probability exceeding $2\delta$ suffices to establish that the unconditional error probability exceeds $\delta$.
Letting $\Ec$ be the overall error event, and $\Ec_k$ be the event that $S_k$ is estimated incorrectly, we have
\begin{align}
    \PP[\Ec \mid \Hc_{\tau}]
        = \PP\left[ \bigcup_{k=1}^{m/2} \Ec_k \,\middle|\, \Hc_{\tau} \right] 
        &\ge \PP\left[ \bigcup_{k \in \Kc} \Ec_k \,\middle|\, \Hc_{\tau} \right] \nonumber \\
        &= 1 - \prod_{k \in \Kc}\Big(1 - \PP\big[ \Ec_k \mid \Hc_{\tau} \big]\Big) \label{eq:Ec_step3} \\
        &\ge 1 - \Big(1 - \frac{1}{2}\rho^{8\Delta}\Big)^{m/4} \label{eq:Ec_step4} \\
        &\ge 1 - \exp\bigg( - \frac{m}{8} \rho^{8\Delta} \bigg), \label{eq:Ec_step5} 
\end{align}
where \eqref{eq:Ec_step3} follows from the independence stated in Lemma \ref{lem:independence}, \eqref{eq:Ec_step4} follows from the lower bound in Lemma \ref{lem:error_k} along with $\tau_k \le 8\Delta$ and $|\Kc| \ge m/4$, and \eqref{eq:Ec_step5} follows since $1-z \le e^{-z}$ for all $z\in\mathbb{R}$.

Next, we show that when $\Delta$ is below a certain threshold, the right-hand side of \eqref{eq:Ec_step5} is greater than $2\delta$.  Specifically, we have
\begin{align*}
    1 - \exp\bigg( - \frac{m}{8} \rho^{8\Delta} \bigg) > 2\delta
    & \quad \iff \frac{m}{8} \rho^{8\Delta} > \log\frac{1}{1-2\delta} \\
    & \quad \iff 8\Delta \log \rho > \log\frac{8 \log \frac{1}{1-2\delta}}{m} \\
    & \quad \iff \Delta < \frac{ \log\frac{m}{8\log\frac{1}{1-2\delta}} }{8\log\frac{1}{\rho}}.
\end{align*}
Thus, we have shown the implication
\begin{equation}
    \Delta < \frac{ \log\frac{m}{8\log\frac{1}{1-2\delta}} }{8\log\frac{1}{\rho}} \implies \PP[\Ec \mid \Hc_{\tau}] > 2\delta. \label{eq:Delta_cond}
\end{equation}
Since this holds for \emph{any} $\Hc_{\tau}$ under the condition $\tau \le 2m\Delta$,
we deduce that attaining $\PP[\Ec] \le \delta$ requires the inequality on $\Delta$ in \eqref{eq:Delta_cond} to be reversed.  
As $\Delta = \EE[\tau]/m$, this implies that
\begin{equation}
    \EE[\tau] \ge \frac{ m \log\frac{m}{8\log\frac{1}{1-2\delta}} }{8\log\frac{1}{\rho}}. \nonumber
\end{equation}
Since we treat $\rho \in (0,1/2)$ as fixed, the desired $\Omega(m \log(m/\delta))$ lower bound is obtained by simply observing that 
(i)~$\log\frac{1}{1-2\delta} = \Theta(1)$ when $\delta$ is a fixed constant in $(0,1/2)$, and 
(ii) $\log\frac{1}{1-2\delta} = \Theta(\delta)$ when $\delta = o(1)$, which follows from a standard Taylor expansion. $\hfill \square$
\end{proof}

\begin{theorem}
\label{thm:noisy-value-lower-mlogm}
Under the noisy value query model, for two agents, any (possibly randomized) algorithm that outputs the round-robin allocation with probability at least $1-\delta$ makes $\Omega(m\log(m/\delta))$ queries in expectation.
\end{theorem}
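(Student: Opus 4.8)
The plan is to reduce to exactly the same randomized construction as in the proof of \Cref{thm:noisy-comparison-lower-mlogm}, and then to reuse \Cref{lem:independence}, \Cref{lem:tau}, and \Cref{lem:error_k} without modification. As before, consider two agents with identical preferences and assume $m$ is even. Let the ``default'' instance assign utility $m - j + 1$ to item $j$ (so that $1 \succ 2 \succ \dots \succ m$), and obtain the alternative instances by swapping the utilities of items $2k-1$ and $2k$ for $k$ ranging over an arbitrary subset of $[m/2]$, with the presence of each swap indicated by an independent $S_k \sim \mathrm{Bernoulli}(1/2)$. Just as in the comparison case, the true round-robin allocation always splits the pair $\{2k-1,2k\}$ between the two agents, so without loss of generality the algorithm does the same, and outputting the correct allocation is equivalent to correctly estimating $(S_1,\dots,S_{m/2})$.

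The one genuinely new ingredient is the adversary, which must turn each value query to an item of pair $k$ into a noisy bit about $S_k$. Observe that, across all our instances, the two items of pair $k$ carry precisely the two values $\{m-2k+1,\,m-2k+2\}$, with which item is ``high'' and which is ``low'' determined by $S_k$. Define the (oblivious) adversary to act as follows whenever it is given the chance, i.e.\ with probability $\rho$: on a query to item $2k-1$ it returns the current utility of item $2k$, and on a query to item $2k$ it returns the current utility of item $2k-1$. This is well-defined since it depends only on the instance. Under it, a query to item $2k-1$ returns the ``high'' value of pair $k$ with probability $1-\rho$ when $S_k=0$ and with probability $\rho$ when $S_k=1$, and a query to item $2k$ does the same with the roles of $S_k=0$ and $S_k=1$ interchanged. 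Hence, relabelling each observation by the indicator of whether the returned value equals a fixed one of the two values of pair $k$, a query to pair $k$ yields a $\mathrm{Bernoulli}(1-\rho)$ observation when $S_k$ takes one value and a $\mathrm{Bernoulli}(\rho)$ observation when it takes the other; moreover it carries no information about $S_{k'}$ for $k'\ne k$, since the returned value depends only on $u_{2k-1}$ and $u_{2k}$ and hence only on $S_k$.

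Having established this, the information structure is \emph{identical} to that of the comparison-query proof: $\tau_k$ queries to pair $k$ produce $\tau_k$ i.i.d.\ bits whose bias is $\rho$ or $1-\rho$ according to $S_k$. Therefore \Cref{lem:independence} (conditional independence of the $S_k$ given the history, with each posterior depending only on the portion of the history concerning pair $k$), \Cref{lem:error_k} (the $\tfrac12\rho^{\tau_k}$ lower bound on the conditional probability of misestimating $S_k$), and \Cref{lem:tau} all go through verbatim. Repeating the union-bound-and-concentration argument over a collection $\Kc$ of at least $m/4$ pairs with $\tau_k \le 8\Delta$, conditioned on $\tau \le 2m\Delta$ where $\Delta = \EE[\tau]/m$, gives $\PP[\Ec \mid \Hc_\tau] \ge 1 - \exp(-\tfrac{m}{8}\rho^{8\Delta})$, and forcing the right-hand side below $2\delta$ yields $\EE[\tau] \ge \tfrac{m\log(m/(8\log\frac{1}{1-2\delta}))}{8\log(1/\rho)} = \Omega(m\log(m/\delta))$.

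The main obstacle is purely in designing the adversary so that, regardless of which of the two items of a pair is queried, the per-pair observation is genuinely $\mathrm{Bernoulli}$ with biases in $\{\rho,1-\rho\}$ and leaks nothing about the other pairs; once this exact correspondence with the comparison model is in place, no new analysis is required.
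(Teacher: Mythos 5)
Your proposal is correct and matches the paper's proof essentially exactly: the same pair-swapping instance family, the same adversary that returns the other value of the pair whenever it gets the chance, and the same observation that this makes each query to pair $k$ a $\mathrm{Bernoulli}(\rho)$ or $\mathrm{Bernoulli}(1-\rho)$ bit about $S_k$ alone, so the entire comparison-query argument carries over verbatim. No issues.
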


\begin{proof}
The proof of \Cref{thm:noisy-comparison-lower-mlogm} applies to the setting of noisy value queries with minimal change.  
To see this, we assign values $\{m,m-1\}$ to the first pair of items, $\{m-2,m-3\}$ to the second pair of items, and so on, so that there are again $2^{m/2}$ potential instances.
We consider an adversary that always modifies a given value to the other of the two values that comprise a given pair (when given the chance).  
Hence, whenever a query is made to a given item in the pair, it observes the correct value of the item with probability $1-\rho$, and the incorrect value with probability $\rho$.  
This is mathematically equivalent to observing a Bernoulli random variable (with parameter $\rho$ or $1-\rho$) giving a noisy indication of which of the two items has the larger value, so the proof of \Cref{thm:noisy-comparison-lower-mlogm} is still valid. $\hfill \square$
\end{proof}

\section{Conclusion and Future Directions}

In this paper, we have analyzed the round-robin algorithm, one of the most widespread algorithms in the fair division literature, and presented several bounds on its complexity in the potential presence of noise.
Besides tightening the bounds themselves, our work opens up a number of appealing conceptual directions.
First, it would be interesting to explore the complexity of other fair division algorithms that rely only on ordinal rankings of items \cite{AzizGaMa15,BramsKiKl14}; 
such algorithms are cognitively less demanding for agents than algorithms that require either cardinal utilities or ordinal information on sets of items.
In addition, one could consider questions on fulfilling certain fairness notions in the presence of noise.
Another intriguing avenue is to consider other noise models, for example, a comparison query model in which the probability of error depends on how differently the relevant agent ranks the two queried items \cite{DavidsonKhMi14}.

\bibliographystyle{splncs04}
\bibliography{main}

\appendix

\section{Challenges in Obtaining Improved Noisy Upper Bounds}
\label{app:challenges}

While the proofs of our noisy upper bounds (\Cref{sec:noisy-upper}) are simple and may appear to leave room for improvements, we found that attaining such improvements is surprisingly challenging.  
In fact, based on our current understanding, it is conceivable that these simple upper bounds are already optimal.
In this appendix, we outline some difficulties that we encountered when trying to improve our noisy upper bounds via ideas similar to those used in our noiseless algorithms (\Cref{sec:noiseless-upper}).  
We focus on comparison queries, since they have received more attention in the related literature on algorithms with noise (whose results we would like to make use of).  

\subsection*{Worst-Case Preferences}

For worst-case preferences, our noiseless algorithm (\Cref{thm:noiseless-comparison-upper}) uses an $(m,n')$-quantiles subroutine that partitions the items into $m/n'$ disjoint groups such that each group is uniformly better than the next group, with complexity $O(m \log(m/n'))$; see \Cref{lem:partial-sort} (here we allow for $n'$ that may differ from $n$, though we set $n' = n$ in \Cref{lem:partial-sort}).  
Once this is done, it only remains to perform $m$ maximum-finding operations on lists of size up to $n'$.

An immediate difficulty is that we are not aware of any existing result for the quantiles subroutine in the noisy setting.\footnote{We could use (noisy) selection to solve the quantiles subproblem in a similar manner to the noiseless setting, but the resulting bound would be rather loose, and would certainly not lead to an improvement over Theorem \ref{thm:noisy-comparison-upper}.} 
Nevertheless, it is natural to expect a complexity of $O\big(m \log\frac{m/n'}{\delta_0} \big)$ queries for success probability $1-\delta_0$, analogous to the known results on finding the top $k$ items and their order using $O\big(m \log \frac{k}{\delta_0} \big)$ noisy queries (see Section 2.3 in the work of Cohen-Addad et al.~\cite{CohenaddadMaMa20}).  
We proceed with this natural conjecture despite the lack of a formal result.   
If we apply a union bound over $n$ agents, then we should set $\delta_0 = \delta/n$ (or smaller) to obtain an overall success probability of $1-\delta$, yielding $O\big(mn \log\frac{mn}{n' \delta} \big)$ queries in total.  
Thus, we need $n' \gg n$ to avoid being stuck with $O\big(mn \log\frac{m}{\delta} \big)$ scaling.

However, the noisy maximum operation on $n'$ items has complexity $O\big(n' \log \frac{1}{\delta_0} \big)$ for success probability $1-\delta_0$ \cite{FeigeRaPe94}, and with $m$ such operations, we should set $\delta_0 = \delta/m$ (or smaller) if a union bound is used.  
This leads to $O\big(mn' \log\frac{m}{\delta} \big)$ queries in total, which means we now need $n' \ll n$.  
Obviously, we cannot have $n' \ll n$ and $n' \gg n$ simultaneously.

We do not see any promising way to circumvent the use of the union bound, and it is unclear whether a better upper bound for the quantiles subroutine is possible.

\subsection*{Random Preferences}

For random preferences, similar difficulties to those above again arise; we proceed to outline them, but in slightly less detail.    
Our noiseless upper bound (\Cref{thm:noiseless-comparison-upper-random}) is based on the selection subroutine (\Cref{lem:selection}).  
In this subroutine, the goal is to find the top $\ell$ items among $m'$ items (in generic notation), and in the noiseless case the query complexity is $O(m')$.  
One may hope that this complexity becomes $O\big(m' \log \frac{1}{\delta_0}\big)$ for success probability $1-\delta_0$ in the presence of noise.  
However, the query complexity of this problem is known to be $O\big(m' \log\frac{\ell}{\delta_0} \big)$ \cite{FeigeRaPe94}. 
In our noiseless algorithm, we initially choose $\ell$ to be $m/n$, which immediately suggests a $\log\frac{m/n}{\delta_0}$ factor in our final bound.  
Unfortunately, any benefit of the division by $n$ will be lost after a union bound over the $n$ agents.  
Again, we do not see any apparent way to circumvent this, whether it be by setting $\ell$ differently or avoiding the union bound.

\end{document}